\definecolor{red}{rgb}{1,0,0}
\definecolor{blue}{rgb}{0,0,1}
\definecolor{orange}{rgb}{1,0.5,0}
\newtheorem{thm}{Theorem}[section]
 \newtheorem{cor}[thm]{Corollary}
 \newtheorem{lem}[thm]{Lemma}
 \newtheorem{prop}[thm]{Proposition}
 \theoremstyle{definition}
 \newtheorem{defn}[thm]{Definition}
 \theoremstyle{remark}
 \newtheorem{rem}[thm]{Remark}
 \newtheorem*{ex}{Example}
 \numberwithin{equation}{section}
 \newtheorem{conj}[thm]{Conjecture}
\title{Phase space classification of an Ising Cellular Automaton: the Q2R model}
\author{Marco Montalva-Medel$^1$, Sergio Rica$^{1,2}$, and Felipe Urbina$^{3}$}
\address{ ${}^1$Facultad de Ingenier\'ia y Ciencias, Universidad Adolfo Ib\'a\~nez, Avda. Diagonal las Torres 2640, Pe\~nalol\'en, Santiago, Chile.\\
 ${}^2$ UAI Physics Center, Universidad Adolfo Ib\'a\~nez, Santiago, Chile.\\
  ${}^3$ Facultad de Ciencias,
Universidad Mayor,
Camino La Pirámide, 5750, Huechuraba, Santiago, Chile}
\begin{document}

\begin{abstract}
An exact characterization of the different dynamical behavior that exhibit the space phase of a reversible and conservative cellular automaton, the so called Q2R model, is shown in this paper. Q2R is a cellular automaton which is a dynamical variation of the Ising model in statistical physics and whose space of configurations grows  exponentially with the system size. As a consequence of the intrinsic reversibility of the model, the phase space is composed only by configurations that belong to a fixed point or a limit cycle. In this work we classify them in four types accordingly to well differentiated topological characteristics.
Three of them --which we call of type S-I, S-II and S-III-- share a symmetry property, while the fourth, which we call of type AS, does not.
Specifically, we prove that any configuration of Q2R belongs to one of the four previous limit cycles. Moreover, at a combinatorial level, we are able to determine the number of limit cycles for some small periods which are almost always present in the Q2R. Finally, we provide a general overview of the resulting decomposition of the arbitrary size Q2R phase space, in addition, we realize an exhaustive study of a small Ising system ($4\times 4$) which is fully analyzed under this new framework.
\end{abstract}

\maketitle

%\tableofcontents 

\section{Introduction}
A central problem in statistical physics concerns the manifestation of irreversibility whenever the system is governed by a large number of elements, or more precisely the number of degrees of freedom.  Despite the reversible 
character of the equation of motions in mechanics, the nature does not allow to observe a 
reversible behavior of a macroscopical system. After Boltzmann theory and the subsequent  Loschmidt and Zermelo's considerations this central question has 
been in the core of debates in basic physics since the end of the 19th century. In particular, Zermelo argued that the Boltzmann H-theorem is in contradiction with Poincar\'e's recurrence theorem, however,  as Boltzmann replied, the hypothetical recurrence time would be huge in comparison with all practical times in the usual thermodynamics. 

To model the recurrence time paradox, Paul and Tatiana Ehrenfest elaborated a particle exchange model 
\cite{Ehrenfest1,Ehrenfest2}, the so-called  the ``dog-flea'' model. This combinatorial model appears as an illustration of the irreversible exchange of heat between two distinct reservoirs at different temperatures.  Ehrenfest  model consists of $N$ particles that can be distributed in the left 
or right side of a container, in such a way that $N/2 + n$ balls are initially at the left 
container and $N/2-n$ in the right one. As shown by M. Kac \cite{Kac},  the  Ehrenfest  model may be 
mapped into a random walk. Moreover,  if initially the system is filling mostly one  container side $n\approx N/2$, then the average recurrence time is exponentially long, $\sim 2^N$. On the other hand, if initially the system 
is almost equally distributed $n\approx 0$, then, the waiting time scales as a diffusion process $\sqrt N$. Therefore, as one 
increases the total number of degrees of freedom $N$, there are some initial conditions with exponentially long recurrence times. 
The interest of the extremely simplified Ehrenfest model is that captures the essence of an exponentially long recurrence time showing that some initial configuration require exponentially long time to be back to the same state again. 

Generically, irreversibility arises as a consequence of systems that possess a  large number of degrees of freedom. 
Moreover, even in moderate system size, say $N=64$ for the  Ehrenfest's
model, the recurrence time becomes of the order of $2^{64}$, {\it i.e.,} essentially infinity. Therefore, although irreversibility appears to be a consequence of thermodynamic limit, $N\to\infty$, in practice  even 
 in moderate system size a thermodynamic statistical description appears to be the adequate one  \cite{SmallSystems}. %, in other words, in thermodynamics and statistical physics irreversibility is understood in a statistical sense.  

In a recent article \cite{furbini}, two of us, developed a master equation approach to a reversible 
and conservative cellular automaton model: the Q2R model.  Introduced in the 80 by Vichniac \cite{vichniac}, Q2R is a dynamical variation of the Ising model for ferromagnetism that possesses quite a rich and complex dynamics. Remarkably, the evolution of Q2R preserves an Ising-like energy \cite{pomeau84}, appealing the analogy with the continuous dynamics of Hamiltonian systems\footnote{More details in Section \ref{sub-energy}.}. 
%Because the phase space in finite and the system is conservative and reversible, the phase space is composed only by fixed points and by finite (but presumably  long) limit cycles.
Because the Q2R model is a reversible cellular automaton its phase space is finite and there are neither attractive nor repulsive attractors, all attractors must be fixed points or limit cycles.  

Q2R is a two variable automaton, {\it i.e.}, a state is defined through $(x^t,y^t)$ in which each component $x^t$ and $y^t$ belong to a graph which is defined via a lattice and a neighbor (see Section \ref{model}). Although it can be defined in any kind of lattice, we restrict ourselves to the particular case of a square grid with a von-Neuman four nearest neighbors.
The size of the lattice will be $N= L\times L$, thus the phase space is the set of the $2^{2N}$ vertices of a $2N$-dimensional 
hypercube. However, as we show in this paper, the phase space is %partitioned by different sub-spaces of constant energy, $E$, which, at the same time, are 
partitioned in a large number of subspaces composed by periodic orbits
or fixed points. 
A given initial condition %, with energy $E$, 
belongs to one of this limit cycles or is a fixed point. 

It has been reported numerically, that the phase space is composed of a huge number of limit cycles with probable exponentially long periods \cite{hans2}. 
 For small Ising systems,  {\it e.g.}, for a $2\times 2$ square lattice, 
there are $2^8 = 256$ states and the longest orbit is of period 4. In the case, of a $4 \times 4 $, 
the phase space has $2^{32} \approx 4.3\times 10^9$ elements, being  
$T= 1080$ the longest limit cycle. More important, this case  can be scrutinized exactly, and we are able to conjecture that the 
number of states of a given period is exponentially large with the number of sites $N$.

In Ref. \cite{furbini}, following the Nicolis and Nicolis coarse-graining approach  \cite{nicolis}, we have applied it to the time series of the total magnetization, leading to a master equation that governs the macroscopic irreversible dynamics of the Q2R automata. The methodology works out for various lattice sizes. Notably, in the case of small systems, we show that the master equation leads to a tractable probability transfer matrix of moderate size, which provides a master equation for a coarse-grained probability distribution. The success of a consistent thermodynamic description is based on the existence of rich nature of the phase space. Similarly,  Lindgren and  Olbrich \cite{Lindgren} have recently considered the equilibrium properties of the Q2R model but with a different approach.
Furthermore, for a large system size it has been established that the evolution presents an irreversible 
behavior towards an equilibrium ruled by a micro-canonical ensemble \cite{hans,goles}. Moreover, in Ref.  \cite{goles}, it has been shown numerically that for 
a set of random initial conditions with different energies one recovers statistically the Ising phase transition 
ruled by the Onsager and Yang exact solutions \cite{onsager,yang}.

The aim of the present article, is to study and classify the different possible attractors (fixed points and limit cycles) of the phase space of the Q2R cellular automaton in a square lattice of arbitrary size. The starting point is the reversibility property of the Q2R model and essentially all results of the current paper follow after the Lemma \ref{LemRev} (on Reversibility).

Our main results are the following:

\begin{enumerate}
\item A fully classification of all attractors in four types of limit cycles consisting of symmetric and asymmetric ones (Theorem \ref{thm_cycle-clasiff}).
More precisely this characterization is  according to the specific topological features of the cycle. These limit cycles may be  symmetric limit cycle  of type S-I, S-II and S-III (See Sec. \ref{Secc:Cycles3andHigher}) and asymmetric limit cycle (AS).
\item The fixed points are of type S-I, moreover with the aid of splitting the lattice in two sub lattices we are able to show that the total number of fixed points is  $\beta^2 =4 k^2$ , with $k\in\mathbb{N}$ (Theorems \ref{MMM1} and \ref{teo:size-p1-p2}).
%\item The characterization and existence of $\beta^2 =4 k^2$ fixed points, with $k\in\mathbb{N}$ (Theorem \ref{teo:size-p1-p2}).
\item The characterization and existence of $ \beta^2 (\beta^2-1)$ period-two limit cycles. (Theorems \ref{MMM2} and \ref{teo:size-p1-p2}).
\item The characterization and existence of period-three limit cycles (Theorem \ref{teo_cl3} and Proposition \ref{prop:p3always}).
\end{enumerate}

The paper is organized as follows: In Section \ref{model}, we define the Q2R model and its main properties. In Section \ref{main}, we establish the formal definitions scheme, and we state the fundamental Lemma on reversibility (Lemma \ref{LemRev}) which is the key property after it all results in the paper follows. In Section \ref{sec_main}, we prove the main results listed above. Next, in Section \ref{SecOmega2}, we present a general overview of the resulting decomposition of the  phase space of Q2R. In Section \ref{Discussion} we conclude and discuss on further results and conjectures. Finally, in the Appendix \ref{AppEx} we provide an exhaustive study of a small Ising system ($4\times 4$) which is fully analyzed under this new framework with some specific examples of limit cycles.

\section{The model}\label{model}
\subsection{Context and definitions}
The Q2R model, introduced by Vichniac \cite{vichniac}, is defined in a regular two dimensional 
toroidal lattice with even rank $L\times L$, being $N=L^2$ the total number of \textit{nodes}\footnote{We focus our work with periodic boundary conditions on the lattice, but other possibilities may be also considered. In particular, the lattice does not require a square lattice. It could be a rectangular one: $L_1 \times L_2$.}  which have associated an index ${\bm k} \in  \{1,\dots ,N\} $, as well as a relative position in the lattice specified by two indices $k_1  \in\{1,\dots,L\}$  and $k_2 \in\{1,\dots,L\}$ (the respective row and column indices).  Further, a node $\bm k$ is characterized by two possible values $x_{\bm k}=\pm 1$, conforming with the following two-step rule:
$$
 x^{t+1}_{\bm k} = x^{t-1}_{\bm k} \, H \left( \sum_{{\bm i}\in V_{\bm k}} x^{t}_{\bm i} \right),
$$
where $V_{\bm k}$ denotes the von Neuman neighborhood of the four closest neighbors  with periodic boundary conditions. 
The function $H$ is such a that $H(s=0) = -1$ and $H(s) = +1$ in all other cases.

The above two-step rule may be naturally re-written as  a one step rule with the aid of a second dynamical variable \cite{pomeau84}:
\begin{eqnarray}
y^{t+1}_{\bm k} & = & x^{t}_{\bm k}  \nonumber \\
 x^{t+1}_{\bm k}  &= & y^{t}_{\bm k} \,  H\left( \sum_{i\in V_{\bm k} } x^{t}_{\bm i} \right).
 \label{q2r0}
\end{eqnarray}

Thus, the \textit{state} $x$ belongs to the discrete set 
$\Omega = \{ -1,1\} ^N$ (of size $2^N$) and
the set of \textit{configurations}, denoted by 
$\Omega^2$, it is composed by couples of states in 
$\Omega^2=\Omega\times\Omega=\left\{ (x,y)  / x\in \Omega \wedge y  \in \Omega \right\}$ (of size $2^{2N}$).

\begin{defn}
We denote the symbol $\odot $ by the Hadamard product, which is the multiplication component to 
component of the state $x\in \Omega$ and $y\in \Omega$. Hence, $x\odot  y\in \Omega$ represents that each 
component is defined by: $[x\odot y]_{ij} \equiv x_{ij} y_{ij}$.  This product is commutative, associative, 
and it possesses a neutral element, that we denote by $\mathbb{1}$ and corresponds to 
the state of $\Omega$ composed only by 1s. Moreover, we also define  $-\mathbb{1}\in\Omega$ by the states composed only by -1s. Given $x\in\Omega$, 
we will write $-x$ to refer to $-x=[-\mathbb{1}]\odot x$.
\end{defn}

\begin{defn}
Let be the function $\phi : \Omega \to \Omega$ such that, if $x\in \Omega$ then, the $ \bm k$-th 
component of $ [\phi(x) ]_{\bm k} = -1$ if the sum of all von Neuman neighbors of the $ \bm k$-th components is null, 
namely $\sum_{{\bm i} \in V_{\bm k} } x_{\bm i} =0$.
Notice that the neighborhood,  $V_{\bm k}$, includes  the periodic boundary condition of the lattice. Otherwise,  $[\phi(x) ]_{\bm k} =  +1$. Therefore, the 
function $\phi(x)$ is a state in $\Omega$ that has { a} -1 in the sites that $x$ has a null neighborhood. 
\end{defn}

\begin{ex}
Consider the state $x\in\Omega$ below. The node
(3,2) has null neighborhood 
 (its neighbors are marked by boxes), while the neighborhood of the node located at (1,4) 
is not null (its neighbors are marked by double boxes accordingly, to the toroidal lattice). So, the state $\phi (x)$ 
will have a -1 value at position (3,2) and a 1 value at position (1,4) that are also marked,   with a box 
and a double box, respectively, in $\phi (x)$. In a similar way,  all other values of $\phi(x)$ 
are obtained.
\begin{equation}\label{Eq:XandPhi}
x= \left[ \begin{array}{cccc} 
 \doublebox{1} & 1 & \doublebox{1} & 1 \\
1 & \fbox{-1} & 1 & \doublebox{1} \\
\fbox{1} & 1 &\fbox{-1} & 1 \\
1 & \fbox{1} & 1 & \doublebox{1} \\
\end{array} \right] \mapsto\phi (x) = \left[ \begin{array}{cccc} 
 1 & 1 & 1 & \doublebox{1} \\
1 & 1 & -1 & 1 \\
1 & \fbox{-1} & 1 & 1 \\
1 & 1 & 1 & 1 \\
\end{array} \right].
\end{equation}
\end{ex}

\begin{defn}\label{rem:phi1-1}
 The state $x$ does not have any null-neighborhood iff $\phi(x)=\mathbb{1}$.
Notice that $\phi(\mathbb{1})=\phi(-\mathbb{1})=\mathbb{1}$. 
\end{defn}

\subsection{The Q2R rule.}
Given $(x^t,y^t)\in \Omega^2$ at time $t$, and according with the previous definitions we re-write the Q2R model (\ref{q2r0}) as the 
following two step deterministic rule:

\begin{eqnarray}\label{q2r}
 y^{t+1}  &= & x^{t} \nonumber \\ 
 x^{t+1} &=& y^{t} \odot   \phi\left(x^{t}\right)  .
\end{eqnarray}
%\begin{eqnarray}\label{q2r}
%\left( y^{t+1}, x^{t+1}  \right) &= & \left(x^{t} , y^{t} \odot   \phi\left(x^{t}\right) \right).\end{eqnarray}

The \textit{evolution} is dictated by the rule (\ref{q2r}) and is complemented with an initial configuration 
$(x^{0},y^0) \in\Omega^2$.
For instance, let us consider $x^0=y^0=x$, the example given in (\ref{Eq:XandPhi}). The evolution of the initial configuration 
$(x^0,y^0)$ %\sout{, with $y^0$ given below,} 
are obtained as follows:
\begin{center}
\scriptsize{
\begin{tabular}{ccc}
 $\left(\underbrace{\left[ \begin{array}{cccc} 
 1 & 1 & 1 & 1 \\
1 & -1 & 1 & 1 \\
1 & 1 & -1 & 1 \\
1 & 1 & 1 & 1 
\end{array} \right]}_{x^0}\right.$ &,&
$\left.\underbrace{\left[ \begin{array}{cccc} 
 1 & 1 & 1 & 1 \\
1 & -1 & 1 & 1 \\
1 & 1 & -1 & 1 \\
1 & 1 & 1 & 1
\end{array} \right]}_{y^0}\right)$\\  
&\xymatrix{
        \ar[dr] & \ar[dl]\\
         &}&\\
 $\left(\underbrace{\left[ \begin{array}{cccc} 
 1 & 1 & 1 & 1 \\
1 & -1 & -1 & 1 \\
1 & -1 & -1 & 1 \\
1 & 1 & 1 & 1
\end{array} \right]}_{x^1=y^0 \odot\phi(x^0)}\right.$&,&  $\left.\underbrace{\left[ \begin{array}{cccc} 
 1 & 1 & 1 & 1 \\
1 & -1 & 1 & 1 \\
1 & 1 & -1 & 1 \\
1 & 1 & 1 & 1
\end{array} \right]}_{y^1=x^0}\right)$\\
&\xymatrix{
        \ar[dr] & \ar[dl]\\
         &}&\\
\vdots & & \vdots         
\end{tabular}}\label{example2}
\end{center}
that we schematize with the following abbreviated notation: 
$$(x^0,y^0)\to (x^1,y^1)\to\cdots \, .$$
\begin{rem}\label{rem_iter}
In general, we will write $(x^t,y^t)\to (x^{t+1},y^{t+1}) $ for the one-step evolution from $(x^t,y^t)$ 
to $(x^{t+1},y^{t+1})=  \left(y^{t} \odot   \phi\left(x^{t}\right) ,x^{t} \right)$, according to rule (\ref{q2r}).
\end{rem}

\begin{defn}
The \textit{Phase Space} of the Q2R model it is composed by the set of configurations $\Omega^2$ and 
its one-step evolutions. Because is finite, the phase space has two types of {attractors}: \textit{limit cycles} or \textit{fixed points}. A limit cycle $\mathcal{C}$ of \textit{period} $T\in\mathbb{N}$ is a sequence dictated by the
evolution $(x^0,y^0)\to(x^1,y^1)\to\cdots\to (x^{T-1},y^{T-1})\to (x^{T},y^{T})$ such that all configurations $(x^t,y^t)$ are different, 
except $(x^0,y^0)=(x^{T},y^{T})$. We will write $(x,y)\in\mathcal{C}$ if $(x,y)$ is a configuration that is in $\mathcal{C}$ and, more general, the notation $[(x^t,y^t)\to(x^{t+1},y^{t+1})\to\cdots\to (x^{t+\tau},y^{t+\tau})]\in\mathcal{C}$ will be used to refer to the subsequence of $\mathcal{C}$ that goes from $(x^t,y^t)$ to $ (x^{t+\tau},y^{t+\tau})$, $\tau \leq T$. A fixed point is a limit cycle of period $T=1$, {\it i.e.,} is a configuration $(x,y)\in\Omega^2$ 
such that $(x,y)\to (x,y)$.
\end{defn}

\subsection{Main properties}
\subsubsection{Reversibility}

 Observe that $ \phi\left( x^{t}\right) \odot   \phi\left( x^{t}\right)=\mathbb{1}, \, \forall x^t \in \Omega$. 
Then Q2R rule may be inverted getting the backward evolution of the system but for the couple 
$(y^t , x^t )$, that reads:

\begin{eqnarray}\label{q2r.reversible}
\left \{ \begin{array}{l} x^{t-1}  =  y^{t} \\ y^{t-1}  =  x^{t} \odot   \phi\left( y^{t}\right), \end{array} \right. 
\end{eqnarray}
which is exactly the same rule (\ref{q2r}), displaying the remarkable property of reversibility. This property will be highlighted in the ``Reversibility Lemma'' (Lemma \ref{LemRev}).

\subsubsection{Configurations of the phase space}
 Since the Q2R rule is reversible and the phase space is finite, each configuration has two possibilities; to be a fixed point or to belong in a limit cycle. % {\blue The opposite is also true EXPANDIR SI NECESARIO.}
%\begin{rem}\label{RemRev0}
%The limit cycle that contains the sequence $ \left( x, y \right) \to   \left( z, x\right)$ may 
%not be necessarily in the same limit cycle than $\left( x, z\right)  \to   \left( y, x\right)$. 
%\ovalbox{Poner ejemplo}
%\end{rem}
%
%\begin{rem}\label{RemRev}
%limit cycles containing a state of the form $\left( x, x \right) $ are physically symmetric, that 
%is, if the sequence $ \left( x, y \right) \to   \left(z, x\right)$ is in the limit cycle, then, 
%the sequence $\left(x, z\right)  \to   \left( y, x\right)$ should also be  in the limit cycle.
%\end{rem}

\subsubsection{Dynamical Invariants: The energy.}\label{sub-energy}
\begin{defn}
Let be the energy function
\begin{eqnarray}
E[\left(x^t,y^t\right)]  = -\frac{1}{2} \sum_{\left<{\bm i}, {\bm k}\right>}   x^{t}_{\bm i}   y^{t}_{\bm k} , %s(x^t)_{\bm k}= -\frac{1}{2} \sum_{\bm i}s(y^t)_{\bm k} ,
\label{energy}
\end{eqnarray}
where the symbol $\left<\cdots\right>$ stands for sum over the four near neighbors. The energy (\ref{energy}) is bounded by $ -2N \leq E\leq 2 N$.
 \end{defn}

\begin{rem}\label{rem_energy}
As shown by Pomeau \cite{pomeau84}, the energy function (\ref{energy}) is conserved, under the dynamics defined by the Q2R rule (\ref{q2r}). That is $E[\left(x^t,y^t\right)] = E[\left(x^{0},y^{0}\right)] , \, \forall t \in \mathbb{N} $.
\end{rem}

\begin{rem}\label{rem_energy2}
Other dynamical invariants are known in the literature \cite{takesue2}.
\end{rem}

\subsubsection{The phase space and the qualitative dynamics}

The phase space of all configurations is defined through all possible values of the $2N$ 
dimensional state $(x,y) \in \Omega^2$. The resulting phase space is composed by the $2^{2N}$ vertices 
of a hypercube in dimension $2N$. This phase space  is partitioned in 
different sub-spaces accordingly to its energy, $E$, and accordingly with its dynamical characteristic such that the period, and other unknown parameters. 

For instance, the constant energy subspace shares in principle many limit cycles of different periods, as well as, many different fixed points. An arbitrary initial condition of energy $E$,  falls into one of these limit cycles, and it runs until a time  $T$, which could be exponentially long, and displaying a complex behavior (not chaotic \textit{stricto-sensu}, see for instance \cite{grassberger}). More important, the probability that an initial condition belongs to an exponentially long period limit cycle and it exhibits  
a complex behavior is finite \cite{hans2}. 
Moreover, Q2R  manifests sensitivity to initial conditions, that is, if one starts with two distinct, but 
close, initial conditions, then, they will evolve into very different limit cycles as time evolves \cite{goles}.  
In some sense, any initial state explores  vastly the phase space justifying the grounds of statistical 
physics, as we shown in the Remark \ref{rem:pxx<pxy} and the main Theorem \ref{thm_cycle-clasiff} on the limit cycle general classification.

\section{Preliminary Results}\label{main}
The core of this work aims to propose a new framework to study the dynamic of the Q2R model. It is 
based in particular properties of its configurations that allow to partition $\Omega^2$ in order 
to characterize the full spectrum of fixed points and limit cycles, as well as 
delving into topological and combinatorial aspects.
We begin by establishing the basic concepts that will be used along the text and the first results, 
necessary to understand the main results (Section \ref{sec_main}).
%\subsection{The set of configurations $\Omega^2$.}
\subsection{Partitions of $\Omega^2$.}
%As first sight, the set of configurations is partitioned by the energy and by the invariant $J$ 
%(\ref{StaggeredInv}), however, 
% We will partition $\Omega^2$ according to the type of configurations that it has. The 
% above, in order to classify the full spectrum of fixed points and limit cycles of the whole system. 

Observe that, given the states $x,y\in\Omega$, there are two possibilities: $[x=y]$ or 
$[x\neq y]$.  Therefore a first partition of  $\Omega^2$ arises as follows:

\begin{defn}\label{defOmegaxx}
We denote by $\Omega^2_{xx}$ the \textit{set of configurations with equal states}, {\it i.e.,}
$$\Omega^2_{xx} = \left\{ (x,y)   \in \Omega^2 / x=y \right\},$$
whose size is $|\Omega^2_{xx}|=2^{N}$. Similarly, the \textit{set of configurations with different states} 
will be denoted by $\Omega^2_{xy}$ and corresponds to the complement set of $\Omega^2_{xx}$ in $\Omega^2$, 
{\it i.e.,}
$$\Omega^2_{xy} = \Omega^2 - \Omega^2_{xx}=\left\{ (x,y)   \in \Omega^2 / x\neq y \right\}$$
whose size is $|\Omega^2_{xy} |=2^N(2^N-1)$. Note that: $\Omega^2=\Omega^2_{xx}\cup\Omega^2_{xy}.$
\end{defn}

\begin{rem}\label{rem:pxx<pxy}
We underline that $|\Omega^2_{xx}|<|\Omega^2_{xy}|$. Further, the probability to have a configuration in $\Omega^2_{xx}$ is $p_{xx}= |\Omega^2_{xx}|/|\Omega^2|=  2^{-N}$, while the probability to have a configuration in $\Omega^2_{xy}$, is $p_{xy}= |\Omega^2_{xy}|/|\Omega^2|= 1-2^{-N}$. Hence, in practice, $p_{xx}\ll p_{xy}$. Moreover $p_{xy}\to 1$ in the thermodynamic limit ($N\to\infty$). 
\end{rem}

A second partition of $\Omega^2$ will allow us to know in detail the topology of the limit cycles: here, the sets $\Omega^2_{xx}$ and $\Omega^2_{xy}$ are also partitioned regarding the two possibilities of 
$\phi(x)$ in a configuration $(x,y)$, that is $[\phi(x)=\mathbb{1}]$ or $[\phi(x)\neq\mathbb{1}]$, regardless the value of 
$\phi(y)$, as in the following definition:

\begin{defn}\label{abcd}
Let be the following sets:
\begin{eqnarray}
A  &=& \left\{ (x,y)\in\Omega_{xx}^2  / \phi(x)=\mathbb{1} \right\}\nonumber\\
B  &=& \left\{ (x,y)\in\Omega_{xy}^2  / \phi(x)=\mathbb{1} \right\}\nonumber\\
C  &=& \left\{ (x,y)\in\Omega_{xx}^2  /  \phi(x)\neq\mathbb{1} \right\}\nonumber\\
D  &=& \left\{ (x,y)\in\Omega_{xy}^2  / \phi(x)\neq\mathbb{1} \right\}.\nonumber
\end{eqnarray}
We say that $(x,y)\in\Omega^2$ is a \textit{configuration of type} $A$, $B$, $C$ or $D$, if $(x,y)$ belongs to one of the sets $A$, 
$B$, $C$ or $D$, respectively. Later on, we refer to a \textit{evolution} of type $U\to V$ to the one step evolution of a configuration $(x,y)\in U$ up to $(w,x)\in V$
with $U,V\in\{A,B,C,D\}$. In such a case, we will say that $U$ \textit{evolves} to $V$, or $V$ 
\textit{comes from} $U$ (see Figure \ref{fig_transitions} and Corollary \ref{cor_c3props} 
as a given examples of this terminology).
\end{defn}

\begin{defn}
We denote by $P_T\subset \Omega^2$ the \textit{set of configurations belonging to a limit cycle of period $T\in\mathbb{N}$}. 
In particular, $P_1$ correspond to the set of fixed points of Q2R.
 Moreover, we will denote by $\nu(T)$ the size of the set $P_T$ and by $n(T)$ the number of limit cycles of period $T$. That is: $\nu(T)\equiv |P_T|$ and $n(T)\equiv\frac{\nu(T)}{T}$. 
 %{\blue AQUI PUSE $\equiv$ en lugar d eun =}
\end{defn}

\begin{rem}
 From definition \ref{abcd}, $\Omega^2_{xx} =A\cup C$ and $\Omega^2_{xy} = B\cup D$. 
Further, we show in Remarks \ref{rem_p1} and \ref{rem_p2}  that $P_1 = A\subset \Omega^2_{xx}$ and $P_2 \subset B\subset\Omega^2_{xy}$, respectively.
\end{rem}

\begin{defn}
We say that, the  \textit{symmetric}  configuration of $(x,y)\in\Omega^2$ is the configuration $(y,x)\in\Omega^2$. In particular, 
the symmetric configuration of $(x,x)\in\Omega^2$ is itself, $(x,x)$, and we will call it as a \textit{self-symmetric} configuration. 
We say that a limit cycle $\mathcal{C}$ is \textit{symmetric} if satisfy:
$$ (x,y)\in\mathcal{C}\Rightarrow (y,x)\in \mathcal{C}.$$
Otherwise, we say that $\mathcal{C}$ is \textit{non-symmetric}.
\end{defn}

Naturally, the above definition allow us to separate all the attractors of Q2R into symmetric and 
non-symmetric, however, from our (main) Theorem \ref{thm_cycle-clasiff}, it will be shown that the 
symmetric ones are of 3 types, while that the non-symmetric ones possess the following (stronger) property defined below (see Figure \ref{CycleDraft}).

\begin{defn}\label{NonSymmCycle}
A non-symmetric limit cycle $\mathcal{C}$ is said to be \textit{asymmetric} if satisfy: 
$$ (x,y)\in\mathcal{C}\Rightarrow (y,x)\notin \mathcal{C}.$$
\end{defn}

 \begin{figure}[h]
\begin{center}
 \includegraphics[width = 4cm]{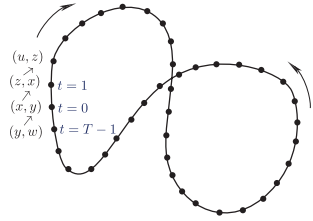} 
\end{center}
\caption{ \label{CycleDraft}  Scheme of an asymmetric limit cycle $\mathcal{C}$; if a configuration $(x,y)$ belongs to $\mathcal{C}$, then $(y,x)$ does not.}
\end{figure}

Next, we continue with a key property of the Q2R model that will allow to have an easy understanding 
of the attractor classification shown in this paper.

\subsection{Fundamental property between configurations and its symmetrical}
The following Reversibility Lemma shows a main characteristic of the Q2R
system (\ref{q2r}).

\begin{lem}[Reversibility]\label{LemRev} 
Let $x$, $y$, $z$ in $\Omega$, then,
$$ [\left( x, y \right) \to   \left(z, x\right)] \Leftrightarrow
[\left(x, z\right)  \to   \left( y, x\right)].$$
\end{lem}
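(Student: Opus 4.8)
The plan is to unpack both one-step evolutions using the explicit Q2R rule (\ref{q2r}) and its reversed form (\ref{q2r.reversible}), and then reduce both sides of the stated equivalence to one and the same algebraic identity in $\Omega$. Recall from Remark \ref{rem_iter} that $(x,y)\to(z,x)$ means precisely $z = y\odot\phi(x)$ (the second component automatically becomes the old first component $x$, which matches what is written). Likewise, $(x,z)\to(y,x)$ means precisely $y = z\odot\phi(x)$. So the claimed equivalence is nothing but
$$ z = y\odot\phi(x) \quad\Longleftrightarrow\quad y = z\odot\phi(x).$$

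First I would establish the forward direction: assume $z = y\odot\phi(x)$. Using commutativity and associativity of $\odot$ together with the key involutivity fact $\phi(x)\odot\phi(x)=\mathbb{1}$ noted just before (\ref{q2r.reversible}), I multiply both sides on the right by $\phi(x)$ to get $z\odot\phi(x) = y\odot\phi(x)\odot\phi(x) = y\odot\mathbb{1} = y$, which is exactly the right-hand condition. The converse direction is symmetric: from $y = z\odot\phi(x)$, multiplying by $\phi(x)$ yields $z = y\odot\phi(x)$. Hence the equivalence holds.

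This argument is essentially a one-line computation, so there is no real obstacle; the only thing to be careful about is the bookkeeping of which variable plays the role of $x^t$ in the rule — i.e.\ confirming that in both configurations $(x,y)$ and $(x,z)$ the state whose neighborhood feeds $\phi$ is the same state $x$, so that the same factor $\phi(x)$ appears on both sides. That is visible directly from the form of the statement (the first slot is $x$ in both source configurations, and the image configurations both have $x$ in the second slot, consistent with $y^{t+1}=x^t$). Once this is observed, the proof reduces to the cancellation identity $\phi(x)\odot\phi(x)=\mathbb{1}$ and the fact that $\odot$ is an abelian group operation on each coordinate (with $\mathbb{1}$ neutral and every element its own inverse).
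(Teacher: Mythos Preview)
Your proposal is correct and follows essentially the same approach as the paper: both arguments unpack the one-step rule to the identities $z=y\odot\phi(x)$ and $y=z\odot\phi(x)$ and pass between them using $\phi(x)\odot\phi(x)=\mathbb{1}$. The paper presents this as a single chain of equivalences rather than treating the two implications separately, but the content is the same.
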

\begin{proof}
From (\ref{q2r}) and because of $ \phi\left( x\right) \odot   \phi\left( x\right)=\mathbb{1}$:
  \begin{eqnarray}
  [\left( x, y \right) \to   \left(z, x\right)] & \Leftrightarrow & \left\{ \begin{array}{l} x  =  x \\ z  =  y \odot   \phi\left( x\right) \end{array} \right.\nonumber\\
  & \Leftrightarrow & \left\{ \begin{array}{rl} x  &=  x \\ z \odot   \phi\left( x\right) &=  y \end{array} \right.\nonumber\\
  & \Leftrightarrow & [\left(x, z\right)  \to   \left( y, x\right)]\nonumber
  \end{eqnarray}
\end{proof}

This Reversibility Lemma says that, if there is a one time step evolution between two configurations, then, 
there is also a one time step evolution between their symmetric configuration, but, in the opposite sense.  As a consequence, we have the following generalization:

\begin{cor}\label{cor_sec-sim-trans}
Let $x^t$, $y^t$ in $\Omega$, $t\in\{0,...,p\}$, $p\in\mathbb{N}$, then,
$$ \left( x^0, y^0 \right) \to\cdots\to (x^{p},y^{p}) \Leftrightarrow
(y^{p},x^{p})\to\cdots\to \left( y^0, x^0 \right).$$
\end{cor}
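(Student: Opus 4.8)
The plan is to prove Corollary \ref{cor_sec-sim-trans} by induction on $p$, using the Reversibility Lemma \ref{LemRev} as the base case (in an appropriately generalized form) and the inductive glue. The statement is an iterated version of the Lemma, so the main work is bookkeeping: keeping track of the fact that a one-step evolution always has the form $(u,v)\to(w,u)$, i.e. the second slot of the image is the first slot of the pre-image, which is exactly the structural feature that Lemma \ref{LemRev} exploits.

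First I would record the base case $p=1$: the claim $(x^0,y^0)\to(x^1,y^1)\Leftrightarrow(y^1,x^1)\to(y^0,x^0)$ is not literally Lemma \ref{LemRev} as written, but follows from it once we observe that any one-step evolution forces $y^1=x^0$ (by Remark \ref{rem_iter}). So writing $x^1=z$, $y^1=x^0=x$, $y^0=y$, the left side is $(x,y)\to(z,x)$ and the right side is $(x,z)\to(y,x)$, which is precisely the equivalence in Lemma \ref{LemRev}. Then, for the inductive step, suppose the equivalence holds for chains of length $p-1$. Given $(x^0,y^0)\to\cdots\to(x^p,y^p)$, split it as $(x^0,y^0)\to\cdots\to(x^{p-1},y^{p-1})$ together with the single step $(x^{p-1},y^{p-1})\to(x^p,y^p)$. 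The induction hypothesis converts the first part into $(y^{p-1},x^{p-1})\to\cdots\to(y^0,x^0)$, and the base case converts the last step into $(x^{p-1},y^{p-1})\to(x^p,y^p)$ being equivalent to $(y^p,x^p)\to(y^{p-1},x^{p-1})$. Concatenating these (noting that $(y^{p-1},x^{p-1})$ is the shared endpoint) yields $(y^p,x^p)\to(y^{p-1},x^{p-1})\to\cdots\to(y^0,x^0)$, which is the desired chain. The reverse implication is symmetric — indeed one can simply apply the forward direction to the reversed chain, since the transformation $(x^t,y^t)\mapsto(y^{p-t},x^{p-t})$ is an involution on chains.

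The one genuinely delicate point — and the step I expect to need the most care — is the compatibility of slots when concatenating: after applying the base case to the last step we get a chain ending at $(y^{p-1},x^{p-1})$, and after applying the induction hypothesis to the initial segment we get a chain starting at $(y^{p-1},x^{p-1})$; these glue only because both invocations agree on that configuration, which in turn relies on the rigid form of Q2R evolution (second coordinate of image $=$ first coordinate of source). It is worth stating this explicitly rather than leaving it implicit, because it is the reason the naive "apply the Lemma term by term" argument actually closes up into a single valid orbit rather than a collection of disconnected steps. Once that observation is in place, the induction is routine and no computation beyond what already appears in the proof of Lemma \ref{LemRev} is required.

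Finally, I would remark that this corollary is exactly what is needed downstream: applied to a full period, $p=T$ with $(x^0,y^0)=(x^T,y^T)$, it shows that the "mirror" sequence of symmetric configurations traversed in reverse is itself a valid orbit of the same period, which is the structural input for deciding when a limit cycle is symmetric (contains $(y,x)$ whenever it contains $(x,y)$) versus asymmetric, and hence underlies the classification Theorem \ref{thm_cycle-clasiff}.
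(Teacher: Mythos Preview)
Your proposal is correct and takes essentially the same approach as the paper: the paper's proof consists solely of a figure with the caption ``Applying successively the Lemma \ref{LemRev} at each step-evolution $(x^t,y^t)\to(x^{t+1},y^{t+1})$ \ldots\ one constructs the backward evolution between their symmetric configurations,'' which is exactly your induction written pictorially. Your version is more careful about the slot-compatibility needed to glue the steps (via $y^{t+1}=x^t$), but this is just making explicit what the paper leaves to the reader.
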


Figure \ref{fig_sec-sim-trans} illustrates the proof of this property.

\begin{figure}[h]
\begin{center}
$$
\begin{array}{cccc}
  \underbrace{  (x^0,y^0)\to \left( x^1, y^1 \right) } &  \to & \cdots  & \underbrace{ \to (x^{p},y^{p})  } \\
  {\Updownarrow}  & & \cdots & \Updownarrow \\
  \overbrace{  (y^0,x^0)\gets \left( y^1, x^1 \right) } &   \gets &\cdots & \overbrace{ \gets (y^{p},x^{p})  } $$
\end{array}
$$
\end{center}
\caption{Applying successively the Lemma \ref{LemRev}
at each step-evolution $(x^t,y^t)\to (x^{t+1},y^{t+1})$,  for $t\in\{0,\dots,{p}\}$, $p\in\mathbb{N}$, one constructs the backward evolution between their symmetric configurations.}\label{fig_sec-sim-trans}
\end{figure}

Let us study the possible evolutions, according to the type of configurations involved.

\subsection{The possible evolutions between configurations of type $A$, $B$, $C$ and $D$}

Given a configuration of type $U\in\{A,B,C,D\}$, then the only possible evolutions are:
\begin{itemize}
\item[(T1)] Configurations of type $A$.\\ 
Let $(x,y)\in A$, then {$[x=y] \wedge [\phi(x)=\mathbb{1}]$}. Since $\phi(x)=\mathbb{1} $, then $(x,y)\to (y,x)$, and because $y=x$, then $\phi(y)=\mathbb{1}$. Therefore, 
$(x,x)\to (x,x)\in A$. In fact, this is the characterization of the fixed points of Q2R (Theorem \ref{MMM1}). 
Thus, $A\to A$ as shown in Figure \ref{fig_transitions}-(T1). 
\item[(T2)] Configurations of type $B$.\\ 
Let $(x,y)\in B$, then $[x\neq y] \wedge [\phi(x)=\mathbb{1}]$. Because $\phi(x)=\mathbb{1} $, then $(x,y)\to (y,x)$. Hence, there are two possibilities for $\phi(y)$:
\begin{itemize}
\item[(i)] If $\phi(y)=\mathbb{1}$, then $(x,y)\to (y,x)\in B$. In fact, this is the characterization of the limit cycles of period two of  the Q2R  model (Theorem \ref{MMM2}). 
\item[(ii)] If $\phi(y)\neq\mathbb{1}$, then $(x,y)\to (y,x)\in D$. 
\end{itemize}
Thus, $[B\to B]$ or $[B\to D]$, as shown in Figure \ref{fig_transitions}-(T2).
\item[(T3)] Configurations of type $C$.\\ 
Let $(x,y)\in C$, then $[x=y] \wedge [\phi(x)\neq\mathbb{1}]$. Hence,   $(x,y)\to(z=y\odot \phi(x),x)$ with $z\neq y$ (consequently, $z\neq x$) and there are two possibilities for $\phi(z)$:
\begin{itemize}
\item[(i)] If $\phi(z)=\mathbb{1}$, then $(x,y)\to (z,x)\in B$. 
\item[(ii)] If $\phi(z)\neq\mathbb{1}$, then $(x,y)\to (z,x)\in D$.
\end{itemize}
Thus, $[C\to B]$ or $[C\to D]$, as shown in Figure \ref{fig_transitions}-(T3).
\item[(T4)] Configurations of type $D$.\\ 
Let $(x,y)\in D$, then $[x\neq y] \wedge [\phi(x)\neq\mathbb{1}]$. Hence, $(x,y)\to(z=y\odot \phi(x),x)$ with $z\neq y$ (but eventually, $z=x$) and there are two possibilities for $z$, which implies three possible evolutions for $(x,y)$:
\begin{itemize}
\item[(i)] If $z=x$, then $(x,y)\to (z,x)=(x,x)\in C$. 
\item[(ii)] If $z\neq x$, then we have the same two possibilities (T3)-(i) and (T3)-(ii) for $\phi(z)$.
\end{itemize}
Therefore, $[D\to B]$, $[D\to C]$ or $[D\to D]$, as shown in Figure \ref{fig_transitions}-(T4).
\end{itemize}
The above analysis is summarized in Figure \ref{fig_transitions}.

\begin{figure}[h]
\begin{center}
\begin{tabular}{cccc}
(T1) \includegraphics[width = 1. cm]{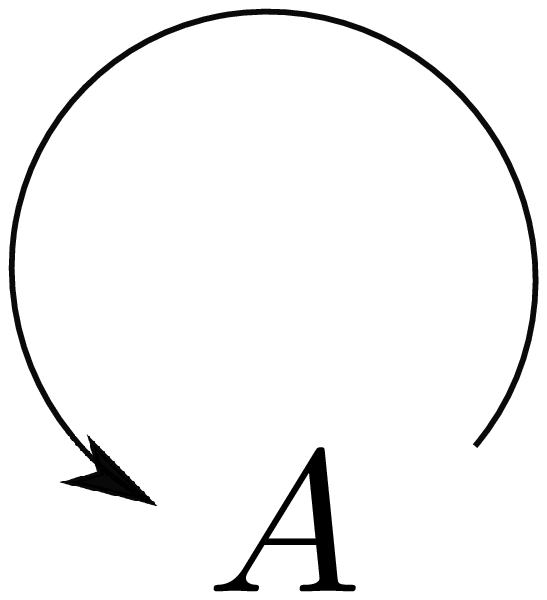} & (T2) \includegraphics[width = 1.5 cm]{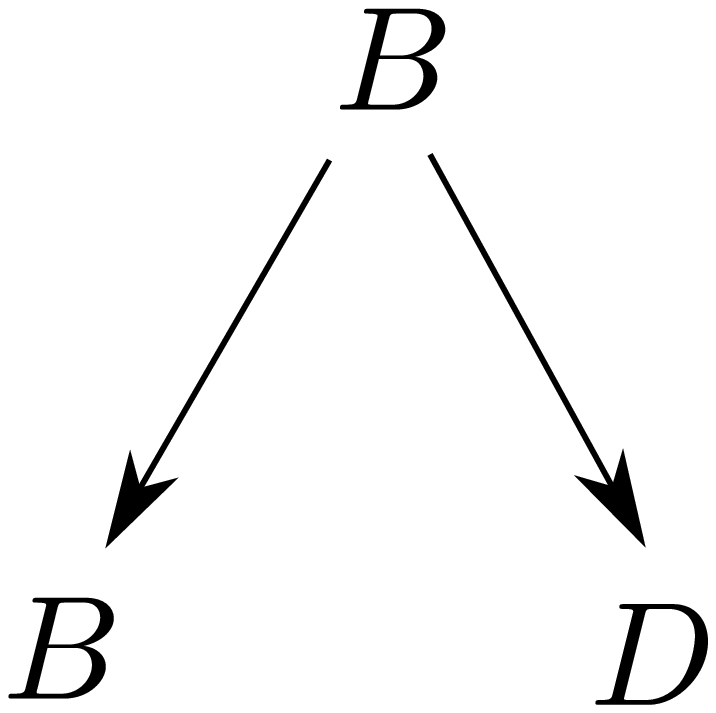} 
& (T3) \includegraphics[width = 1.5 cm]{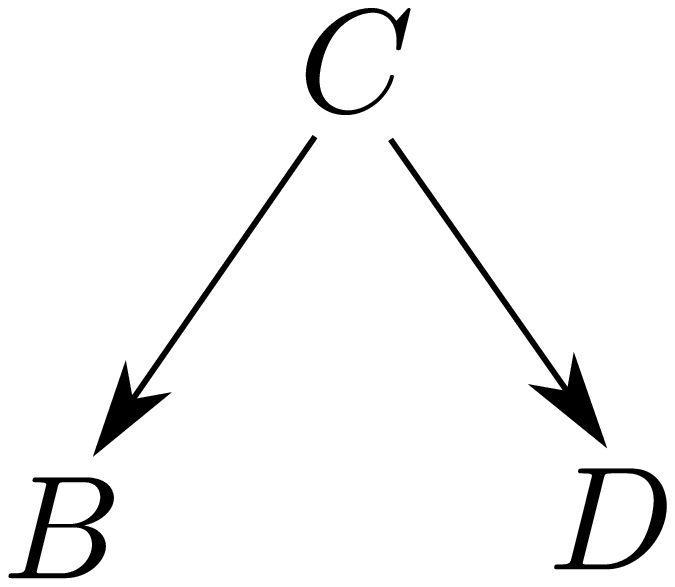} 
& (T4) \includegraphics[width = 1.5 cm]{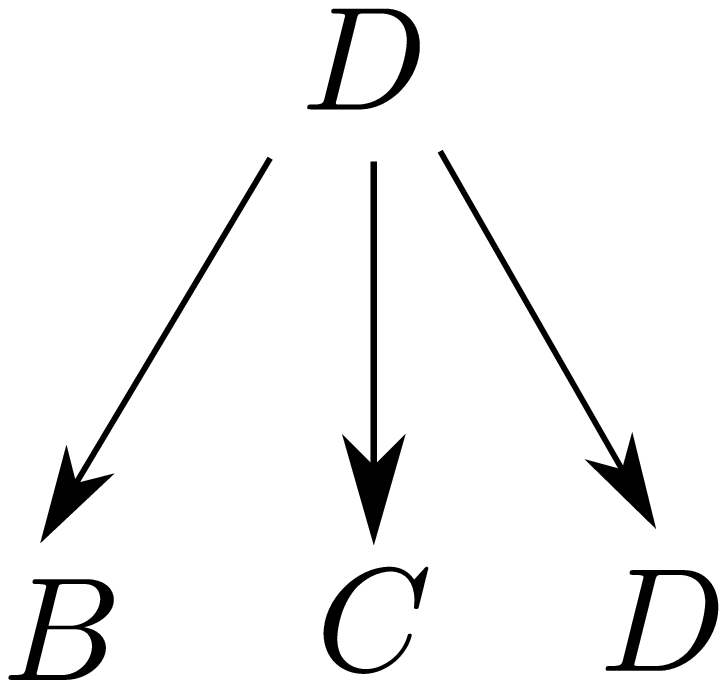}
\end{tabular}
\end{center}
\caption{(T1) A type $A$ configuration only can evolve to itself. (T2)-(T3) Configurations 
of type $B$ and $C$ can evolve to configurations of type $B$ or $D$. (T4) A type $D$ configuration can evolve 
to any configuration, excepting the configurations of type $A$.}
\label{fig_transitions}
\end{figure}

\section{Main Results}\label{sec_main}
\subsection{Characterization of fixed points and limit cycles of period two and higher.}
\begin{thm}[Characterization of fixed points] \label{MMM1}

Let $(x,y)\in\Omega^2$ be a configuration of the Q2R model. Then,  
$$(x,y)\in P_1 \Leftrightarrow [x=y]\wedge[\phi(x)=\mathbb{1}].$$
\end{thm}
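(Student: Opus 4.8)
The plan is to prove the equivalence directly from the definition of a fixed point, $(x,y)\to(x,y)$, unpacking it through the Q2R rule (\ref{q2r}) in the form recorded in Remark \ref{rem_iter}.

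\textbf{Forward direction.} Suppose $(x,y)\in P_1$, i.e. $(x,y)\to(x,y)$. By Remark \ref{rem_iter}, the one-step image of $(x,y)$ is $\left(y\odot\phi(x),\,x\right)$, so $(x,y)\to(x,y)$ forces the two componentwise identities $x = y\odot\phi(x)$ and $y = x$. The second gives $x=y$ immediately. Substituting $y=x$ into the first yields $x = x\odot\phi(x)$; multiplying both sides (in the Hadamard product, which is commutative and associative with neutral element $\mathbb{1}$, and satisfies $x\odot x=\mathbb{1}$) by $x$ gives $\mathbb{1} = x\odot x\odot\phi(x) = \phi(x)$. Hence $[x=y]\wedge[\phi(x)=\mathbb{1}]$.

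\textbf{Converse direction.} Suppose $[x=y]\wedge[\phi(x)=\mathbb{1}]$. Then by Remark \ref{rem_iter} the image of $(x,y)=(x,x)$ is $\left(x\odot\phi(x),\,x\right) = \left(x\odot\mathbb{1},\,x\right) = (x,x) = (x,y)$, so $(x,y)\to(x,y)$ and thus $(x,y)\in P_1$. (Alternatively, this is exactly case (T1) in the evolution analysis of Section \ref{main}.)

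This is entirely routine: the only ``work'' is the observation $x\odot x=\mathbb{1}$ together with the algebraic properties of $\odot$ already established in the relevant definition, so there is no real obstacle. If anything, the mild subtlety worth stating explicitly is that equality of configurations in $\Omega^2$ means equality of \emph{both} coordinates, which is what lets us read off the two separate identities from $(x,y)\to(x,y)$; everything else follows by substitution.
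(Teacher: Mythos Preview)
Your proof is correct and takes essentially the same approach as the paper: both unpack the fixed-point condition $(x,y)\to(x,y)$ via Remark~\ref{rem_iter} to obtain the system $x=y$ and $x=y\odot\phi(x)$, and then read off $\phi(x)=\mathbb{1}$. The paper presents this as a single chain of equivalences, while you split it into the two directions and spell out the cancellation $x\odot x=\mathbb{1}$ more explicitly, but the substance is identical.
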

\begin{proof}  
\begin{eqnarray}
(x,y)\in P_1 & \Leftrightarrow & [(x,y)  \to (x,y)] \text{, by definition of fixed point.}\nonumber\\
& \Leftrightarrow & \left \{ \begin{array}{rl} x  &=  y \\ x &=  y \odot  \phi(x)\end{array} \right.\text{, by Remark \ref{rem_iter}.}\nonumber\\
& \Leftrightarrow & [x=y]\wedge [\phi(x)=\mathbb{1}].\nonumber
\end{eqnarray}
\end{proof}

\begin{rem}\label{rem_p1}
The above result states that fixed points of the Q2R model are always configurations of type $A$ (see Figure \ref{fig_fp-p2}-a), {\it i.e.}, 
 self-symmetric ($(x,x)\in \Omega^2_{xx}$) and without null neighborhoods ($\phi(x)= \mathbb{1}$). 
Hence, $P_1=A$.
\end{rem}

\begin{figure}[h]
\begin{center}
a)\quad  \includegraphics[width = 1.5 cm]{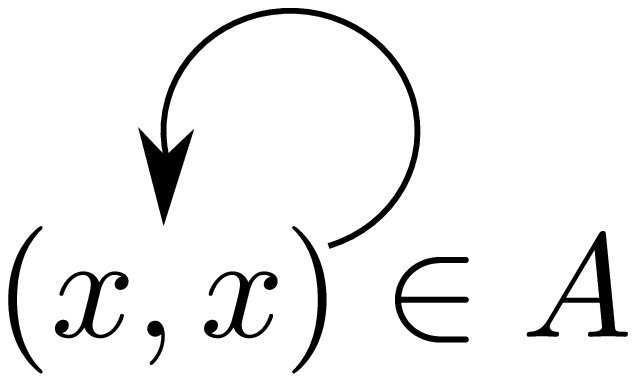} \quad \quad \quad b) \quad  \includegraphics[width = 1.5 cm]{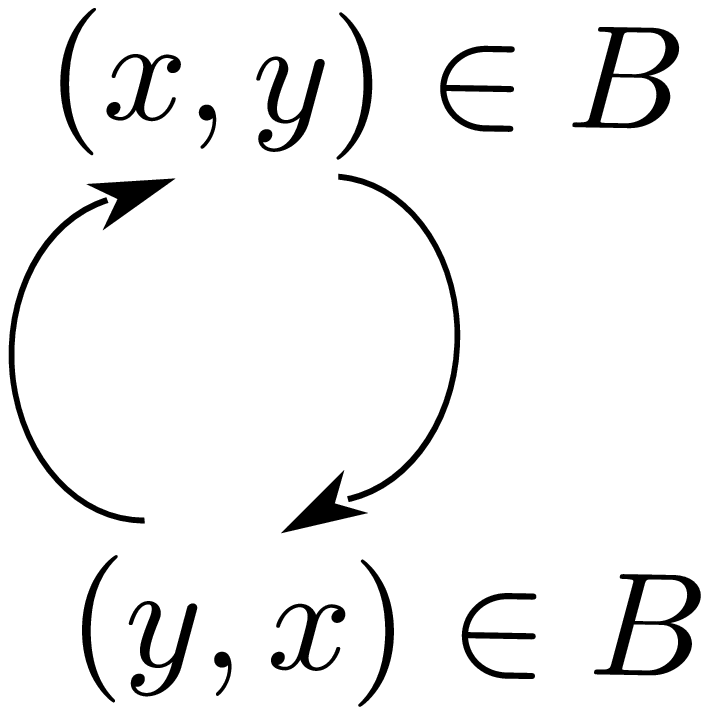} 
\end{center}
\caption{a) Scheme for a fixed point (or limit cycle of period 1). b) Scheme for a period-2 limit cycle. }
\label{fig_fp-p2}
\end{figure}

Since Q2R is a reversible system, if a configuration $(x,y)\in\Omega^2$ is not a fixed point, then necessarily it belongs to a limit cycle of period  2 or higher. In this context, as a characterization of 
such a limit cycles, we consider convenient
to explicit the next Corollary, which is the negation of Theorem \ref{MMM1}.

\begin{cor}\label{coro_cl}
Let $(x,y)\in\Omega^2$ be a configuration of Q2R. Then,
$$(x,y) \text{ belongs in a limit cycle of period 2 or higher} \Leftrightarrow [x\neq y] \vee [\phi(x) \neq\mathbb{1}].$$
\end{cor}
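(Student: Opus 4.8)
The plan is to obtain Corollary~\ref{coro_cl} as the plain logical negation of Theorem~\ref{MMM1}, using the fixed-point/periodic dichotomy guaranteed by reversibility. First I would recall the structural fact established in Section~\ref{model}: since the Q2R rule (\ref{q2r}) is reversible (this is the content of Lemma~\ref{LemRev}, or already of the inversion formula (\ref{q2r.reversible})) and the phase space $\Omega^2$ is finite, every configuration $(x,y)\in\Omega^2$ lies on a periodic orbit. Consequently, exactly one of the following holds: either $(x,y)\in P_1$, i.e. $(x,y)$ is a fixed point, or $(x,y)$ belongs to a limit cycle of period $T\geq 2$. In particular the two statements ``$(x,y)$ belongs to a limit cycle of period 2 or higher'' and ``$(x,y)\notin P_1$'' are equivalent.

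Next I would invoke Theorem~\ref{MMM1}, which characterizes membership in $P_1$ as $(x,y)\in P_1 \Leftrightarrow [x=y]\wedge[\phi(x)=\mathbb{1}]$. Negating both sides of this equivalence and applying De~Morgan's law gives
$$(x,y)\notin P_1 \;\Leftrightarrow\; \neg\big([x=y]\wedge[\phi(x)=\mathbb{1}]\big) \;\Leftrightarrow\; [x\neq y]\vee[\phi(x)\neq\mathbb{1}],$$
and chaining this with the equivalence of the previous paragraph yields precisely the claimed statement.

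There is essentially no obstacle here; the only point that deserves to be made explicit is the justification that ``fixed point'' and ``period $\geq 2$'' are mutually exclusive and jointly exhaustive alternatives for an arbitrary configuration, which is exactly the finiteness-plus-reversibility observation recorded in Section~\ref{model} and underpinned by Lemma~\ref{LemRev}. Once that dichotomy is granted, the proof is a one-line propositional manipulation, so I would keep it short and simply state it as ``this is the negation of Theorem~\ref{MMM1}'' with the De~Morgan step displayed as above.
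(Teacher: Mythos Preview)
Your proposal is correct and follows exactly the paper's own justification: the paper does not give a separate proof but simply introduces the corollary as ``the negation of Theorem~\ref{MMM1}'', relying on the reversibility/finiteness dichotomy stated in Section~\ref{model}. Your write-up just makes the De~Morgan step and the fixed-point/period-$\geq 2$ dichotomy explicit, which is precisely what the paper leaves implicit.
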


\begin{rem}
The possible evolutions analyzed before implies that the configurations involved in any limit cycle of period 2 or higher are of type $B$, $C$ or $D$ (not A).
\end{rem}

\subsection{Period 2 limit cycles}
%The period 2 limit cycle  is characterized by the condition:
%$$ (x,y)\to  (z,x) \to   (x\phi(z),z) = (x,y), $$
%with $z=  y\phi(x).$
%Therefore, one concludes that 
%$z=  y\phi(x)= y$ and  $ x\phi(z) = x$. Hence, $\phi(x)=\phi(y)=\mathbb{1}$ and excluding the fixed point 
%previous condition imposes that  $x\neq y$.
\begin{thm}[Characterization of limit cycles of period 2] \label{MMM2}

Let $(x,y)\in\Omega^2$ be a configuration of Q2R. Then,
$$(x,y)\in P_2 \Leftrightarrow [x\neq y] \wedge [\phi(x)=\mathbb{1}] \wedge [\phi(y)=\mathbb{1}].$$
\end{thm}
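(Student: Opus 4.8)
The plan is to prove the equivalence by a direct biconditional argument, using the definition of $P_2$ together with the one-step rule (\ref{q2r}) and the Reversibility Lemma.

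First I would argue the forward direction. Suppose $(x,y)\in P_2$. Then $(x,y)$ belongs to a limit cycle of period exactly $2$, so the orbit is $(x,y)\to(z,x)\to(x,y)$ with $(z,x)\neq(x,y)$. From $(z,x)\to(x,y)$ and Remark \ref{rem_iter} we read off $y=z$; hence the orbit is really $(x,y)\to(y,x)\to(x,y)$, and since the period is $2$ we must have $x\neq y$. Now apply the rule to the first step $(x,y)\to(y,x)$: by Remark \ref{rem_iter} this gives $y = y\odot\phi(x)$, forcing $\phi(x)=\mathbb{1}$. Applying the rule to the second step $(y,x)\to(x,y)$ likewise gives $x = x\odot\phi(y)$, forcing $\phi(y)=\mathbb{1}$. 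This yields $[x\neq y]\wedge[\phi(x)=\mathbb{1}]\wedge[\phi(y)=\mathbb{1}]$. (Alternatively, one can invoke the transition analysis (T2): a period-$2$ cycle cannot use $A$, $C$, or $D$ configurations, so it must consist of type-$B$ configurations with $\phi(y)=\mathbb{1}$ as well; but the direct computation is cleaner.)

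For the converse, assume $[x\neq y]\wedge[\phi(x)=\mathbb{1}]\wedge[\phi(y)=\mathbb{1}]$. Since $\phi(x)=\mathbb{1}$, the rule gives $(x,y)\to(y\odot\phi(x),x)=(y,x)$; since $\phi(y)=\mathbb{1}$, it gives $(y,x)\to(x\odot\phi(y),y)=(x,y)$. Thus the orbit closes after two steps: $(x,y)\to(y,x)\to(x,y)$. Because $x\neq y$ the two configurations $(x,y)$ and $(y,x)$ are distinct, so this is a genuine limit cycle of period exactly $2$ (not a fixed point), i.e. $(x,y)\in P_2$.

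I do not anticipate a serious obstacle here; the only point requiring a little care is ensuring the period is \emph{exactly} $2$ and not $1$ — which is precisely where the hypothesis $x\neq y$ is used, in tandem with Theorem \ref{MMM1} characterizing the period-$1$ configurations. The Reversibility Lemma is not strictly needed for the bare computation, but it (or Corollary \ref{cor_sec-sim-trans}) gives the conceptual reason why a period-$2$ orbit is automatically of the symmetric form $(x,y)\leftrightarrow(y,x)$, so I would mention it to motivate why the orbit must have this shape before doing the verification.
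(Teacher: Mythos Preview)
Your proof is correct and follows essentially the same approach as the paper's: both arguments unwind the two-step orbit directly from the Q2R rule (Remark \ref{rem_iter}) to extract the conditions $\phi(x)=\mathbb{1}$, $\phi(y)=\mathbb{1}$, and $x\neq y$. The only cosmetic difference is that the paper presents the argument as a single chain of biconditionals, whereas you split it into forward and converse directions; the computational content is identical.
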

\begin{proof}
\begin{eqnarray}
(x,y)\in P_2 & \Leftrightarrow & [(x,y)  \to (y\odot \phi(x),x) \to (x,y)]\text{, by definition of } P_2.\nonumber\\
& \Leftrightarrow & [(x,y)  \to (y\odot \phi(x),x)] \wedge \begin{array}{c} \left \{ \begin{array}{rl} y  &=  y\odot \phi(x) \\ 
x &=  x \odot  \phi(y\odot \phi(x))\end{array} \right.\\
\text{(by Remark \ref{rem_iter})}\end{array}\nonumber\\
& \Leftrightarrow & [(x,y)  \to (y\odot \phi(x),x)] \wedge \left \{ \begin{array}{rl} \phi(x)  &= \mathbb{1} \\ 
\phi(y\odot \phi(x)) &= \mathbb{1} \end{array} \right.\nonumber\\
& \Leftrightarrow & [(x,y)  \to (y,x)] \wedge \left \{ \begin{array}{rl} \phi(x)  &= \mathbb{1} \\ 
\phi(y) &=  \mathbb{1}  \end{array} \right.\nonumber\\
& \Leftrightarrow & [x\neq y] \wedge [\phi(x)= \mathbb{1}] \wedge [\phi(y) = \mathbb{1}].\nonumber
\end{eqnarray}
\end{proof}

\begin{rem}\label{rem_p2}
The above result says that the limit cycles of period 2 consists of configurations $(x,y)$, 
of type $B$ and with both states, $x$ and $y$, without null neighborhoods (see Figure \ref{fig_fp-p2}b). 
Hence, $P_2\subset B$. 

 Observe that there are elements in $B$ which do not belong to $ P_2$. For instance, take the configuration $(\mathbb{1},x)\in B$ where the state $x\in\Omega$ is composed by a $2\times 2$ block of -1s surrounded by 1s, {\it i.e.,}

$$
x=\left[
\begin{array}{cccccccc}
1 & 1 & \cdots & \cdots&  \cdots & \cdots & 1 & 1 \\
1 & \ddots & \cdots & \cdots&  \cdots & \cdots &\vdots & 1 \\
1 & \ddots & 1 & 1&  1 & 1 &\cdots & 1 \\
\vdots & \cdots &1 & -1 & -1 & 1& \cdots  &\vdots \\
\vdots & \cdots & 1& -1 & -1 & 1&\cdots & \vdots \\
1 & \cdots & 1 & 1&  1 & 1 &\cdots & 1 \\
\vdots & \vdots & \cdots & \cdots & \cdots & \cdots & \ddots & 1 \\
1 & 1 & \cdots & \cdots&  \cdots & \cdots & 1 & 1 
\end{array}
\right]_{L\times L},
$$
This configuration has only 4 null neighborhoods, located just in the nodes of the inner block of -1s. In other words, $\phi (x) = x$. Thus, by applying the Q2R rule  we {get}~: 
\begin{itemize}
\item $(\mathbb{1},x)\to (\phi(\mathbb{1})\odot x,\mathbb{1})=(x,\mathbb{1})$
\item $(x,\mathbb{1})\to (\phi(x)\odot \mathbb{1},x)=(x\odot \mathbb{1},x)=(x,x)$
\item $(x,x)\to (x\odot \phi(x),x)=(x\odot x,x)=(\mathbb{1},x)$
\end{itemize}
%{   $$ (x,x)\to (x\odot \phi(x),x)=(x\odot x,x)=(\mathbb{1},x) \to (\phi(\mathbb{1})\odot x,\mathbb{1})=(x,\mathbb{1})\to (\phi(x)\odot \mathbb{1},x)=(x\odot \mathbb{1},x)=(x,x).$$}
Therefore, the sequence of evolutions $(\mathbb{1},x)\to(x,\mathbb{1})\to(x,x)\to(\mathbb{1},x)$ is a period-3 limit cycle and, consequently, $(\mathbb{1},x)\in P_3$, hence, $(\mathbb{1},x)\notin P_2$.
\end{rem}

\subsection{Period 3 limit cycles}
\begin{thm}[Characterization of limit cycles of period 3]\label{teo_cl3} 
Let $\{(x,y),(z,x),(y,z)\}\subset\Omega^2$ such that $(x,y)\to(z,x)\to(y,z)$. Then,
$$\{(x,y),(z,x),(y,z)\}\subseteq P_3 \Leftrightarrow \phi (x)\odot \phi (y)\odot \phi (z)=\mathbb{1}.$$
\end{thm}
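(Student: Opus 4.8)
The plan is to convert the chain of one-step evolutions into Hadamard-product identities among $x,y,z$ and then to exploit the involutivity $w\odot w=\mathbb{1}$, valid for every state $w\in\Omega$. By Remark \ref{rem_iter}, the assumption $(x,y)\to(z,x)\to(y,z)$ is equivalent to
\[
z=y\odot\phi(x)\qquad\text{and}\qquad y=x\odot\phi(z),
\]
and the orbit through these configurations closes into a cycle of period dividing $3$ exactly when, in addition, $(y,z)\to(x,y)$, i.e. when $x=z\odot\phi(y)$. Thus the proof reduces to relating this last identity to the condition $\phi(x)\odot\phi(y)\odot\phi(z)=\mathbb{1}$.

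For $(\Rightarrow)$, suppose $\{(x,y),(z,x),(y,z)\}\subseteq P_3$. Since the Q2R map is deterministic and (by Lemma \ref{LemRev}) invertible, each of these three configurations lies on the limit cycle of its predecessor; hence all three lie on one common cycle, which has period $3$ and therefore consists of exactly these three pairwise distinct configurations. In particular the cycle closes after three steps, $(y,z)\to(x,y)$, so $x=z\odot\phi(y)$ holds together with the two identities above. Taking the Hadamard product of $z=y\odot\phi(x)$, $y=x\odot\phi(z)$ and $x=z\odot\phi(y)$ gives $x\odot y\odot z=(x\odot y\odot z)\odot\phi(x)\odot\phi(y)\odot\phi(z)$; multiplying both sides by $x\odot y\odot z$ and using $w\odot w=\mathbb{1}$ yields $\phi(x)\odot\phi(y)\odot\phi(z)=\mathbb{1}$.

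For $(\Leftarrow)$, assume $\phi(x)\odot\phi(y)\odot\phi(z)=\mathbb{1}$ and $(x,y)\to(z,x)\to(y,z)$, so that $z=y\odot\phi(x)$ and $y=x\odot\phi(z)$. Multiplying the hypothesis by $\phi(z)$ gives $\phi(x)\odot\phi(y)=\phi(z)$, and therefore
\[
z\odot\phi(y)=y\odot\phi(x)\odot\phi(y)=y\odot\phi(z)=\bigl(x\odot\phi(z)\bigr)\odot\phi(z)=x .
\]
Hence $(y,z)\to(x,y)$, so the orbit $(x,y)\to(z,x)\to(y,z)\to(x,y)$ is closed with period dividing $3$. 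If the three configurations are not pairwise distinct one checks immediately that $x=y=z$, and then $z=y\odot\phi(x)$ forces $\phi(x)=\mathbb{1}$, so $(x,x)$ is a fixed point by Theorem \ref{MMM1}; otherwise the three are distinct, the period is exactly $3$, and $\{(x,y),(z,x),(y,z)\}\subseteq P_3$. The manipulations are routine; the only delicate points are the argument in $(\Rightarrow)$ that $P_3$-membership of three consecutive configurations really forces the cycle to close after exactly these three steps (so that the identity $x=z\odot\phi(y)$ becomes available), and the need in $(\Leftarrow)$ to exclude the degenerate fixed-point case $x=y=z$.
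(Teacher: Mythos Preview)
Your proof is correct and follows essentially the same approach as the paper: both directions reduce to Hadamard-product algebra using the two identities $z=y\odot\phi(x)$ and $y=x\odot\phi(z)$ coming from the hypothesis, together with the closing identity $x=z\odot\phi(y)$. The only cosmetic difference is that you multiply the three identities together and cancel, whereas the paper substitutes them sequentially; your added remarks on why $P_3$-membership forces the cycle to close after exactly three steps, and on the degenerate case $x=y=z$, are more careful than the paper, which simply glosses over both points.
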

\begin{proof}
By Remark \ref{rem_iter}, $(x,y)\to(z,x)\to(y,z)$ means that:
\begin{eqnarray*}
 \underbrace{[y=x\odot \phi (z)]}_{(a)} \wedge \underbrace{[z=y\odot \phi (x)]}_{(b)} %\label{eq-cl3_1} 
\end{eqnarray*}
$\Rightarrow )$
\begin{eqnarray}
\{(x,y),(z,x),(y,z)\}\subseteq P_3 & \Rightarrow & [(y,z) \to (x,y)] \nonumber\\
& \Rightarrow & x=z\odot \phi (y) \text{, by Remark \ref{rem_iter}.}\label{eq-cl3_2}
\end{eqnarray}
Replacing (a) in (b) and, after that, (b) in (\ref{eq-cl3_2}) we have that $x\odot \phi (x)\odot \phi (y)\odot \phi (z)=x$, {\it i.e.,}
$$\phi (x)\odot \phi (y)\odot \phi (z)=\mathbb{1}.$$

\noindent $\Leftarrow )$ 
\begin{eqnarray}
\phi (x)\odot \phi (y)\odot \phi (z)=\mathbb{1} & \Rightarrow & x\odot \phi (x)\odot \phi (y)\odot \phi (z)=\mathbb{1}\odot x=x \nonumber\\
& \Rightarrow & [x\odot \phi (z)]\odot \phi (x)\odot \phi (y)=x\nonumber\\
& \Rightarrow & y\odot \phi(x)\odot \phi (y)=x \text{, by { (a)}.}\nonumber\\
& \Rightarrow & z\odot \phi (y)=x \text{, by { (b)}.}\nonumber\\
& \Rightarrow & [(y,z) \to (x,y)] \nonumber
\end{eqnarray}
Thus, $[(x,y)\to(z,x)\to(y,z)]\wedge [(y,z) \to (x,y)]$, {\it i.e.,}
$$\{(x,y),(z,x),(y,z)\}\subseteq P_3$$
\end{proof}

\begin{rem}
Contrary with the statements of the Remarks \ref{rem_p1} and \ref{rem_p2}, in which fixed points and period-2 
limit cycles  are shown to have a unique topology, the condition $\phi (x)\odot \phi (y)\odot \phi (z)=\mathbb{1}$ 
for period-3 limit cycles allows different topologies (see Figure \ref{Period3}). This fact occurs for all limit cycles {of} period 3 and higher and it will be explained later in Remark \ref{rem_cyclededuction}.
\end{rem}

 \begin{figure}[htbp!]
\begin{center}
a)\quad \includegraphics[width=3.5cm]{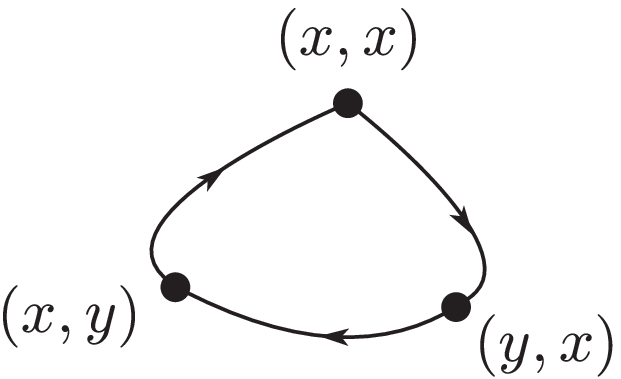}  \quad\quad \quad b)\quad\includegraphics[width=3.5cm]{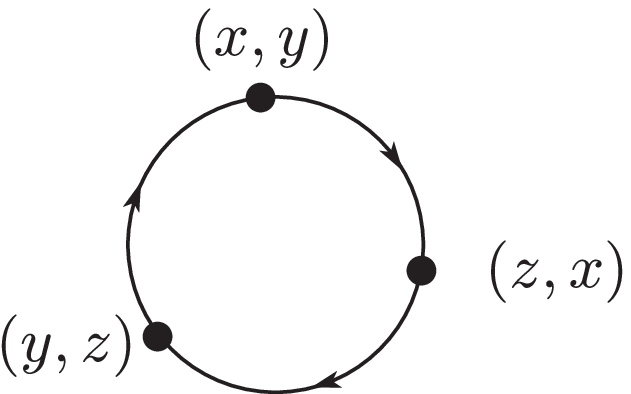} 
\end{center}
  \caption{\label{Period3} The two different topologies for a period-3 limit cycle: a) A symmetric limit cycle, 
  with one self-symmetric configuration $(x,x)$. b)  An asymmetric limit cycle, hence, without any self-symmetric configuration. }
\end{figure}

\subsection{Properties of the configurations in limit cycles with period three and higher.}\label{Secc:Cycles3andHigher}
The following result will be useful for the proof of our main Theorem \ref{thm_cycle-clasiff} 
(on the attractors classification in Q2R) and dictates direct consequences 
that are easily deduced from Theorems \ref{MMM1} and \ref{MMM2} and the possible evolutions 
shown in Figure \ref{fig_transitions}.
\begin{cor}\label{cor_c3props}
Let $\mathcal{C}$ be a limit cycle of Q2R with period 3 or higher. Then:
\begin{itemize}
\item[(i)] {$\mathcal{C}$ has at least one configuration of type $D$.}
\item[(ii)] $\mathcal{C}$ does not have evolutions of type $A\to A$, nor  $B\to B$ (notice that  $C\to C$ does not exist) but it could have evolutions of type $D\to D$.
\item[(iii)] If $\mathcal{C}$ has a type $D$ configuration, then $D$ comes from a type $V$ configuration {with} $V\in\{B,C,D\}$. 
\item[(iv)] If $\mathcal{C}$ has a configuration $(x,y)\in B$, then $(x,y)\to (y,x)\in D$.
\end{itemize}
\end{cor}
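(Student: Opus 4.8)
The plan is to prove each of the four items of Corollary~\ref{cor_c3props} by systematically invoking Theorems~\ref{MMM1} and~\ref{MMM2} together with the transition analysis (T1)--(T4) summarized in Figure~\ref{fig_transitions}. Throughout, let $\mathcal{C}$ be a limit cycle of period $T \geq 3$.

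\textbf{Item (ii) first, since the others lean on it.} By Theorem~\ref{MMM1}, a configuration $(x,y)$ admits the evolution $A \to A$ precisely when it is a fixed point, i.e. $(x,y)\in P_1$; since $T\geq 3$, no configuration of $\mathcal{C}$ is a fixed point, so $\mathcal{C}$ has no $A\to A$ evolution (and hence, by (T1), no configuration of type $A$ at all, a fact worth noting in passing). Similarly, by Theorem~\ref{MMM2}, the evolution $B\to B$ from a configuration $(x,y)$ forces $(x,y)\in P_2$; again impossible for $T\geq 3$. That $C\to C$ never occurs is immediate from (T3): a type $C$ configuration evolves to type $B$ or $D$, never $C$. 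Finally, $D\to D$ is permitted by (T4)-(ii) and does occur in some cycles, so nothing forbids it; one may cite a concrete period-$3$ asymmetric example (Figure~\ref{Period3}b) or simply leave this as an existence remark.

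\textbf{Item (i).} Since $\mathcal{C}$ contains no type $A$ configuration (just shown), every configuration of $\mathcal{C}$ is of type $B$, $C$, or $D$. Suppose for contradiction that $\mathcal{C}$ has no type $D$ configuration; then all configurations in $\mathcal{C}$ are of type $B$ or $C$. Inspecting the admissible transitions in Figure~\ref{fig_transitions}: a $C$ configuration evolves only to $B$ or $D$ (T3), so within $\mathcal{C}$ it must evolve to $B$; a $B$ configuration evolves only to $B$ or $D$ (T2), so within $\mathcal{C}$ it must evolve to $B$. Hence once the orbit reaches a $B$ configuration it stays in $B$ forever, and since the orbit is periodic it is eventually all in $B$, i.e. there is a sub-evolution $B\to B$ inside $\mathcal{C}$ — contradicting item (ii). (If the orbit were entirely of type $C$, that is impossible since $C\to C$ does not exist.) Therefore $\mathcal{C}$ has at least one type $D$ configuration.

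\textbf{Items (iii) and (iv).} For (iii): by reversibility of the Q2R rule, every configuration in $\mathcal{C}$ has a unique predecessor in $\mathcal{C}$; if $(x,y)\in D$ then its predecessor is some $(w,x)\in U$ with $U\in\{A,B,C,D\}$, and consulting the transition diagram, the only types that can evolve \emph{to} a type $D$ configuration are $B$, $C$, and $D$ (since $A\to A$ only), so $U\in\{B,C,D\}$ as claimed. For (iv): if $(x,y)\in B$ then $\phi(x)=\mathbb{1}$, so by Remark~\ref{rem_iter} the successor is $(y\odot\phi(x),x)=(y,x)$; this successor lies in $\mathcal{C}$ and is of type $B$ or $D$ by (T2). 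If it were of type $B$ we would have a $B\to B$ evolution in $\mathcal{C}$, contradicting item (ii); hence $(x,y)\to(y,x)\in D$.

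The only mild subtlety — the "main obstacle," such as it is — is the connectivity/periodicity argument in item (i): one must be careful that "once in $B$, always in $B$" genuinely yields a $B\to B$ step rather than just an eventually-$B$ tail, but this is immediate because a periodic orbit that is eventually contained in $B$ is entirely contained in $B$, and with $T\geq 2$ some step is then $B\to B$. Everything else is a direct reading of Figure~\ref{fig_transitions} against the period constraint $T\geq 3$.
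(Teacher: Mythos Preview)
Your proof is correct and follows essentially the same approach as the paper, which does not give a detailed argument but simply states that the corollary ``dictates direct consequences that are easily deduced from Theorems~\ref{MMM1} and~\ref{MMM2} and the possible evolutions shown in Figure~\ref{fig_transitions}.'' You have faithfully spelled out those deductions item by item, using precisely the ingredients the paper points to (the characterizations $A=P_1$ and $B\to B\Rightarrow P_2$, plus the transition table (T1)--(T4)).
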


\subsection{Topological classification of limit cycles in Q2R}\label{subsec_typeLC}
%% Topological of the different types of limit cycles in Q2R
\begin{defn}
Let $\mathcal{C}$ be a limit cycle of the Q2R  model with period $T\in\mathbb{N}$. We say that $\mathcal{C}$ 
is:
\begin{itemize}
\item A symmetric limit cycle of type I (S-I). If $T=1$ or if there exists $p\in\mathbb{N}_0$ such that $\mathcal{C}$  has the topology of Figure \ref{fig_cycletype}-a, {\it i.e.}, is symmetric with:
\begin{itemize}
\item An odd period $T=2(p+1)+1$.
\item Only one configuration of type $C$, only one configuration of type $B$  and $(2p+1)$ configurations of type $D$.
\end{itemize}
\item A symmetric limit cycle of type II (S-II). If there exists $p\in\mathbb{N}_0$ such that $\mathcal{C}$ has the topology of Figure \ref{fig_cycletype}-b, {\it i.e.}, is symmetric with:
\begin{itemize}
\item An even period $T=2(p+2)$.
\item  Only two configurations of type $C$ and $2(p+1)$ configurations of type $D$.
\end{itemize}
\item {A symmetric limit cycle of type III (S-III).} If $T=2$ or if there exists $p\in\mathbb{N}_0$ such that $\mathcal{C}$ has the topology of Figure \ref{fig_cycletype}-c, {\it i.e.}, is symmetric with:
\begin{itemize}
\item {An even period $T=2(p+2)$}.
\item Only two type $B$ configurations and $2(p+1)$ type $D$ configurations.
\end{itemize}
\item {An asymmetric limit cycle (AS).} If there exists $p\in\mathbb{N}\setminus\{1\}$ such that $\mathcal{C}$ has the topology of one
 of the two limit cycles of Figure \ref{fig_cycletype}-d, {\it i.e.}, is an asymmetric limit cycle with:
\begin{itemize}
\item Period $T=p+1$ (it can be even or odd, depending on the value of $p$).
\item All its configurations are of type $D$.
\end{itemize}
\end{itemize}
\end{defn}

\begin{figure}[h]
\begin{center}
%\begin{tabular}{cc}
a) \includegraphics[width = 4 cm]{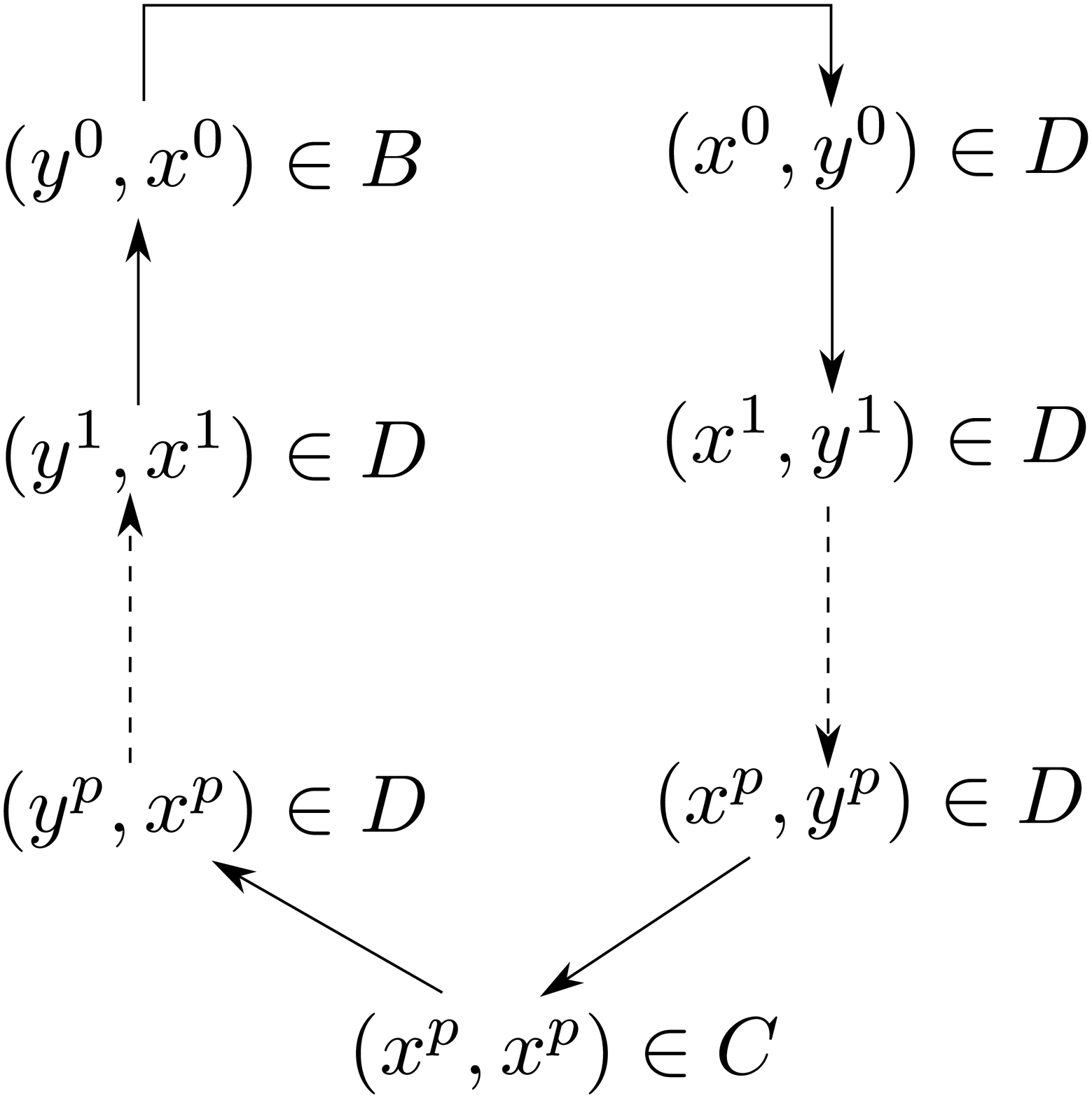} \quad b) \includegraphics[width = 4 cm]{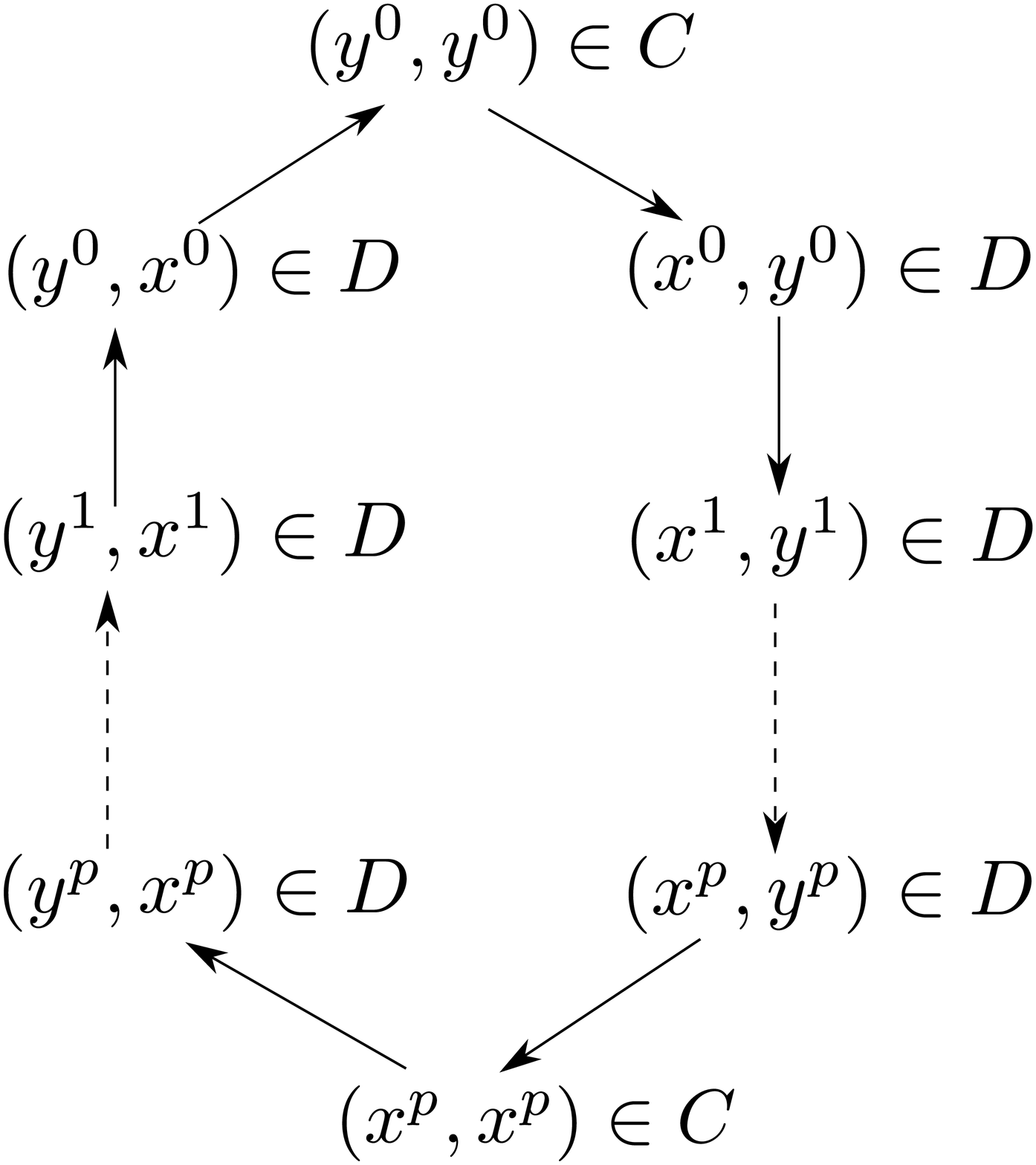}
\vskip 1cm
c)  \includegraphics[width = 4 cm]{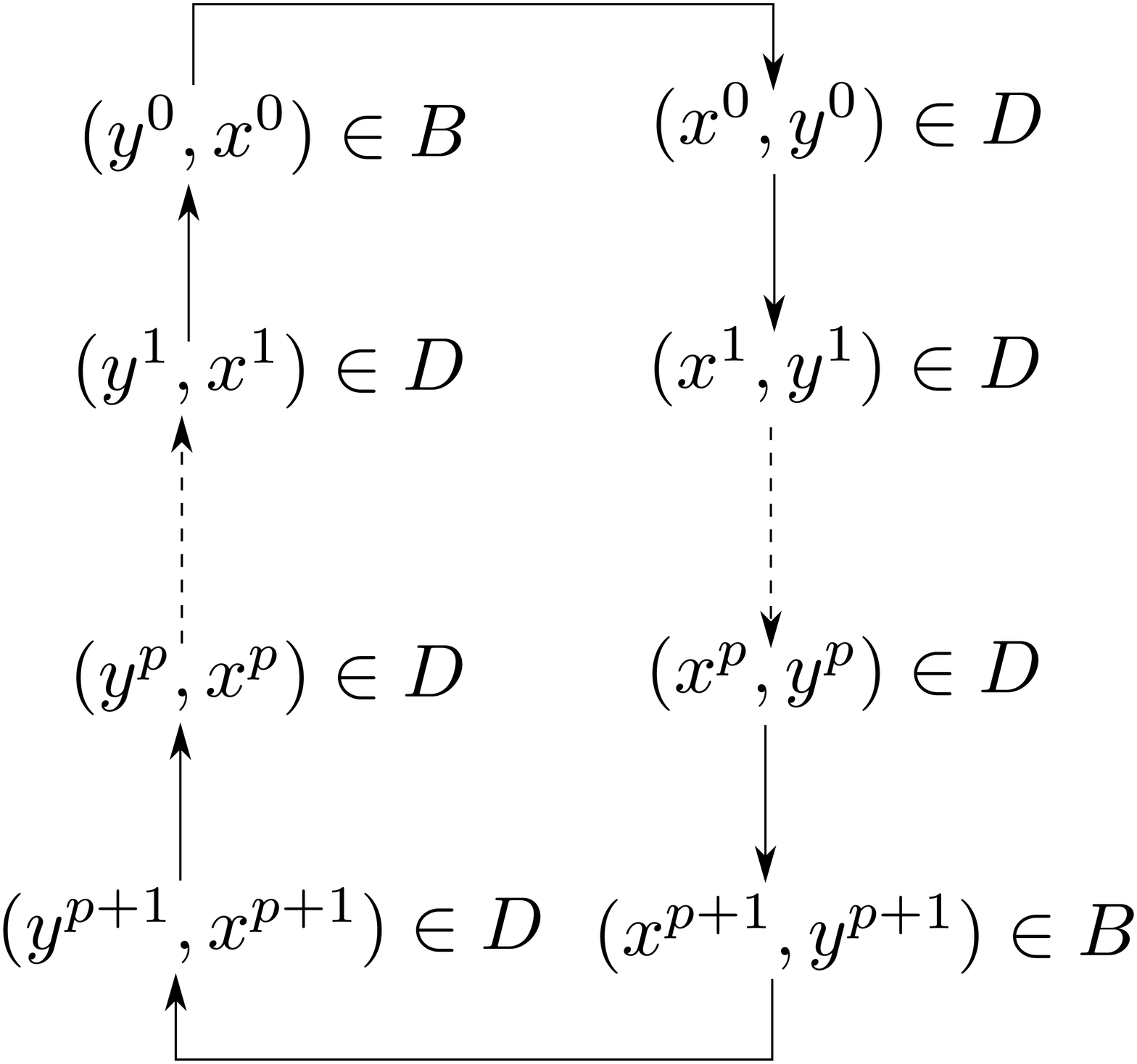} \quad  d) \includegraphics[width = 5 cm]{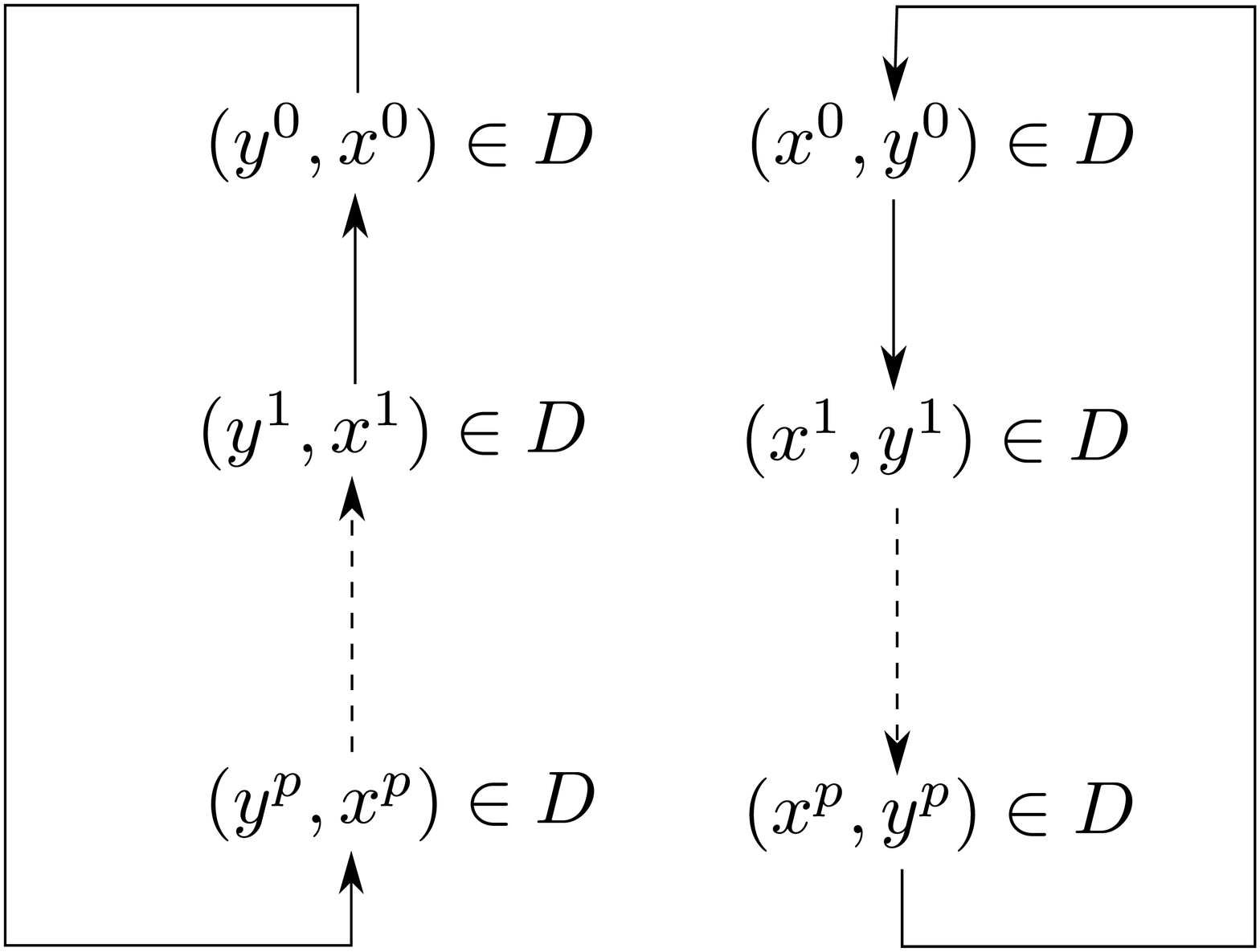}
%\end{tabular}
\end{center}
\caption{ Topology of different type of limit cycles. a) Symmetric limit cycle of type I.   b)  Symmetric limit cycle of type II.  c)  Symmetric limit cycle of type III. d) Two asymmetric limit cycles.}\label{fig_cycletype}
\end{figure}

%\subsection{A qualitative overview of the phase space.}
%The phase space of all states is defined through all possible values of the $2N$ 
%dimensional state $(x,y)$. The resulting phase space is composed by the $2^{2N}$ vertices 
%of a hypercube in dimension $2N$. The smallest possible system corresponds to a $2\times 2$ 
%lattice. In this case there are $2^{2\times 4}=2^8=256$ states and the phase space is a 
%hypercube in dimension 8.
%For larger systems, any graphical attempt would be unsuccessful, moreover the points in the 
%phase space grows so fast that a $4\times 4$ system will be the largestype of limit cyclet one which can be 
%studied case  by case exactly. In this case the system possesses $2^{2\times 16}=2^{32}$ 
%states. This case is studied deeply as a starting point for conjectures in larger dimensional 
%systems. 

%Let be denote $\Omega(E)$ the total number of states, this distribution of states is not known 
%for this system, but it decays exponentially in the energy in the tails. (see Figure 
%\ref{fig:OmegaE}). 

% \begin{thm}
% Let $C$ be a limit cycle and $(x,x) \in C$. Then:
% $$[\left( u, v \right\} \to   \left(w, u\right\}]\in C \Leftrightarrow [\left(u, w\right\}  
% \to   \left( v, u\right\}]\in C$$
% \ovalbox{CHECK}
% \end{thm}
\subsection{ Attractors classification of Q2R}
 The following (main) Theorem shows that the only possible limit cycles existing in Q2R are the four ones defined above.
\begin{thm}[Attractors classification of Q2R]\label{thm_cycle-clasiff}
Let $\mathcal{C}$ be a limit cycle of Q2R with period $T\in\mathbb{N}$. Then $\mathcal{C}$ is of type 
S-I, S-II, S-III or AS.
\end{thm}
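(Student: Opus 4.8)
The plan is to use the partition of $\Omega^2$ into types $A$, $B$, $C$, $D$ together with the transition analysis (T1)--(T4) of Figure \ref{fig_transitions}, the Reversibility Lemma \ref{LemRev} (via Corollary \ref{cor_sec-sim-trans}), and the structural Corollary \ref{cor_c3props}, to show that the cyclic word over the alphabet $\{A,B,C,D\}$ associated to any limit cycle $\mathcal{C}$ of period $T\geq 3$ (the cases $T=1,2$ being already settled by Remarks \ref{rem_p1} and \ref{rem_p2}) must be exactly one of the four patterns listed in Definition in Section \ref{subsec_typeLC}. First I would record, as the combinatorial skeleton, that by (T1)--(T4) the only admissible one-step transitions among types are $B\to B$, $B\to D$, $C\to B$, $C\to D$, $D\to B$, $D\to C$, $D\to D$ (with $A$ isolated), and that by Corollary \ref{cor_c3props}(ii) a cycle of period $\geq 3$ uses neither $A\to A$ nor $B\to B$; hence along $\mathcal{C}$ a $B$ is always immediately followed by a $D$, a $C$ is always immediately followed by $B$ or $D$, and the only predecessors of $C$ are $D$'s (a $C$ can only be reached by $D\to C$), while $B$ can be reached from $B$, $C$, or $D$ — but since $B\to B$ is forbidden in $\mathcal C$, in fact inside $\mathcal C$ a $B$ is preceded by $C$ or $D$.

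Next I would exploit the symmetry. By Corollary \ref{cor_sec-sim-trans}, if $(x^0,y^0)\to\cdots\to(x^{T-1},y^{T-1})\to(x^0,y^0)$ is $\mathcal{C}$, then $(y^0,x^0)\to(y^{T-1},x^{T-1})\to\cdots\to(y^1,x^1)\to(y^0,x^0)$ is also a limit cycle, call it $\mathcal{C}^{\mathrm{sym}}$, of the same period, obtained by reversing the order and swapping the two states of each configuration; moreover swapping states sends type $A\to A$, $B\to B$, but sends type $C$ (where $x=y$) to $C$ and type $D$ to $D$ as well — so $\mathcal C^{\mathrm{sym}}$ has the \emph{reverse} type-word of $\mathcal C$, read as a cyclic word, with $B$'s and $C$'s and $D$'s preserved. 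Two mutually exclusive situations arise: either $\mathcal C^{\mathrm{sym}}=\mathcal C$ (symmetric case) or $\mathcal C^{\mathrm{sym}}\cap\mathcal C=\emptyset$, i.e. $\mathcal C$ is asymmetric (Definition \ref{NonSymmCycle}); this dichotomy needs a short argument: if some configuration lies in both, the two cycles coincide, and if $(x,y)\in\mathcal C$ with $(y,x)\in\mathcal C$ then applying the swap-and-reverse bijection shows $\mathcal C=\mathcal C^{\mathrm{sym}}$, so "non-symmetric" upgrades to "asymmetric". In the asymmetric case, a self-symmetric configuration $(x,x)$ (type $A$ or $C$) would lie in $\mathcal C\cap\mathcal C^{\mathrm{sym}}$, forcing symmetry — contradiction; hence $\mathcal C$ has no type $A$ or type $C$ configuration, and since $B\to D$ with $(x,y)\mapsto(y,x)$ shows any $B$ in $\mathcal C$ has its symmetric partner also in $\mathcal C$ (Corollary \ref{cor_c3props}(iv)), there is also no type $B$; so every configuration of an asymmetric cycle is of type $D$, which is exactly pattern (d). In the symmetric case, the involution $(x,y)\leftrightarrow(y,x)$ acts on $\mathcal C$; its fixed configurations are precisely the self-symmetric ones, which are of type $A$ or $C$ (type $B$ and $D$ configurations have $x\neq y$ so cannot be fixed, and by Corollary \ref{cor_c3props}(iv) their symmetric partner sits elsewhere on the cycle). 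An involution on a cyclic sequence of length $T$ that reverses orientation has either exactly two fixed points (if we think of it as a reflection of a $T$-gon with $T$ even, two opposite vertices fixed, or $T$ odd, one vertex and one edge-midpoint) — so on $\mathcal C$ the number of self-symmetric configurations is one or two. This is the crux.

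The main case analysis then runs as follows. If there are exactly two self-symmetric configurations, each is type $A$ or $C$; type $A$ is impossible on a cycle of period $\geq 3$, so both are type $C$, giving "two $C$'s"; since $C$ has only $D$-predecessors and is followed by $B$ or $D$, and by the symmetry the two arcs between the two $C$'s are reverses of one another, each arc consists entirely of $D$'s (a $B$ on an arc would, via $B\to D$ and the reflection, force its mirror $B$ with $B\to B$ somewhere, contradicting (ii); more directly, a $B$ inside an arc would need a $C$ or $D$ before it and a $D$ after, and tracking the reflection pins down the $C$-positions) — this yields pattern (b), S-II, with $T=2(p+2)$ and $2(p+1)$ intermediate $D$'s, provided $p\ge 0$, i.e. $T\ge 4$; the degenerate $T=2$ case is S-III by fiat. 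If instead there is exactly one self-symmetric configuration: it is type $C$ (again $A$ excluded). Then $T$ is odd (a reflection of a $T$-gon through a single vertex requires $T$ odd), and the unique $C$ is fixed; its predecessor is a $D$, its successor is $B$ or $D$, and the remaining $T-1$ configurations pair up under the reflection. The arc from $C$ back to $C$ (going around) must, by (ii), contain the $B$'s in mirror-symmetric position — I would show the only consistent assignment is a single $B$ at the antipodal edge-midpoint position (so that $B\to D$ holds and the reflection maps this $B$ to itself, which is consistent because $B\to D$ reverses to $D\to B$ and the partner of $(x,y)\in B$ is $(y,x)$ which here must re-enter the cycle — careful: in S-I the $B$ is \emph{not} self-symmetric since $x\neq y$, rather the reflection swaps the position just after $C$ with the position just before $C$; re-examining, the single $B$ sits adjacent and its mirror is a $D$, forcing exactly one $B$, one $C$, and $2p+1$ $D$'s) — this is pattern (a), S-I, $T=2(p+1)+1$. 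Finally one must rule out having two self-symmetric configurations of which one is type $B$ or type $D$ — impossible since those have $x\ne y$ — and rule out zero self-symmetric configurations in the symmetric case — impossible for an orientation-reversing involution on a finite cyclic word, which always has a fixed element.

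The hard part will be the bookkeeping that forces the $B$- and $C$-positions to be \emph{exactly} the ones in Figure \ref{fig_cycletype} rather than merely "somewhere on the mirror axis": one must combine the forbidden transitions $A\to A$, $B\to B$, $C\to C$, the fact that $C$'s only predecessor-type is $D$, the fact that $B$'s only in-cycle successor is $D$, and the reflection symmetry, to conclude that between two consecutive "special" vertices (the self-symmetric $C$'s, or the single $C$) there is at most one $B$ and that its location is pinned down. I would organize this by walking along $\mathcal C$ from a self-symmetric vertex and arguing that the first non-$D$ type encountered, if any, must be a $B$ (it cannot be $C$ since $C$'s predecessor is $D$ and $C$ is self-symmetric — already counted — and it cannot be $A$), that after this $B$ comes a $D$, and that a second $B$ before the next special vertex would, reflected, collide with the first and force a $B\to B$ step; symmetrically for $C$'s. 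Everything else — periods, counts of $D$'s, even/odd parity — then falls out of simple arithmetic on the cyclic word length.
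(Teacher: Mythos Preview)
Your approach via the orientation-reversing involution $\sigma:(x,y)\mapsto(y,x)$ is genuinely different from the paper's. The paper argues constructively: for $T\geq 3$ it picks a type-$D$ configuration (Corollary \ref{cor_c3props}(i)), looks at whether the cycle contains only $D$'s or whether this $D$ is preceded by a $B$ or a $C$, walks forward through the maximal run of $D$'s until hitting a $B$ or a $C$, and then invokes Corollary \ref{cor_sec-sim-trans} to fill in the second half of the cycle with the symmetric configurations. The four combinations (only $D$'s; $B$ then $B$; $B$ then $C$ or $C$ then $B$; $C$ then $C$) yield AS, S-III, S-I, S-II directly. Your dihedral-symmetry route is more global and, once repaired, arguably cleaner.

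There is, however, a real gap. You assert that ``an orientation-reversing involution on a finite cyclic word always has a fixed element'' and on that basis rule out zero self-symmetric configurations in the symmetric case. This is false when $T$ is even: labelling the cycle $v_0\to\cdots\to v_{T-1}\to v_0$, reversibility forces $\sigma(v_i)=v_{c-i}$ for some constant $c$, and the fixed-vertex equation $2i\equiv c\pmod{T}$ has no solution when $T$ is even and $c$ is odd. That missing case is exactly S-III, which your argument as written never produces for $T\geq 4$. Your visible unease in the S-I paragraph (``careful: in S-I the $B$ is not self-symmetric\ldots'') is a symptom of the same oversight; relatedly, your claim that $\sigma$ preserves the types $B$ and $D$ is false (on a cycle of period $\geq 3$, the $\sigma$-image of a $B$ is its successor, which is a $D$ by Corollary \ref{cor_c3props}(iv)).

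The fix lives inside your own framework and also dissolves the ``hard bookkeeping'' you anticipate. On a cycle of period $\geq 3$, a configuration is of type $C$ if and only if it is a $\sigma$-fixed vertex, and of type $B$ if and only if it is the tail of a $\sigma$-fixed edge (since $(x,y)\in B\Leftrightarrow \phi(x)=\mathbb{1}\Leftrightarrow (x,y)\to(y,x)=\sigma((x,y))$). The standard trichotomy for reflections of a $T$-cycle then reads off the classification immediately: $T$ odd gives one fixed vertex and one fixed edge (one $C$, one $B$: S-I); $T$ even with $c$ even gives two fixed vertices and no fixed edge (two $C$'s, no $B$: S-II); $T$ even with $c$ odd gives no fixed vertex and two fixed edges (no $C$, two $B$'s: S-III). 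Every remaining configuration is neither a fixed vertex nor the tail of a fixed edge, hence neither $C$ nor $B$, hence $D$. No further casework is needed.
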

\begin{proof}
Let $\mathcal{C}$ be a limit cycle of Q2R with period $T\in\mathbb{N}$.\\ 
If $T=1$ or $T=2$ then, by definition, $\mathcal{C}$ 
is of type S-I or S-III, respectively.\\
Let $T\geq 3$, { then}, by Corollary \ref{cor_c3props}-(i), 
$\mathcal{C}$ has at least one configuration of the type $D$, $(x^0,y^0)$, which, according with Corollary 
\ref{cor_c3props}-(iii), it may come from a  { configuration of the } type $B$, $C$ or $D$.  Previous statement
allows us to consider only three cases for the limit cycle $\mathcal{C}$:
\begin{itemize}
\item[(C1)] $\mathcal{C}$  has only configurations of type $D$. 
In this case, $\mathcal{C}$ has the form $(x^0,y^0)\to (x^1,y^1)\to \cdots \to (x^p,y^p)\to 
(x^0,y^0)$ with $p=T-1$ as in Figure \ref{fig_cycletype}d)-right, {\it i.e.}, 
$\mathcal{C}$ is of type AS. \footnote{Note that, by Corollary \ref{cor_sec-sim-trans}, 
the symmetric configurations of the previous limit cycle produces a different limit cycle 
$\mathcal{C}'$ of the form $(y^0,x^0)\to (y^p,x^p)\to\cdots \to (y^1,x^1)\to (y^0,x^0)$ also 
composed  by configurations of type $D$, as shown in Figure \ref{fig_cycletype}-d (left), {\it i.e.}, 
$\mathcal{C}'$ is also of type AS.}

\item[(C2)] $\mathcal{C}$ has a configuration $(x^0,y^0)\in D$ coming from a configuration 
$(y^0,x^0)\in B$, as in Figures \ref{fig_cycletype}-a) or \ref{fig_cycletype}c). Because of Corollary \ref{cor_c3props}-(ii), a configuration of type $D$ could (eventually) evolve to another configuration of type $D$, so, we can consider $p\in\mathbb{N}_0$ 
as the maximum time step such that $[(x^0,y^0)\to (x^1,y^1)\to \cdots \to (x^p,y^p)]\in\mathcal{C}$ 
and $(x^t,y^t)\in D$, $\forall i\in\{0,...,p\}$. Hence, we have only two possible evolutions for 
the configuration $(x^p,y^p)\in D$:
\begin{itemize}
\item[(i)] $(x^p,y^p)\to (x^{p+1},y^{p+1})\in B$.\\
In this case, by Corollary \ref{cor_c3props}-(iv), $(x^{p+1},y^{p+1})\to (y^{p+1},x^{p+1})\in D$ 
and $(y^{p+1},x^{p+1})\to(y^p,x^p)\to\cdots \to (y^1,x^1)\to (y^0,x^0)$ (by Corollary \ref{cor_sec-sim-trans}), 
completing $\mathcal{C}$ with an even period $T=2(p+2)$ as shown  in Figure \ref{fig_cycletype}c), i.e., 
$\mathcal{C}$ is of type S-III.

\item[(ii)] $(x^p,y^p)\to (x^{p+1},y^{p+1})\in C$.\\ 
In this case, $(x^{p+1},y^{p+1})=(x^p,x^p)$ and, again,  the Corollary \ref{cor_sec-sim-trans} justifies 
both: $(x^p,x^p)\to (y^p,x^p)$ and $(y^{p},x^{p})\to(y^{p-1},x^{p-1})\to\cdots \to (y^1,x^1)\to (y^0,x^0)$, 
completing $\mathcal{C}$ with an odd period $T=2(p+1)+1$ as shown  in Figure \ref{fig_cycletype}a), i.e., 
$\mathcal{C}$ is of type S-I. 
\end{itemize}
\item[(C3)] $\mathcal{C}$ has a configuration $(x^0,y^0)\in D$ coming from a configuration 
$(y^0,y^0)\in C$. By considering $p\in\mathbb{N}_0$ the same as (C2), again we have only two possible 
evolutions for $(x^p,y^p)\in D$: $(x^{p+1},y^{p+1})\in B$ or $(x^{p+1},y^{p+1})\in C$, 
but the analysis in both cases are similar to (i) and (ii) of the previous case (C2), respectively.  In the case (i), $\mathcal{C}$ 
ends being of type S-I with an odd period $T=2(p+1)+1$ as shown  in Figure \ref{fig_cycletype}a) 
while that, for the case (ii), $\mathcal{C}$ ends up being of type S-II with an even period 
$T=2(p+2)$ as shown in Figure \ref{fig_cycletype}b).
\end{itemize}
\end{proof}

\begin{rem}\label{rem_cyclededuction}
Let $\mathcal{C}$ be a limit cycle of Q2R with period $T\in\mathbb{N}$ and $(x,y)\in\mathcal{C}$. 
%Notice  how some particular information about $T$ and $(x,y)$ can help us to deduce from what type is $\mathcal{C}$:}
Some particular information about $T$ and the configuration $(x,y)$ can help us to deduce the particular type of $\mathcal{C}$.  In fact:
%Let $\mathcal{C}$ be a limit cycle of Q2R with period $T\in\mathbb{N}$ and $(x,y)\in\mathcal{C}$, then,

%\begin{itemize}
%\item If $T$ is odd, then $\mathcal{C}$ could be of type S-I or AS only. If $T$ is even, then $\mathcal{C}$ 
%could be of type S-II, S-III or AS.
%\item If $(x,y)\in B$ then: 
%\begin{itemize}
% \item If there exists $(x',y')\in\mathcal{C}$ such that $[(x',y')\neq (x,y)]\wedge [(x',y')\in B]$, then $\mathcal{C}$ 
% is of type S-III. 
% \item Otherwise, $\mathcal{C}$ is of type S-I. 
%\end{itemize}
%\item If $x=y$ ({\it i.e.}, $(x,y)\in\{A,C\}$), then: 
%\begin{itemize}
%\item If there exists $(x',y')\in\mathcal{C}$ such that $[(x',y')\neq (x,y)]\wedge [x'=y']$, then $\mathcal{C}$ is of type S-II. 
%\item Otherwise, 
%$\mathcal{C}$ is of type S-I. 
%\end{itemize}
%\end{itemize}

\begin{itemize}
\item If $T$ is odd, then $\mathcal{C}$ could be of type S-I or AS only. If $T$ is even, then $\mathcal{C}$ 
could be of type S-II, S-III or AS.
\item If $(x,y)\in B$ then, if there exists $(x',y')\in\mathcal{C}$ such that $[(x',y')\neq (x,y)]\wedge [(x',y')\in B]$, then $\mathcal{C}$ 
 is of type S-III, otherwise, $\mathcal{C}$ is of type S-I. 
\item If $x=y$ ({\it i.e.}, $(x,y)\in\{A,C\}$), then, if there exists $(x',y')\in\mathcal{C}$ such that $[(x',y')\neq (x,y)]\wedge [x'=y']$, then $\mathcal{C}$ is of type S-II, otherwise,  $\mathcal{C}$ is of type S-I. 
\end{itemize}

\end{rem}

\begin{rem}
The asymmetric limit cycles always appear in pairs (in the sense that all the symmetric configurations 
of one limit cycle belongs into the other limit cycle). Furthermore, we know that $(x,y)\in D$ 
means $[x\neq y]\wedge [\phi (x)\neq\mathbb{1}]$ (regardless the value of $\phi (y)$) but, if $(x,y)$ 
belongs {to} an asymmetric limit cycle, then $(x,y)\in D$ and, necessarily, $\phi (y)\neq\mathbb{1}$, i.e., 
\begin{eqnarray}
[x\neq y]\wedge [\phi (x)\neq\mathbb{1}] \wedge [\phi (y)\neq\mathbb{1}].\label{eq_D-phi}
\end{eqnarray}
The converse relation is not true, i.e., if $(x,y)$ satisfy (\ref{eq_D-phi}) then not necessarily 
$(x,y)$ belongs to an asymmetric limit cycle (see the limit cycle (\ref{cycleSII})  in Appendix \ref{AppSII} as a counterexample).
\end{rem}

\begin{defn}\label{def_nu-ntype}
We denote by $\nu_{\rm SI}(T)$, $\nu_{\rm SII}(T)$, $\nu_{\rm SIII}(T)$ and $\nu_{\rm AS}(T)$ as the number of configurations belonging to a limit cycle of period $T$ and of type S-I, S-II, S-III and AS, respectively. Similarly, $n_{\rm SI}(T)$, $n_{\rm SII}(T)$, $n_{\rm SIII}(T)$ and $n_{\rm AS}(T)$ denote the number of limit cycles of period $T$ and type S-I, S-II, S-III and AS, respectively. Notice that:
\begin{eqnarray*}
 \nu(T) &=& \nu_{\rm SI}(T)+\nu_{\rm SII}(T)+\nu_{\rm SIII}(T)+\nu_{\rm AS}(T)\\
 n(T) &=& n_{\rm SI}(T)+n_{\rm SII}(T)+n_{\rm SIII}(T)+n_{\rm AS}(T)\\
 n_{q}(T) &=& \frac{\nu_{q}(T)}{T}, \quad \text{with } q\in\{{\rm SI},{\rm SII}, {\rm SIII},{\rm AS}\}.
\end{eqnarray*}
\end{defn}

The following result shows that the sets S-I and S-II are naturally rare in the whole phase space.
\begin{thm}\label{teo_rica}
$\displaystyle\sum_{T\geq 1} \left(n_{\rm SI}(T) + 2 n_{\rm SII}(T) \right) = | \Omega^2_{xx}|=2^N$.
\end{thm}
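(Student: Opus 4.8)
The plan is to exploit the fact that the limit cycles and fixed points partition the whole phase space $\Omega^2$ (a consequence of reversibility, Lemma \ref{LemRev}), so that they also partition the subset $\Omega^2_{xx}$ of self-symmetric configurations; then I would count, for each of the four types of attractor provided by Theorem \ref{thm_cycle-clasiff}, exactly how many of its configurations lie in $\Omega^2_{xx}$, and add up.

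First I would record the elementary observation that a configuration $(x,y)$ is self-symmetric, i.e. $(x,y)\in\Omega^2_{xx}$, precisely when $x=y$, which by Definition \ref{abcd} means $(x,y)$ is of type $A$ or of type $C$; configurations of type $B$ or $D$ always satisfy $x\neq y$ and hence never lie in $\Omega^2_{xx}$. Consequently, for any limit cycle $\mathcal{C}$, the number $|\mathcal{C}\cap\Omega^2_{xx}|$ equals the number of configurations of type $A$ or $C$ occurring in $\mathcal{C}$.

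Next I would run through the topological classification of Theorem \ref{thm_cycle-clasiff}. A cycle of period $T=1$ is a fixed point, which by Remark \ref{rem_p1} is of type $A$, and thus contributes exactly one element to $\Omega^2_{xx}$ (it is counted in $n_{\rm SI}(1)$). A type S-I cycle of period $T\geq 3$ has, by its defining topology, exactly one configuration of type $C$, one of type $B$, and the rest of type $D$ — no type $A$ — so it contributes exactly one element of $\Omega^2_{xx}$. A type S-II cycle has exactly two configurations of type $C$ and none of type $A$, hence contributes exactly two. A type S-III cycle (including the case $T=2$, where $P_2\subset B$ by Remark \ref{rem_p2}) consists only of configurations of type $B$ and $D$, and a type AS cycle consists only of configurations of type $D$; in both cases no configuration has $x=y$, so the contribution to $\Omega^2_{xx}$ is zero.

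Finally I would assemble the count: since every configuration of $\Omega^2$, in particular every one of $\Omega^2_{xx}$, belongs to exactly one limit cycle, summing $|\mathcal{C}\cap\Omega^2_{xx}|$ over all limit cycles $\mathcal{C}$ gives $|\Omega^2_{xx}|$, while the term-by-term count above gives $\sum_{T\geq 1}\bigl(n_{\rm SI}(T)\cdot 1 + n_{\rm SII}(T)\cdot 2 + n_{\rm SIII}(T)\cdot 0 + n_{\rm AS}(T)\cdot 0\bigr)$. Combining with $|\Omega^2_{xx}|=2^N$ from Definition \ref{defOmegaxx} yields the claim. I do not expect a genuine obstacle here: the only point requiring care is the bookkeeping at the small periods $T=1$ and $T=2$, and checking that no self-symmetric configuration is hidden among the type-$D$ or type-$B$ configurations of an S-I, S-III or AS cycle — which is immediate from the fact that those types all force $x\neq y$.
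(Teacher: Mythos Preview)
Your argument is correct and is essentially the same as the paper's: both count, cycle by cycle, how many self-symmetric configurations each attractor contains (one for S-I, two for S-II, none for S-III or AS) and use that the cycles partition $\Omega^2_{xx}$. The paper compresses this into a two-sentence appeal to Remark~\ref{rem_cyclededuction}, while you spell out the bookkeeping via Definition~\ref{abcd} and the topologies in Theorem~\ref{thm_cycle-clasiff}, but the underlying idea is identical.
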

\begin{proof}
Because of the third point of Remark \ref{rem_cyclededuction}, one has that if $(x,x)\in\Omega^2_{xx}$, then, $(x,x)$ belongs only to a cycle of type S-I or S-II. Therefore, this Theorem is true because the limit cycles of type S-I include one configuration in $\Omega^2_{xx}$ while that those cycles of type S-II include two configurations in $\Omega^2_{xx}$.
\end{proof}

\subsection{On the size of attractor's sets: a combinatory approach.}
In this section we show that the number of fixed points, $|P_1|$, will be the fundamental quantity that
determine the size of the different sub-spaces of $\Omega^2$. Our starting point relates the 
neighborhoods of $x\in\Omega$ with those of $-x\in\Omega$.

\begin{prop}\label{prop_x-x}
Let $x\in\Omega$. Then: $\phi (x)=\mathbb{1}\Leftrightarrow \phi (-x)=\mathbb{1}$.
\end{prop}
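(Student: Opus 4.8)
The plan is to prove the \emph{stronger} statement that $\phi(-x)=\phi(x)$ for every $x\in\Omega$, from which the asserted equivalence is an immediate special case. The whole argument is a componentwise check: negating a state merely flips the sign of each von Neumann neighborhood sum, since for any node $\bm k$
$$\sum_{{\bm i}\in V_{\bm k}} (-x)_{\bm i} \;=\; -\sum_{{\bm i}\in V_{\bm k}} x_{\bm i},$$
and a real number vanishes if and only if its negative does. Hence $\sum_{{\bm i}\in V_{\bm k}} x_{\bm i}=0$ exactly when $\sum_{{\bm i}\in V_{\bm k}} (-x)_{\bm i}=0$, so by the very definition of $\phi$ we obtain $[\phi(x)]_{\bm k}=-1 \Leftrightarrow [\phi(-x)]_{\bm k}=-1$, i.e. $[\phi(x)]_{\bm k}=[\phi(-x)]_{\bm k}$ for all $\bm k$, which is precisely $\phi(x)=\phi(-x)$.

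Given this identity, the proposition follows at once: $\phi(x)=\mathbb{1}$ if and only if $\phi(-x)=\mathbb{1}$, because the two states on the left are literally the same state of $\Omega$. There is no real obstacle here; the only points worth a word of care are that the argument only uses linearity of the sum (so nothing about $|V_{\bm k}|=4$ is needed, and the same proof would work for any neighborhood rule), and that the periodic boundary conditions are already encoded in $V_{\bm k}$, so boundary nodes require no separate treatment. This identity is exactly the ``starting point'' relating the neighborhoods of $x$ and $-x$ that will be used afterwards when pairing up configurations $(x,x)$ and $(-x,-x)$ to count fixed points.
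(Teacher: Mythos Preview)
Your proof is correct and in fact establishes the stronger identity $\phi(-x)=\phi(x)$, from which the proposition is immediate. The paper's own argument is essentially the same observation, phrased combinatorially for the specific von Neumann case: a null neighborhood of $x$ contains exactly two $+1$'s and two $-1$'s, and this pattern is preserved under $x\mapsto -x$. Your version via $\sum_{{\bm i}\in V_{\bm k}}(-x)_{\bm i}=-\sum_{{\bm i}\in V_{\bm k}}x_{\bm i}$ is slightly more general (it does not use $|V_{\bm k}|=4$) and yields the full equality $\phi(x)=\phi(-x)$ rather than only the biconditional with $\mathbb{1}$; the paper's phrasing, on the other hand, makes explicit the ``two up, two down'' structure of null neighborhoods that is visually helpful in this lattice setting. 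Either way the content is the same one-line check.
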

\begin{proof}
The null neighborhoods of $x\in\Omega$ have two -1's and two 1's.  This is kept for 
$-x\in\Omega$, so: $\phi (x)\neq\mathbb{1} \Leftrightarrow \phi (-x)\neq\mathbb{1}$.
\end{proof}
 
The second result relates fixed points with limit cycles of period 2.
\begin{thm}\label{thm_fp-cl2}
$\{(x,x),(y,y)\}\subseteq P_1 \Leftrightarrow \{(x,y),(y,x)\}\subseteq P_2$ 
\end{thm}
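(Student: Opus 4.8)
The plan is to reduce the statement to the two characterizations already proved, Theorem \ref{MMM1} for fixed points and Theorem \ref{MMM2} for period-two cycles, so that the equivalence becomes a direct comparison of the defining conditions. One preliminary remark: writing $\{(x,x),(y,y)\}$ and $\{(x,y),(y,x)\}$ as doubletons tacitly assumes $x\neq y$, and this hypothesis is genuinely needed — for $x=y$ with $\phi(x)=\mathbb{1}$ the left-hand side holds but $(x,x)$ can never belong to $P_2\subset\Omega^2_{xy}$ — so I would make $x\neq y$ explicit.

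For the implication from left to right, assume $(x,x),(y,y)\in P_1$. By Theorem \ref{MMM1} this is equivalent to $\phi(x)=\mathbb{1}$ together with $\phi(y)=\mathbb{1}$ (the conditions $x=x$ and $y=y$ being vacuous). Combining these two equalities with the standing hypothesis $x\neq y$ gives precisely the right-hand side of the equivalence in Theorem \ref{MMM2} for the configuration $(x,y)$, hence $(x,y)\in P_2$. Since the predicate $[x\neq y]\wedge[\phi(x)=\mathbb{1}]\wedge[\phi(y)=\mathbb{1}]$ is symmetric under exchanging $x$ and $y$, the same theorem applied to $(y,x)$ yields $(y,x)\in P_2$, so $\{(x,y),(y,x)\}\subseteq P_2$.

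For the converse, assume $\{(x,y),(y,x)\}\subseteq P_2$. Applying Theorem \ref{MMM2} to $(x,y)$ gives $x\neq y$, $\phi(x)=\mathbb{1}$ and $\phi(y)=\mathbb{1}$. Feeding $\phi(x)=\mathbb{1}$ and $\phi(y)=\mathbb{1}$ into Theorem \ref{MMM1} then gives $(x,x)\in P_1$ and $(y,y)\in P_1$, i.e. $\{(x,x),(y,y)\}\subseteq P_1$, which closes the argument.

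There is essentially no obstacle: all the substance is carried by Theorems \ref{MMM1} and \ref{MMM2}, and the only point requiring care is the $x\neq y$ convention implicit in the set notation. If a self-contained argument were preferred, one could instead observe via Remark \ref{rem_iter} that $\phi(x)=\mathbb{1}$ and $\phi(y)=\mathbb{1}$ are together equivalent to the two one-step evolutions $(x,y)\to(y,x)$ and $(y,x)\to(x,y)$, so the orbit $(x,y)\to(y,x)\to(x,y)$ closes and has period exactly $2$ precisely when $x\neq y$; but routing through the two characterization theorems is the shortest path and keeps the proof uniform with the rest of the section.
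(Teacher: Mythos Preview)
Your proof is correct and follows exactly the same route as the paper: both sides are rewritten via Theorems \ref{MMM1} and \ref{MMM2} as the single condition $[x\neq y]\wedge[\phi(x)=\mathbb{1}]\wedge[\phi(y)=\mathbb{1}]$, and the equivalence drops out. Your explicit discussion of the $x\neq y$ hypothesis hidden in the doubleton notation is a useful clarification that the paper leaves implicit.
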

\begin{proof}
\begin{eqnarray}
\{(x,x),(y,y)\}\subseteq P_1 & \Leftrightarrow & [x\neq y]\wedge [\phi (x)=\mathbb{1}]\wedge [\phi (y)=\mathbb{1}] \text{, by Theorem \ref{MMM1}.}\nonumber\\
& \Leftrightarrow & \{(x,y),(y,x)\}\subseteq P_2 \text{, by Theorem \ref{MMM2}.}\nonumber
\end{eqnarray} 
\end{proof}

 As a first consequence of the previous Theorem, we have the following result:
\begin{cor}
Let $x\in\Omega$ such that $\phi (x)=\mathbb{1}$. Then, in the Q2R dynamics: 
\begin{itemize}
\item $(x,x)$ and $(-x,-x)$ are two different fixed points, as well as; 
\item $(x,-x)\to (-x,x)\to (x,-x)$ is a limit cycle of period 2.
\end{itemize}
\end{cor}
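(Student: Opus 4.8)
The plan is to derive everything directly from the characterizations already established: Theorem~\ref{MMM1} for fixed points, Proposition~\ref{prop_x-x} for the behaviour of $\phi$ under $x\mapsto -x$, and Theorem~\ref{thm_fp-cl2} (or, equivalently, Theorem~\ref{MMM2}) for the period-two claim. No new idea is needed; the task is just to assemble these pieces and check the (trivial) distinctness conditions.

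First I would record that, since $x\in\Omega$ is a $\pm1$-vector on a lattice with $N=L^2\geq 1$ nodes, the state $-x=[-\mathbb{1}]\odot x$ differs from $x$ in every coordinate, so $x\neq -x$; in particular $(x,x)$ and $(-x,-x)$ are two distinct configurations of $\Omega^2$, and likewise $(x,-x)\neq(-x,x)$. By hypothesis $\phi(x)=\mathbb{1}$, and Proposition~\ref{prop_x-x} then gives $\phi(-x)=\mathbb{1}$. Applying Theorem~\ref{MMM1} twice --- once to $(x,x)$ and once to $(-x,-x)$, both of which have equal states and trivial $\phi$ --- yields $(x,x)\in P_1$ and $(-x,-x)\in P_1$, which proves the first bullet.

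For the second bullet I would feed $x$ and $y:=-x$ into Theorem~\ref{thm_fp-cl2}: having just shown $\{(x,x),(-x,-x)\}\subseteq P_1$, the theorem gives $\{(x,-x),(-x,x)\}\subseteq P_2$, i.e. $(x,-x)\to(-x,x)\to(x,-x)$ is a limit cycle of period $2$. Alternatively one can verify this by a one-line direct computation with rule~(\ref{q2r}) and Remark~\ref{rem_iter}: since $\phi(x)=\phi(-x)=\mathbb{1}$, one gets $(x,-x)\to\bigl((-x)\odot\phi(x),\,x\bigr)=(-x,x)$ and $(-x,x)\to\bigl(x\odot\phi(-x),\,-x\bigr)=(x,-x)$, and since $x\neq -x$ these two configurations are distinct, so the orbit has period exactly $2$.

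There is essentially no obstacle here: the statement is a repackaging of Theorems~\ref{MMM1} and \ref{thm_fp-cl2} (or \ref{MMM2}) together with Proposition~\ref{prop_x-x}. The only point that deserves an explicit (if trivial) remark is the distinctness of $x$ and $-x$, which is what guarantees that we genuinely obtain \emph{two} fixed points and a \emph{genuine} period-two cycle rather than degenerate coincidences.
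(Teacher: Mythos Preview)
Your proposal is correct and follows essentially the same route as the paper: use Proposition~\ref{prop_x-x} to get $\phi(-x)=\mathbb{1}$, apply Theorem~\ref{MMM1} to obtain the two fixed points, and then invoke Theorem~\ref{thm_fp-cl2} (with the direct Q2R computation as confirmation) for the period-two cycle. Your explicit remark that $x\neq -x$ is a small but welcome addition ensuring the distinctness claims are not left implicit.
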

\begin{proof}
Let $x\in\Omega$ such that $\phi (x)=\mathbb{1}$, then, by Theorem \ref{MMM1} and Proposition \ref{prop_x-x}, 
$\{(x,x),(-x,-x)\}\subseteq P_1$. {Therefore}, by Theorem \ref{thm_fp-cl2}, $\{(x,-x),(-x,x)\}\subseteq P_2$. Finally, 
 applying the Q2R rule, we have that $(x,-x)\to (-x,x)\to (x,-x)$ is, in fact, a limit cycle of period 2.
\end{proof}

\subsection{Size of the sets $P_1$, $P_2$ and $P_3$.}
 A second consequence of Theorem \ref{thm_fp-cl2} shows that the number of limit cycles of period 2 is larger than the number of fixed points, moreover:

\begin{cor}\label{cor:size-p2}
$|P_2|=|P_1|(|P_1|-1)$. 
\end{cor}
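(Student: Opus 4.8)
The plan is to count ordered pairs of distinct fixed points and match them bijectively with period-two configurations via Theorem \ref{thm_fp-cl2}. First I would recall that by Theorem \ref{MMM1} a fixed point is precisely a configuration of the form $(x,x)$ with $\phi(x)=\mathbb{1}$; write $F=\{x\in\Omega : \phi(x)=\mathbb{1}\}$, so that $|P_1|=|F|$. Then I would consider the set of ordered pairs $(x,y)$ with $x,y\in F$ and $x\neq y$; there are exactly $|F|(|F|-1)=|P_1|(|P_1|-1)$ of these.

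Next I would invoke Theorem \ref{thm_fp-cl2}, which states $\{(x,x),(y,y)\}\subseteq P_1 \Leftrightarrow \{(x,y),(y,x)\}\subseteq P_2$. The key observation is that the map sending an ordered pair $(x,y)$ with $x\neq y$ and both in $F$ to the configuration $(x,y)\in\Omega^2$ is a well-defined injection into $P_2$: it is well-defined because $x,y\in F$ forces $[x\neq y]\wedge[\phi(x)=\mathbb{1}]\wedge[\phi(y)=\mathbb{1}]$, which by Theorem \ref{MMM2} is exactly the membership condition for $P_2$; and it is injective because a configuration $(x,y)$ determines its components. For surjectivity, take any $(x,y)\in P_2$; by Theorem \ref{MMM2} we have $x\neq y$, $\phi(x)=\mathbb{1}$ and $\phi(y)=\mathbb{1}$, so $x,y\in F$ with $x\neq y$, and $(x,y)$ is the image of the ordered pair $(x,y)$. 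Hence the map is a bijection and $|P_2|=|P_1|(|P_1|-1)$.

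Alternatively, and perhaps more cleanly, I would phrase it directly at the level of configurations: by Theorem \ref{MMM2}, $(x,y)\in P_2$ iff $x\neq y$ and $\phi(x)=\phi(y)=\mathbb{1}$, i.e.\ iff $x$ and $y$ are two distinct elements of $F$; counting such ordered pairs gives $|F|(|F|-1)$, and $|F|=|P_1|$ by Theorem \ref{MMM1}. This makes the role of Theorem \ref{thm_fp-cl2} implicit but the argument is essentially the same bookkeeping.

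I do not anticipate a genuine obstacle here, since the corollary is a direct combinatorial consequence of the two characterization theorems already proved; the only point requiring a moment's care is that we are counting \emph{ordered} pairs (so that $(x,y)$ and $(y,x)$ are counted separately, matching the two distinct period-two configurations in a single cycle), which is why the count is $|P_1|(|P_1|-1)$ rather than $\binom{|P_1|}{2}$. It is worth noting in passing that this also yields $n(2)=\nu(2)/2=|P_1|(|P_1|-1)/2$ for the number of period-two cycles, consistent with the $\beta^2(\beta^2-1)$ count announced in the introduction once $|P_1|=\beta^2$ is established.
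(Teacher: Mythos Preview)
Your proof is correct and follows essentially the same idea as the paper: both exploit the bijection between period-two configurations and pairs of distinct states in $F=\{x\in\Omega:\phi(x)=\mathbb{1}\}$. The only cosmetic difference is bookkeeping---the paper counts unordered pairs via Theorem~\ref{thm_fp-cl2} to get $\binom{|P_1|}{2}$ period-two cycles and then multiplies by $2$, whereas you count ordered pairs directly (and your alternative phrasing via Theorem~\ref{MMM2} is arguably the cleanest route).
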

\begin{proof}
\begin{eqnarray*}
     |P_2|&=& 2\cdot\text{(number of limit cycles of period 2), by definition of } P_2\nonumber\\ 
     &=& 2\cdot{|P_1| \choose 2}\text{, by Theorem \ref{thm_fp-cl2}}\nonumber\\
     &=& |P_1|(|P_1|-1).
\end{eqnarray*} 
\end{proof}

 Moreover, because of Remark \ref{rem_p1}, $P_1=A=\left\{ (x,y)\in\Omega_{xx}^2  / %[x=y] \wedge 
 \phi(x)=\mathbb{1} \right\}$. 
In other words, $P_1=\{(x,x)\in\Omega^2:\, \phi(x)=\mathbb{1}\}$. Thus, 
$|P_1|=|\Phi_{\mathbb{1}}|$ where $\Phi_{\mathbb{1}} = \left\{ x\in \Omega / \phi(x)=\mathbb{1}\right\}$, {\it i.e.}, $\Phi_{\mathbb{1}}$ is the set of states without null neighborhood. 
% PHI 1 no es necesario definirlo antes

\begin{defn}\label{Def:BW} 

Let $x\in\Omega$. We define the \textit{staggered-state} \cite{takesue2} $x_B$ as the state $x$ restricted to the nodes ${\bm k}\in \{1,...,N\}$ such that
$k_1+k_2$ is even. Analogously is defined the staggered-state $x_W$ but for the nodes ${\bm k}\in \{1,...,N\}$
%{\blue ESTA NOTACION ME PARECE ERRONEA YA QUE 1,2,   .... N son enteros y los k un sitio k1,k2!}
such that 
$k_1+k_2$ is odd. In other words, $x_B$ and $x_W$ can be seen 
as the black and white fields in the ``chessboard $x\in\Omega$'' respectively, and, such that its superposition 
-- denoted with the $\uplus$ symbol -- is $x$, i.e.:
$$x = [x_B\uplus x_W ] \in \Omega.$$
We will use the subindices $(\cdot)_B$ and $(\cdot)_W$ to refer to the corresponding restriction 
of the element $(\cdot)$ which we are working in order that:
$$(\cdot) = [(\cdot)_B\uplus (\cdot)_W] \in \Omega.$$

In particular, we define the \textit{chessboard} states $\mathbb{1}_{BW}$ and $\mathbb{1}_{WB}$ of $\Omega$ as follows:
$$\mathbb{1}_{BW}\equiv[-\mathbb{1}_B]\uplus \mathbb{1}_W \quad\text{ and }\quad \mathbb{1}_{WB}\equiv -\mathbb{1}_{BW}=\mathbb{1}_{B}\uplus [-\mathbb{1}_W].$$
\end{defn}

\begin{rem}\label{rem:defVonNeuman}
 This construction using the staggered-states is particularly useful in the case of the von Neuman neighborhood. Moreover, in the particular case of $L$ being an odd number, the staggered-states loss its utility (see the first fact of Remark \ref{rem_facts}).
\end{rem}

\begin{rem}\label{rem:defBW}
According with the above definition, notice the following statements:
\begin{enumerate}
 \item \label{f1} $\Omega=\Omega_B\uplus \Omega_W$. 
 \item \label{f2} $|\Omega|=2^N \Rightarrow |\Omega_B|=|\Omega_W|=2^{N/2}$.
 \item \label{f3} $x\in\Omega \Leftrightarrow [x_B\in\Omega_B]\wedge [x_W\in\Omega_W]$.
 \item \label{f4} $\phi (\mathbb{1}_{BW})=\phi (\mathbb{1}_{WB})=\mathbb{1}$.
\end{enumerate}
\end{rem}
% The following result establishes that, for all $L$ even, always there
% exist at least 4 fixed points and, at least, 12 limit cycle of period 2.
 %The following two propositions show that Q2R {\it almost always }has fixed points and limit cycles of period 2 and 3. % {\blue By  {\it almost always], here we consider the possibility that the system size, $L$, may be any integer.}
  The following two propositions establish conditions for the existence of fixed points and limit cycles of period 2 and 3.

\begin{prop}\label{prop:p1-p3always} 
$\forall L$ even, $4\leq |P_1| < |P_2|$.  
\end{prop}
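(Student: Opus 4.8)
The plan is to exhibit at least four distinct states $x \in \Omega$ with $\phi(x) = \mathbb{1}$, which by the identification $|P_1| = |\Phi_{\mathbb{1}}|$ established just above yields $|P_1| \geq 4$; then the strict inequality $|P_1| < |P_2|$ follows immediately from Corollary \ref{cor:size-p2}, since $|P_2| = |P_1|(|P_1|-1)$ and $|P_1| \geq 4$ forces $|P_1|(|P_1|-1) \geq 12 > |P_1|$ (indeed $|P_1|-1 \geq 3 > 1$). So the entire content is the lower bound $|P_1| \geq 4$.

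First I would note the obvious pair: by Remark \ref{rem:phi1-1}, $\phi(\mathbb{1}) = \phi(-\mathbb{1}) = \mathbb{1}$, so $\mathbb{1}$ and $-\mathbb{1}$ are two distinct elements of $\Phi_{\mathbb{1}}$. Next, using the staggered-state machinery of Definition \ref{Def:BW}, the chessboard states $\mathbb{1}_{BW}$ and $\mathbb{1}_{WB} = -\mathbb{1}_{BW}$ also lie in $\Phi_{\mathbb{1}}$: this is precisely fact \ref{f4} of Remark \ref{rem:defBW}, $\phi(\mathbb{1}_{BW}) = \phi(\mathbb{1}_{WB}) = \mathbb{1}$. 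These four states $\{\mathbb{1}, -\mathbb{1}, \mathbb{1}_{BW}, \mathbb{1}_{WB}\}$ are pairwise distinct whenever $L$ is even and $L \geq 2$ (so $N = L^2 \geq 4$), because $\mathbb{1}_{BW}$ is neither constant nor equal to its own negation when there is at least one black and one white node. Hence $|P_1| = |\Phi_{\mathbb{1}}| \geq 4$.

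Finally, assembling the pieces: $4 \leq |P_1|$ gives $|P_2| = |P_1|(|P_1|-1) \geq 4\cdot 3 = 12$, and since $|P_1| - 1 \geq 3 > 1$ we get $|P_1| < |P_1|(|P_1|-1) = |P_2|$, which is the claimed chain $4 \leq |P_1| < |P_2|$. The only mild subtlety — and the one step worth spelling out carefully — is verifying that the four listed states really are distinct for every even $L \geq 2$; everything else is a direct appeal to Remark \ref{rem:defBW}-\ref{f4}, Remark \ref{rem:phi1-1}, and Corollary \ref{cor:size-p2}. I would also remark that the four states come in the two antipodal pairs $\{\mathbb{1}, -\mathbb{1}\}$ and $\{\mathbb{1}_{BW}, \mathbb{1}_{WB}\}$, consistent with Proposition \ref{prop_x-x}, which foreshadows the sharper counting ($|P_1| = \beta^2 = 4k^2$) proved later.
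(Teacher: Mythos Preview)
Your proof is correct and follows essentially the same approach as the paper: exhibit the four states $\mathbb{1}$, $-\mathbb{1}$, $\mathbb{1}_{BW}$, $\mathbb{1}_{WB}$, invoke $\phi(\cdot)=\mathbb{1}$ for each via Definition~\ref{rem:phi1-1} and Remark~\ref{rem:defBW}-(\ref{f4}), conclude $|P_1|\geq 4$, and then apply Corollary~\ref{cor:size-p2}. Your version is slightly more explicit about the pairwise distinctness of the four states and about why $|P_1|\geq 4$ forces the strict inequality, but the argument is the same.
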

\begin{proof}
Because of definition \ref{rem:phi1-1} and remark \ref{rem:defBW}: $\phi(-\mathbb{1})=\phi(\mathbb{1})=\phi (\mathbb{1}_{BW})=\phi (\mathbb{1}_{WB})=\mathbb{1}$.  Hence,
by Theorem \ref{MMM1}: 
$$\left\{(\mathbb{1},\mathbb{1}),(-\mathbb{1},-\mathbb{1}),(\mathbb{1}_{BW},\mathbb{1}_{BW}),(\mathbb{1}_{WB},\mathbb{1}_{WB})\right\}\subseteq P_1,$$
therefore, $4\leq |P_1| < |P_1|(|P_1|-1)=|P_2|$ (where the last equality is by Corollary \ref{cor:size-p2}).
\end{proof}

% Moreover, the following result establishes that, for all $L\geq 4$, always there exist at least 
% one limit cycle of period 3.
\begin{prop}\label{prop:p3always}
$\forall L\geq 4$,  $|P_3|\geq 6N= 6 L^2$.  
\end{prop}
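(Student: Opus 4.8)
The plan is to exhibit $6N$ distinct configurations lying in limit cycles of period three, by systematically translating and flipping the example already displayed in Remark~\ref{rem_p2}. For each node $\bm k\in\{1,\dots,N\}$ let $x_{\bm k}\in\Omega$ be the state equal to $-1$ on the $2\times 2$ block of nodes whose upper-left corner is $\bm k$ (row and column indices taken modulo $L$) and equal to $+1$ on every other node. The first step is to check that $\phi(x_{\bm k})=x_{\bm k}$ whenever $L\ge 4$: each of the four block nodes has exactly two neighbors inside the block and two outside, so its neighbor-sum vanishes; a node outside the block has at most one neighbor in the block, and hence neighbor-sum $+2$ or $+4$, which is nonzero. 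The only delicate point is the claim that an outside node never sees two block nodes — this is where $L\ge 4$ enters, since for $L=2,3$ the torus wraps tightly enough that an outside node can see two block nodes and acquire a spurious null neighborhood. Granting it, $\phi(x_{\bm k})$ is $-1$ exactly on the block, i.e. $\phi(x_{\bm k})=x_{\bm k}$.

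With this in hand, the computation of Remark~\ref{rem_p2} (with $x$ replaced by $x_{\bm k}$) gives $(\mathbb{1},x_{\bm k})\to(x_{\bm k},\mathbb{1})\to(x_{\bm k},x_{\bm k})\to(\mathbb{1},x_{\bm k})$; since $x_{\bm k}\neq\mathbb{1}$ the three configurations are pairwise distinct, so (Q2R being reversible) this is a limit cycle of period exactly $3$ and its three configurations belong to $P_3$. A second such cycle comes for free from the global spin-flip symmetry of the rule: from $(a,b)\to(c,a)$, i.e. $c=b\odot\phi(a)$, together with $\phi(-a)=\phi(a)$ (the states $a$ and $-a$ have the same null neighborhoods, as in the proof of Proposition~\ref{prop_x-x}), one gets $(-a,-b)\to(-c,-a)$; applying this to the cycle above yields the period-three cycle $(-\mathbb{1},-x_{\bm k})\to(-x_{\bm k},-\mathbb{1})\to(-x_{\bm k},-x_{\bm k})\to(-\mathbb{1},-x_{\bm k})$, contributing three more configurations of $P_3$.

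It remains to argue that the $6N$ configurations collected this way — three from each of the two cycles attached to each of the $N$ nodes $\bm k$ — are pairwise distinct. Within a cycle this is immediate; the two cycles attached to a given $\bm k$ are disjoint because one contains a configuration whose first state is $\mathbb{1}$ and the other one whose first state is $-\mathbb{1}$; and the cycles attached to distinct $\bm k\neq\bm k'$ are disjoint because the only states occurring in them are $\mathbb{1},-\mathbb{1},x_{\bm k},-x_{\bm k}$ and $\mathbb{1},-\mathbb{1},x_{\bm k'},-x_{\bm k'}$ respectively, and for $L\ge 4$ the state $x_{\bm k}$ (which has exactly four entries equal to $-1$) cannot coincide with $x_{\bm k'}$, nor with $-x_{\bm k'}$ (which has $L^2-4\ge 12$ entries equal to $-1$), nor with $\pm\mathbb{1}$, and symmetrically for $-x_{\bm k}$. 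Therefore $|P_3|\ge 6N=6L^2$. The only step that is not pure bookkeeping — and the only place where the hypothesis $L\ge 4$ is genuinely used — is the verification $\phi(x_{\bm k})=x_{\bm k}$; I expect it to be the sole real obstacle, and it is dispatched by a short case analysis of von Neumann neighbors on the torus.
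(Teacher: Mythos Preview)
Your proof is correct and follows exactly the same route as the paper: translate the $2\times 2$ block of Remark~\ref{rem_p2} to each of the $N$ lattice positions, double the count by the global sign flip, and conclude $|P_3|\ge 6N$. Your write-up is in fact more complete than the paper's, which asserts the $N$ translates and their sign-flips give $2N$ distinct period-three cycles without verifying it; you supply the missing checks that $\phi(x_{\bm k})=x_{\bm k}$ requires $L\ge 4$ and that the $6N$ configurations are pairwise distinct.
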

\begin{proof}
 Let $L\geq 4$, and consider an state $x\in\Omega$ of Remark \ref{rem_p2} where we have proved that
$\{(\mathbb{1},x),(x,\mathbb{1}),(x,x)\}\subseteq P_3$, {\it  i.e.}, $|P_3|\geq 3$.

Because the block of -1s may be situated at any point of $x$, there are $N$ equivalent configurations, moreover, because the inverse configuration $(-\mathbb{1},-x)\in P_3$, there are at least $2 N$ limit cycles of period 3. Therefore, 
$|P_3|\geq 6N$.
\end{proof}

\begin{defn}
We denote by $B_{\mathbb{1}}$ and $W_{\mathbb{1}}$ as the sets of staggered-states $x_B\in\Omega_B$ and $x_W\in\Omega_W$ 
without null neighborhoods, respectively. That is: 
\begin{eqnarray}
B_{\mathbb{1}} &=\{u\in\Omega_B:\exists x\in\Omega, [x_B=u] \wedge [\phi(x)_W=\mathbb{1}_W]\}\nonumber\\
W_{\mathbb{1}} &=\{v\in\Omega_W:\exists x\in\Omega, [x_W=v] \wedge [\phi(x)_B=\mathbb{1}_B]\}\nonumber
\end{eqnarray}
\end{defn}

\begin{ex}
Consider the following state $x\in\{-1,1\}^{16}$ and its corresponding state $\phi (x)\in\{-1,1\}^{16}$:
\begin{eqnarray}
x= \left[ \begin{array}{cccc} 
 \fbox{1} & 1 & \fbox{1} & 1 \\
1 & \fbox{-1} & 1 & \fbox{1} \\
\fbox{1} & 1 &\fbox{-1} & 1 \\
1 & \fbox{1} & 1 & \fbox{1} \\
\end{array} \right] \text{ and }  
\phi (x) = \left[ \begin{array}{cccc} 
 1 & \fbox{1} & 1 & \fbox{1}\\
\fbox{1} & 1 & \fbox{-1} & 1\\
1 & \fbox{-1} & 1 & \fbox{1}\\
\fbox{1} & 1 & \fbox{1} & 1\\
\end{array} \right]\nonumber
\end{eqnarray}
The values in the boxes of $x$ {correspond to} the staggered-state $x_B\in\{-1,1\}^8$ while the other values that are not in boxes  correspond to
 $x_W\in\{-1,1\}^8$.  Similarly, the values in the boxes of $\phi(x)$ correspond to $\phi(x)_W$ 
and were obtained with the values of $x_B$ while the other values, that are not in the boxes of $\phi(x)$, correspond to 
$\phi(x)_B$ and were obtained with the values of $x_W$.  
\end{ex}

\begin{rem}\label{rem_facts}
The previous example allows us to understand the following facts, that are direct consequences of the above 
definitions: 
\begin{enumerate}
 \item \label{f1bis} The neighborhoods in $x_B$ are independent of those in $x_W$, when $L$ is even.
 \item \label{f2bis}  $\phi(x)=\mathbb{1}\Leftrightarrow [\phi(x)_B=\mathbb{1}_B]\wedge [\phi(x)_W=\mathbb{1}_W]$. 
In other words, studying the set $\Phi_{\mathbb{1}}$  is equivalent to study the sets $B_{\mathbb{1}}$ 
and $W_{\mathbb{1}}$.
 \item \label{f3bis}  $B_{\mathbb{1}}\uplus W_{\mathbb{1}}=\Phi_{\mathbb{1}}$.
 \item \label{f4bis} $|B_{\mathbb{1}}|=|W_{\mathbb{1}}| $.
\end{enumerate} 
\end{rem}

\begin{defn}
We denote $\beta\equiv |B_{\mathbb{1}}|=|W_{\mathbb{1}}|$. 
\end{defn}

The following corollary will be useful to prove, in Theorem \ref{teo:size-p1-p2}, that  $\beta$ is an even number.

\begin{cor}\label{cor_betapar}
Let $x\in\Omega$. Then:
\begin{enumerate}
\item \label{f1ter} $x_B\in B_{\mathbb{1}}\Leftrightarrow -x_B\in B_{\mathbb{1}}$
\item \label{f2ter} $x_W\in W_{\mathbb{1}}\Leftrightarrow -x_W\in W_{\mathbb{1}}$
\end{enumerate}
\end{cor}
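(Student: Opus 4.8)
\textbf{Proof plan for Corollary \ref{cor_betapar}.}
The plan is to reduce the statement about staggered-states to the already-established Proposition \ref{prop_x-x}, which says that $\phi(x)=\mathbb{1}$ if and only if $\phi(-x)=\mathbb{1}$, and to combine it with the independence of the black and white sublattices for $L$ even (Remark \ref{rem_facts}, fact \ref{f1bis}). I will prove item \ref{f1ter}; item \ref{f2ter} is entirely symmetric (swap the roles of $B$ and $W$, equivalently shift the chessboard by one node), so it suffices to treat one of them.

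First I would unwind the definition of $B_{\mathbb{1}}$: $x_B\in B_{\mathbb{1}}$ means there is some $x\in\Omega$ with black part $x_B$ such that $\phi(x)_W=\mathbb{1}_W$. Now I would like to negate the black part only, i.e.\ consider the state $x'\in\Omega$ defined by $x'_B=-x_B$ and $x'_W=x_W$. The key point is that, by fact \ref{f1bis} of Remark \ref{rem_facts}, the values of $\phi$ on the white nodes depend only on the black sublattice, since each white node has only black neighbors (and vice versa) when $L$ is even. Hence $\phi(x')_W$ is determined solely by $x'_B=-x_B$. The argument underlying Proposition \ref{prop_x-x} --- that a null von Neumann neighborhood consists of two $+1$'s and two $-1$'s, a property preserved under global sign flip --- applies verbatim to the neighborhood of a single white node seen from the black sublattice: flipping every black value sends a balanced (null-sum) neighborhood to a balanced one and an unbalanced one to an unbalanced one. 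Therefore $\phi(x')_W=\mathbb{1}_W$ if and only if $\phi(x)_W=\mathbb{1}_W$, which gives $-x_B=x'_B\in B_{\mathbb{1}}$. Applying the same step again (negating twice returns to $x_B$) gives the converse, establishing the equivalence.

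I do not anticipate a serious obstacle here; the only subtlety worth spelling out carefully is that the object being negated is the \emph{staggered}-state $x_B$, not a full state of $\Omega$, so one cannot invoke Proposition \ref{prop_x-x} as a black box but must re-run its one-line combinatorial argument at the level of individual white-node neighborhoods --- which is legitimate precisely because of the sublattice independence for even $L$. An alternative, even shorter route is to note that negating $x_B$ while keeping $x_W$ fixed is the same as multiplying $x$ componentwise by the chessboard state $\mathbb{1}_{WB}$ (which is $+1$ on black, $-1$ on white, up to the labelling convention), and then observe that $\phi$ commutes with this particular sign pattern on the white output nodes; but the direct neighborhood-counting argument is cleanest to write and mirrors the proof of Proposition \ref{prop_x-x} already in the text.
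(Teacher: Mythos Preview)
Your proposal is correct and follows essentially the same route as the paper. The only cosmetic difference is that the paper negates the \emph{full} state, writing $x_B\in B_{\mathbb{1}}\Leftrightarrow\phi(x)_W=\mathbb{1}_W\Leftrightarrow\phi(-x)_W=\mathbb{1}_W\Leftrightarrow -x_B\in B_{\mathbb{1}}$ and citing Proposition~\ref{prop_x-x} together with fact~(\ref{f2bis}) of Remark~\ref{rem_facts}, whereas you negate only the black sublattice and re-run the neighborhood-counting argument node by node; both variants rest on the same two ingredients (independence of the sublattices for even $L$ and invariance of null neighborhoods under sign flip), so your worry about not being able to use Proposition~\ref{prop_x-x} as a black box is not an issue in the paper's formulation.
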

\begin{proof}
Let $x\in\Omega$.
\begin{eqnarray}
     (1)\qquad\quad x_B\in B_{\mathbb{1}} & \Leftrightarrow & \phi(x)_W=\mathbb{1}_W\nonumber\\ 
     & \Leftrightarrow & \phi(-x)_W=\mathbb{1}_W\nonumber\\
     & & \text{(by Proposition \ref{prop_x-x} and fact (\ref{f2bis}) of Remark \ref{rem_facts})}\nonumber\\
     & \Leftrightarrow & -x_B\in B_{\mathbb{1}}\nonumber
    \end{eqnarray}
(2) Is analogous to (1).
\end{proof}

The following Theorem shows that the number of fixed points of Q2R is always an square number 
and a multiple of 4.

\begin{thm}\label{teo:size-p1-p2}
The following statements are true:
\begin{enumerate} 
 \item $\beta=2k$, for some $k\in\mathbb{N}$.
 \item $|P_1|=|\Phi_{\mathbb{1}}|=\beta^2=4k^2$.% with $k$ the same as (1).
 \item $|P_2|=4k^2(4k^2-1)$.% with $k$ the same as (1).
\end{enumerate}
\end{thm}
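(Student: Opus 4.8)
The plan is to prove the three statements in order, each one feeding into the next. For statement (1), I would use Corollary \ref{cor_betapar}: the involution $x_B \mapsto -x_B$ acts on the finite set $B_{\mathbb{1}}$, and I claim it has no fixed point. Indeed, a fixed point would require $x_B = -x_B$, which is impossible since states take values in $\{-1,1\}$. Hence $B_{\mathbb{1}}$ is partitioned into pairs $\{x_B, -x_B\}$, so $\beta = |B_{\mathbb{1}}|$ is even, say $\beta = 2k$ with $k \in \mathbb{N}$ (note $k \geq 1$ since by Remark \ref{rem:defBW}, fact (\ref{f4}), $\mathbb{1}_B, [-\mathbb{1}]_B \in B_{\mathbb{1}}$, so $\beta \geq 2$).

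For statement (2), I would start from the identity $|P_1| = |\Phi_{\mathbb{1}}|$ established just before Definition \ref{Def:BW}. Then I would invoke fact (\ref{f3bis}) of Remark \ref{rem_facts}, namely $\Phi_{\mathbb{1}} = B_{\mathbb{1}} \uplus W_{\mathbb{1}}$, together with fact (\ref{f1bis}) — that for $L$ even the neighborhoods on the black sublattice are independent of those on the white sublattice — to conclude that a state $x$ lies in $\Phi_{\mathbb{1}}$ iff $x_B \in B_{\mathbb{1}}$ and $x_W \in W_{\mathbb{1}}$ independently (fact (\ref{f2bis})). This gives a bijection $\Phi_{\mathbb{1}} \leftrightarrow B_{\mathbb{1}} \times W_{\mathbb{1}}$, hence $|\Phi_{\mathbb{1}}| = |B_{\mathbb{1}}| \cdot |W_{\mathbb{1}}| = \beta \cdot \beta = \beta^2 = 4k^2$ using fact (\ref{f4bis}) and statement (1).

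Statement (3) is then immediate from Corollary \ref{cor:size-p2}, which gives $|P_2| = |P_1|(|P_1| - 1)$; substituting $|P_1| = 4k^2$ from statement (2) yields $|P_2| = 4k^2(4k^2 - 1)$.

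The only genuinely delicate point is the independence claim underpinning statement (2): one must be sure that $\phi(x)_W$ depends only on $x_B$ and $\phi(x)_B$ depends only on $x_W$, which is exactly the von Neumann-neighborhood-on-a-bipartite-lattice fact recorded (with an illustrative example) in Remark \ref{rem_facts}, valid precisely because $L$ is even so that the chessboard coloring is consistent around the torus. Granting that, the argument is a short chain of set-cardinality identities; I expect the writing to be only a few lines. The pairing argument in statement (1) is routine once one observes $-x_B \neq x_B$ always holds.
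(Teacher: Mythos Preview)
Your proposal is correct and follows essentially the same approach as the paper's own proof: evenness of $\beta$ via the fixed-point-free involution $x_B\mapsto -x_B$ from Corollary~\ref{cor_betapar} together with $\beta\geq 2$, the product formula $|\Phi_{\mathbb 1}|=|B_{\mathbb 1}|\cdot|W_{\mathbb 1}|$ from Remark~\ref{rem_facts}, and then Corollary~\ref{cor:size-p2} for $|P_2|$. You are in fact slightly more explicit than the paper about \emph{why} the involution has no fixed points and about the role of the bipartite independence (fact~(\ref{f1bis})), but the argument is the same.
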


\begin{proof}
$ $
\begin{enumerate} 
\item $\beta\geq 2$ because always $\{\mathbb{1}_B,-\mathbb{1}_B\}\subseteq B_{\mathbb{1}}$ and 
$\{\mathbb{1}_W,-\mathbb{1}_W\}\subseteq W_{\mathbb{1}}$. The fact that $\beta$ it is even 
is direct from Corollary \ref{cor_betapar}.
 \item As already said:\\ 
$|P_1|=|\{(x,x)\in\Omega^2:\, \phi(x)=\mathbb{1}\}|=|\{x\in\Omega  / \phi(x)=\mathbb{1}\}|=|\Phi_{\mathbb{1}}|$. Since $\beta=2k$ and because of
the facts (\ref{f3bis}) and  (\ref{f4bis}) of Remark \ref{rem_facts}:
$|\Phi_{\mathbb{1}}|=|B_{\mathbb{1}}|\cdot |W_{\mathbb{1}}|=\beta^2=4k^2$. Therefore: $|P_1|=|\Phi_{\mathbb{1}}|=\beta^2=4k^2$, for some $k\in\mathbb{N}$.
 \item   This proof is a direct from Corollary \ref{cor:size-p2} and statement (2) previously proved.
\end{enumerate} 
\end{proof}

\section{A general overview of {the phase space}}\label{SecOmega2} %\sout{ $\Omega^2$}

 Before concluding, we will explain the Figure \ref{fig_omega} that
 summarizes briefly the phase space partitions in limit cycles accordingly with the main results obtained in the previous sections.
 
% Next, we will explain Figure \ref{fig_omega} that summarizes the results of the previous sections:
\begin{itemize}
\item The whole figure represents $\Omega^2$ that is partitioned according to definition \ref{defOmegaxx}: $\Omega^2=\Omega_{xx}^{2}\cup\Omega_{xy}^{2}$.
\item $\Omega_{xx}^{2}$ ({colored by }yellow and green) is partitioned in ${ A=}P_1$ (yellow) and ${ C=}\Omega_{xx}^{2}-P_1$ (green).
\item $\Omega_{xy}^{2}$ ({colored by } orange and blue) is partitioned in $P_2$ (orange) and $\Omega_{xy}^{2}-P_2$ (blue).
\item The figure {tries to reflect, though not in the real scale,} that {$|\Omega_{xx}^{2}|\ll |\Omega_{xy}^{2}|$} and {$|P_1|\ll|P_2|$}.
% \item The size of $\Omega_{xx}^{2}$ became smaller than $\Omega_{xy}^{2}$ because:
% $$\frac{|\Omega_{xx}^{2}|}{|\Omega_{xy}^{2}|}=\frac{2^N}{2^{2N}-2^N}\rightarrow 0, \text{ when } N\rightarrow\infty$$
% Consequently, since $P_1\subset \Omega_{xx}^{2}$, the number of fixed points is much smaller than the number of 
% configurations belonging to limit cycles.
\item $P_1\cup P_2$ (yellow and orange) represents the set of configurations without null neighborhoods. For the 
{complementary} configurations (those of green and blue regions), at least one of its states has a null neighborhood.
\item The dashed line,{ that limits} the left region of $\Omega^2$ with colors orange, yellow, green and a part of 
the blue, corresponds to configurations {that are} symmetric limit cycles ({\it i.e.} , limit cycles of type S-I, S-II 
or S-III only). The remaining region of $\Omega^2$ (only in blue) correspond to configurations belonging to asymmetric 
limit cycles (i.e., of type AS).
\item Configurations belonging in $C=\Omega_{xx}^{2}-P_1$ (green) are in limit cycles of type S-I or S-II only. Those 
of type S-I are represented only with one configuration in the green region (the remaining configurations {lying} in the blue region {limited} by the dashed line). Those of type S-II are represented only with two configurations in the green 
region (the remaining configurations {lying} in the blue region enclosed by the dashed line).
\item The limit cycles {with all its configurations belonging to} the leftmost blue region {limited} by the dashed line are exclusively of type S-III.
\item While $P_1$ and $P_2$ are exactly the yellow and orange regions, respectively, the other sets $P_j$ ($j\geq 3$) 
have configurations in the green or blue regions (but not in the yellow nor the orange) as exemplified in the 
figure. 
\end{itemize}

\begin{figure}
\begin{center}
\includegraphics[width = 8 cm]{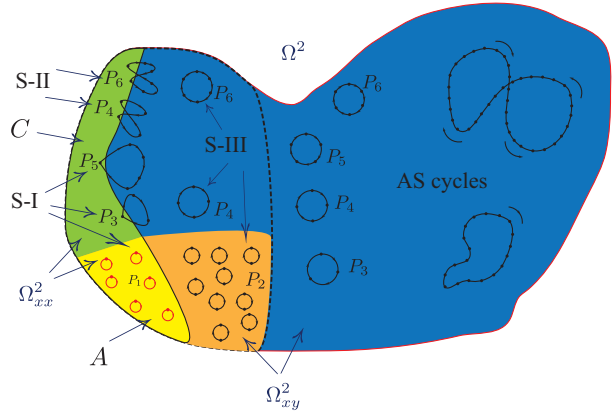} 
\end{center}
\caption{The phase space $\Omega^2$ decomposed according to the results of the previous sections.}
\label{fig_omega}
\end{figure}

The following Table \ref{tab:sizes} complements Figure \ref{fig_omega} by showing the sizes of the different regions of $\Omega^2$ above mentioned  for small system ($N= 16, \, 32,$ and $ 64$).
%  Example. Seeking the fixed points of the reduced domain $\Omega^W$ (or $\Omega^B$) that have a 
%  computable size $2^{N/2}$, up to $N=64$.

\begin{table}[h]
\scriptsize
\begin{tabular}{|c|c|c|c|c|c|}
\hline
label & variable & size & $L=4$ &$L=6$ & $L=8$\\ \hline\hline
%[a] & $L$ & $L$ & 4 & 6 & 8\\ \hline
[b] & $N$ & $L^2$ & 16 & 36 & 64\\ \hline
[c] &  $|\Omega^2|$ & $2^{2N}$ & $2^{32}\sim 4\cdot 10^{9}$& $2^{72}\sim 4\cdot 10^{21}$ & $2^{128}\sim 3\cdot 10^{38}$\\ \hline
[d] & $|\Omega_{xx}^{2}|$ & $2^N$ & $2^{16}=65536$ & $2^{36}\sim 7\cdot 10^{10}$ & $2^{64}\sim 2\cdot 10^{19}$\\ \hline
[e] & $|\Omega_{xy}^{2}|$ & [c]-[d] & $\sim 4\cdot 10^{9}<$[c] & $\sim 4\cdot 10^{21}<$[c] & $\sim 3\cdot 10^{38}<$[c]\\ \hline
[f] & |Yellow| & $|P_1|=\beta^2$ & $34^2$& $584^2$ & $39426^2$\\ \hline
[g] &| Orange| &$|P_2| $= [f]([f]-1) & 1335180 & $\sim 10^{9}$ & $\sim 2\cdot 10^{18}$\\ \hline
[h] & |Green| & [d]-[f]  & 64380 & $\sim 7\cdot 10^{10}<$[d] & $\sim 2\cdot 10^{19}<$[d]\\ \hline
[i] &| Blue| & [c]-([f]+[g]+[h]) & $\sim 4\cdot 10^{9}<$[e] & $\sim 4\cdot 10^{21}<$[e] & $\sim 3\cdot 10^{38}<$[e]\\ \hline
\end{tabular}
\caption{
Summary of the sizes of the main regions discussed in the Figure \ref{fig_omega}. The 1st column labels the values of the ``size'' column. 
The 2nd column has the variables and the main regions explained in Figure \ref{fig_omega}. In the 3rd column are the size formulas for each ``variable'' of the 2nd column.
%based on the rows defined in the 1st column, which describe the sizes of each item defined in the 2nd column. 
In the 4th, 5th and 6th columns are the calculations done for $L\in\{4,6,8\}$, respectively.  }
\label{tab:sizes} 
\end{table}

Since the sizes of the different partitions shown in Figure \ref{fig_omega} essentially depends on $\beta$, this value 
was computationally calculated in Table \ref{tab:sizes} by generating all the staggered-states $x_B\in\Omega_B$ 
without null neighborhoods ({\it i.e.}, $\phi(x)_W = \mathbb{1}$) in order to construct the set $B_{\mathbb{1}}$. 
This procedure gave us a computable size of $|\Omega_B|=2^{N/2}$, for $N= 16, \, 32,$ and $ 64$.

\section{Discussion}\label{Discussion}

Because of the relevance of an accurate knowledge of the phase space in complex dynamical systems with many degrees of freedom, we attempted a classification of the phase space of the Q2R cellular automaton which is in close connection with the Ising model and its statistical properties. The Q2R model is reversible and essentially all results of the present paper follow after the Lemma \ref{LemRev} (on Reversibility). The main results in the present paper are: Theorem \ref{thm_cycle-clasiff} that shows a fully classification of all Q2R attractors in four types of limit cycles consisting of symmetric and asymmetric ones. Moreover,  a general overview of the phase space decomposition has been provided.  
Besides, some specific results for small period limit cycles are the following: Theorem \ref{teo:size-p1-p2} that shows both,  that the total number of fixed points is of the form $|P_1|=\beta^2 =4 k^2$, with $k\in\mathbb{N}$ and that the number of configurations belonging in a period-2 limit cycle is  $ |P_2|=\beta^2 (\beta^2-1)$. Finally, Theorems \ref{MMM1}, \ref{MMM2} and \ref{teo_cl3} as well as Propositions \ref{prop:p1-p3always} and \ref{prop:p3always} characterize the attractors with period lower or equal to 3 and guarantees that almost always will be present in the Q2R dynamics.

We end this paper with the following open remarks:
\begin{enumerate}
\item In the context of statistical mechanics it is important to well characterize the phase space restricted to a subset of constant energy. Regarding the above, preliminary numerical simulations show that limit cycles of different periods can have exactly the same level of energy although we have not yet been able to detect common patterns that allow us to characterize them, so, as a future work, we plan to refine the above numerics by using the limit cycle classification presented in this work in order to detect such a patterns.

\item In this new framework, it is evident that the presence of null neighborhoods is determinant for the presence of limit cycles of period 3 or higher and the absence of null neighborhoods, results in fixed points or period-2 limit cycles. In this context, variations of the Q2R model maybe considered for other regular lattices, however, if the neighborhood possesses an odd number of nodes the further dynamics is trivial. More formally,

\begin{thm}
If the neighborhood considered for the definition of Q2R is composed by an odd number of nodes, then the Q2R 
dynamic has only fixed points and period-2 limit cycles.
\end{thm}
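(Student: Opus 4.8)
The plan is to reduce everything to a single parity observation about the function $\phi$. Recall that $[\phi(x)]_{\bm k} = -1$ precisely when $\sum_{{\bm i}\in V_{\bm k}} x_{\bm i} = 0$. Each summand is $\pm 1$, so if $V_{\bm k}$ has an odd number of nodes the sum is an odd integer and therefore cannot vanish. Hence the very first step is to conclude that $[\phi(x)]_{\bm k} = +1$ for every node $\bm k$ and every state $x\in\Omega$, i.e. $\phi(x)=\mathbb{1}$ identically on $\Omega$. (Here I am using that \emph{every} node has the same odd-size neighborhood, which holds for a translation-invariant neighborhood on the toroidal lattice.)

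Once $\phi\equiv\mathbb{1}$ is established, I would feed this into the characterizations already proved. By Theorem \ref{MMM1}, $(x,y)\in P_1$ iff $[x=y]\wedge[\phi(x)=\mathbb{1}]$, and the second condition is now automatic, so $P_1=\Omega^2_{xx}$. By Theorem \ref{MMM2}, $(x,y)\in P_2$ iff $[x\neq y]\wedge[\phi(x)=\mathbb{1}]\wedge[\phi(y)=\mathbb{1}]$, and again the last two conditions are automatic, so $P_2=\Omega^2_{xy}$. Since $\Omega^2=\Omega^2_{xx}\cup\Omega^2_{xy}$, every configuration lies in $P_1\cup P_2$, which is exactly the assertion that the dynamics consists only of fixed points and period-2 limit cycles.

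Alternatively — and this is worth including as a one-line sanity check — with $\phi\equiv\mathbb{1}$ the rule (\ref{q2r}) collapses to $(x^t,y^t)\to(y^t\odot\mathbb{1},x^t)=(y^t,x^t)$, so applying it twice returns $(x^t,y^t)$; thus the orbit of any configuration has length dividing $2$. This makes the conclusion transparent without even invoking the earlier theorems.

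There is essentially no hard step here: the only thing to be careful about is the hypothesis that the neighborhood is the same odd-size set at every node (so that the parity argument applies uniformly across the lattice), and the trivial remark that a sum of an odd number of $\pm1$'s is never $0$. I would state the result for any such neighborhood on an arbitrary (finite, toroidal) lattice, since the square/von Neumann structure plays no role once $\phi$ is constant.
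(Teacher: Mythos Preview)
Your proof is correct and follows essentially the same approach as the paper: the key observation is that a sum of an odd number of $\pm 1$'s cannot vanish, so $\phi\equiv\mathbb{1}$, after which Theorems \ref{MMM1} and \ref{MMM2} immediately give $P_1=\Omega^2_{xx}$ and $P_2=\Omega^2_{xy}$. Your alternative one-line check that the rule collapses to $(x,y)\mapsto(y,x)$ is a nice addition not present in the paper, but the core argument is the same.
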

\begin{proof}
If the neighborhood has an odd number of nodes, then, given $x\in\Omega$, all its neighborhoods 
will have more 1s than -1s or vice-versa. In both cases, its sum will never be 0. In other words,
$\phi (x)=\mathbb{1}$, $\forall x\in\Omega$. Thus, by Theorems \ref{MMM1} and \ref{MMM2}, the 
Q2R dynamic will have only fixed points and period-2 limit cycles.
\end{proof}

\item Notice that we do not have a mathematical expression for $\beta$.  In this context, all numerical values involving $\beta$ in this paper, such as those of Table \ref{tab:sizes}, were obtained checking state by state if they have or not a null-neighborhood.

\item At a computational level, it is important to note that, to make a correct dynamic study of Q2R, initial 
configurations of both $\Omega_{xx}^{2}$ and $\Omega_{xy}^{2}$ must be considered because if only the first one is considered, 
then it will never be possible to obtain, for instance, period-2 limit cycles nor asymmetric limit cycles. On the other hand, 
if only $\Omega_{xy}^{2}$ is considered, then it will never be possible to obtain fixed points.

\item For the dynamical characterization of limit cycles of period-4 and higher, we have not proven general conditions of existence neither we are not able to compute its cardinality, however, a period $T$ limit cycle is characterized by 
\begin{eqnarray}
 (x^0,y^0) \to (x^1,y^1)  &  \to & (x^2,y^2)\to \cdots \nonumber\\
& \cdots &  \to (x^{T-1},y^{T-1}) \to (x^T,y^T)=  (x^0,y^0),\nonumber\end{eqnarray}
with $x^1=  y^0\odot \phi(x^0)$, $y^1=x^0$, $ x^2=y^1 \odot \phi(x^1)$, \dots \, , $x^T =   y^{T- 1}\odot \phi(x^{T-1})=x^0$, $y^T =   x^{T-1}=y^0$.
Therefore, one concludes the following necessary (but not sufficient) conditions:
 
%\begin{eqnarray}
%x^1 =  x^{T-1}\odot \phi(x^0)   & \quad  &   y^1 = y^{T-1}\odot\phi(y^0)    \nonumber \\
%x^2 = x^0\odot  \phi(x^1)   & \quad  &   y^2 = y^{0}\odot\phi(y^1) \nonumber \\
%x^3 = x^1\odot\phi(x^2)   & \quad  &   y^3 = y^{1}\odot\phi(y^2 )\nonumber \\
% & \vdots &  \nonumber \\
%%x^{T-2} &= &x^{T-4} \phi(x^{T-3})  \nonumber \\
%x^{T-1} = x^{T-3}\odot \phi(x^{T-2})   & \quad  &   y^{T-1} = y^{T-3}\odot\phi(y^{T-2} ) \nonumber \\
%x^{0} = x^{T-2}\odot \phi(x^{T-1})   & \quad  &   y^{0} = y^{T-2}\odot\phi(y^{T-1}) .  \label{GeneralCycle}
%\end{eqnarray}
%Naturally, both sequences provides the same content, thus just one is necessary.

\begin{enumerate}
\item For $T$ even, we have two, {\it a priori}, independent conditions:

\begin{eqnarray}
\phi(x^0) \odot\phi(x^2) \odot\phi(x^4) \odot \cdots \odot \phi(x^{T-4}) \odot \phi(x^{T-2})  & = & \mathbb{1}  \nonumber \\
\phi(x^1) \odot\phi(x^3) \odot\phi(x^5) \odot \cdots \odot \phi(x^{T-3}) \odot \phi(x^{T-1})  & = & \mathbb{1}   
.  \label{CycleEvenCondition}
\end{eqnarray}
\item For an odd period, one has the necessary condition
\begin{eqnarray}
\phi(x^0) \odot\phi(x^1) \odot\phi(x^2) \odot \cdots \odot \phi(x^{T-2}) \odot \phi(x^{T-1})  & = & \mathbb{1}   
.  \label{CycleOddCondition}
\end{eqnarray}
\end{enumerate}

As an example the period four limit cycles posses three different types of limit cycles, the type S-II, the type S-III and the  type AS (see Figure \ref{Period4}). 
 \begin{figure}[htbp!]
\begin{center}
a) \includegraphics[width=2.5cm]{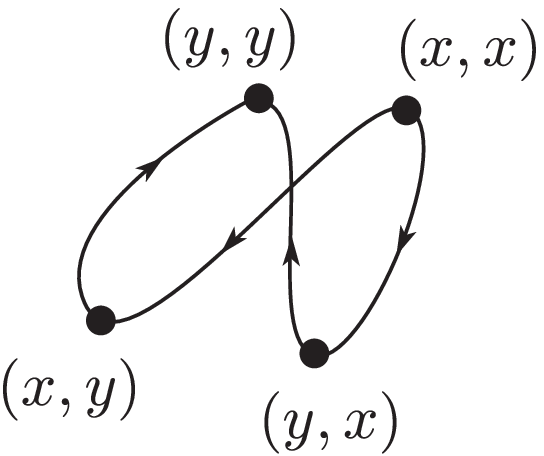}  \, b) \includegraphics[width=3cm]{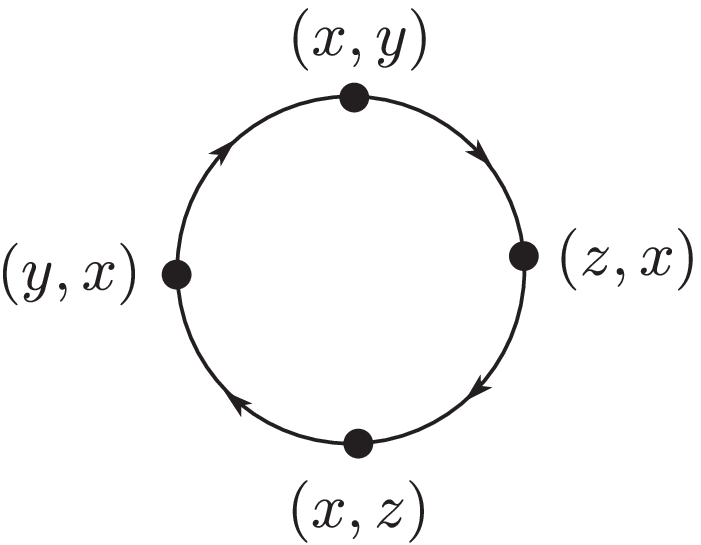}  \, 
c) \includegraphics[width=3cm]{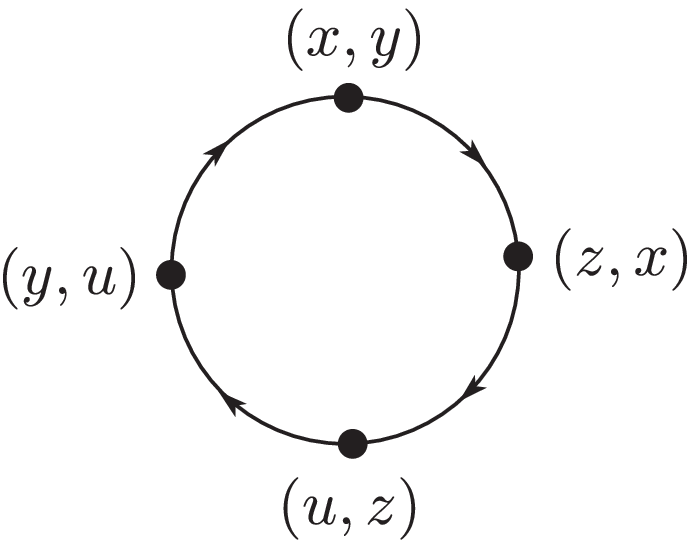}
\end{center}
  \caption{\label{Period4} a) A period-4 limit cycle $(x,x) \to (y,x)   \to (y,y)  \to (y,x)  \to (x,x)$ (type S-II).  
  b) A period-4 limit cycle $(x,y) \to (z,x)  \to (x,z)  \to (y,x)  \to (x,y)$ (type S-III). c) A period-4 limit cycle $(x,y) \to (z,x)  \to (u,z)  \to (y,u)  \to (x,y)$ (type AS).}
\end{figure}

The period five limit cycles posses two different types of limit cycles, the type S-I and the type AS (see Figure \ref{Period5}). 
 \begin{figure}[htbp!]
\begin{center}
a) \includegraphics[width=3.cm]{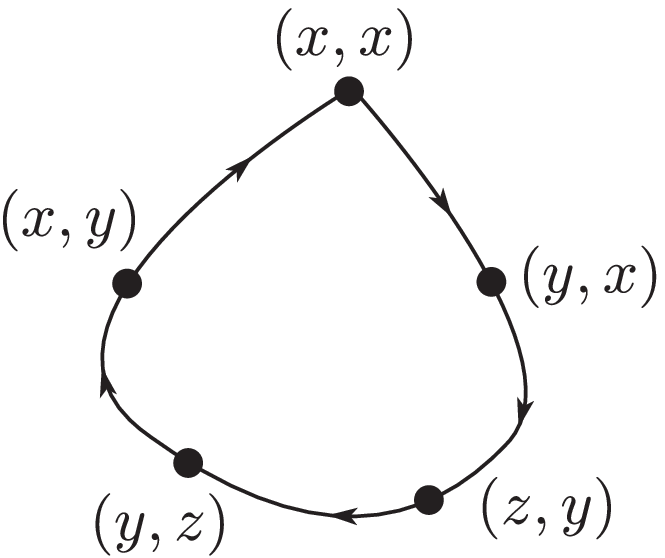}  \, b) \includegraphics[width=3.cm]{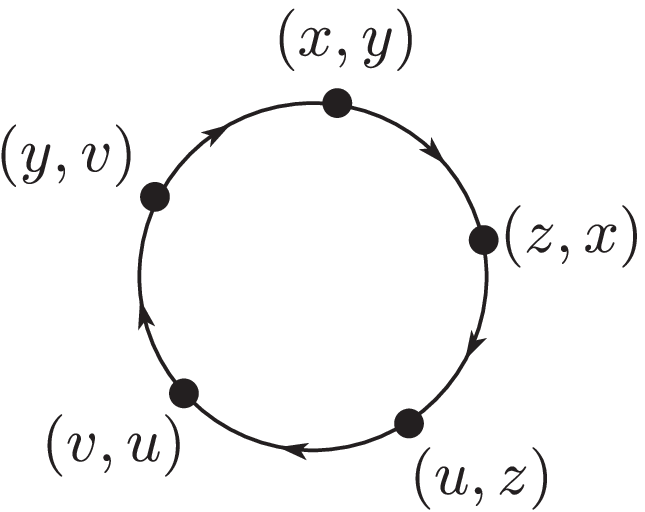} 
\end{center}
  \caption{\label{Period5} (a) A period 5 limit cycle 
  $(x,x) \to (y,x)   \to (z,y)  \to (y,z)   \to (x,y)   \to (x,x) $.  (b) Example of a limit cycle  
  $(x,y) \to (z,x)  \to (u,z)  \to (v,u)  \to (y,v)  \to (x,y)  $ this limit cycle does not exist 
  in the case of a $4\times 4$ lattice.  }
\end{figure}

\item Because of the special topology of the limit cycles of type S-I and S-II (see Figure \ref{fig_cycletype}-a \& \ref{fig_cycletype}-b) these limit cycles are fully characterized by a simpler set of conditions.

%limit cycles of type S-I and S-II are simply labeled by the initial conditions such that $y^0=x^0$, hence $(x^0,x^0)$, therefore the dynamical rules, reads
 Let be the sequence:
\begin{eqnarray}
x^1 &=& x^0\odot   \phi(x^0)  \quad \text{for the first step} \nonumber\\
x^{t+1} &=& x^{t-1}\odot   \phi(x^t)   \quad \text{ for }\, t=\{1,2,3,\dots, T-1\} ,\end{eqnarray}  
then,  the closing conditions for limit cycle  for an even periodic limit cycle (\ref{CycleEvenCondition}) imposes 
\begin{eqnarray}
\phi(x^0) \odot \phi(x^1) \odot \phi(x^2) \odot\phi(x^3) \odot \cdots \odot \phi(x^{T-4}) \odot \phi(x^{T-1})  & = & \mathbb{1}, \label{CycleEvenConditionSII}
\end{eqnarray}
while, for an odd period,  the condition (\ref{CycleOddCondition}) simplifies to
\begin{eqnarray}
\phi(x^{(T-1)/2})  & = & \mathbb{1}   
.  \label{CycleOddConditionSI}
\end{eqnarray}
Because the even and odd cases follow quite different conditions we conjecture that:

\begin{conj} \label{Conj:Odd}  Let be odd period $T = 2 q +1 $ with $q \in \mathbb{N}   $, and let   $x^1 = x^0\odot   \phi(x^0)  $, together with $x^{t+1} = x^{t-1}\odot   \phi(x^t)   $ for 
 $t\in \{1,2,3,\dots, q -1\} $, then, the pair $(x^0,x^0)$ belongs to a type S-I limit cycle of period $T=2 q +1 $, iff
                \begin{eqnarray} 
                \phi(x^0) \neq  \mathbb{1}  \wedge  \phi(x^1)   \neq  \mathbb{1}    \wedge  \dots   \wedge  \phi(x^{q-1})   \neq  \mathbb{1}     \wedge  
                  \phi(x^{q})  &=&  \mathbb{1} .\nonumber
                 \end{eqnarray}  
\end{conj}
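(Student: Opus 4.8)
The plan is to split the proof into the two implications and to use heavily the topological description of an S-I limit cycle given in Figure \ref{fig_cycletype}-a together with the odd-period closing condition (\ref{CycleOddConditionSI}). First I would set up the notation consistent with the one-step rule: given a configuration $(x^0,x^0)\in\Omega^2_{xx}$, define the auxiliary scalar sequence $x^{t+1}=x^{t-1}\odot\phi(x^t)$ (with $x^1=x^0\odot\phi(x^0)$) and note, by an immediate induction using Remark \ref{rem_iter}, that this sequence records the ``first component'' along the forward orbit, in the sense that the orbit of $(x^0,x^0)$ reads $(x^0,x^0)\to(x^1,x^0)\to(x^2,x^1)\to\cdots$. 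The content of an S-I limit cycle of period $T=2q+1$, as read off Figure \ref{fig_cycletype}-a and Remark \ref{rem_cyclededuction}, is that the orbit climbs through $C$ then a string of $D$'s, reaches a $B$-configuration at the ``top'', and then the symmetric half (via Corollary \ref{cor_sec-sim-trans}) descends back; the key combinatorial fact is that the unique configuration of type $B$ sits exactly at step $q$ and is of the form $(x^{q},x^{q})$ — wait, more precisely the apex is the self-symmetric-adjacent configuration, and the condition for the cycle to ``turn around'' at that point is precisely $\phi(x^{q})=\mathbb{1}$, which is (\ref{CycleOddConditionSI}).

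For the forward implication ($\Rightarrow$): assume $(x^0,x^0)$ lies on a type S-I cycle of period $T=2q+1$. By the defining topology, the orbit starts at the unique type-$C$ configuration $(x^0,x^0)$ (here $\phi(x^0)\neq\mathbb{1}$, otherwise it would be a fixed point, contradicting $T\geq 3$ — or $T=1$, handled trivially), then passes through $2p+1=2q-1$ configurations of type $D$, hence $(x^t,y^t)\in D$ for $t=1,\dots,q-1$, which by definition of $D$ forces $\phi(x^t)\neq\mathbb{1}$ for those $t$. After these $D$-steps the orbit reaches the unique type-$B$ configuration; matching indices along the cycle of length $2q+1$ with the symmetric descent shows this $B$-configuration is reached at time $q$, i.e. $(x^{q},y^{q})\in B$ with $y^{q}=x^{q-1}$, and the very next step $(x^{q},y^{q})\to(y^{q},x^{q})$ being an evolution within the cycle forces $\phi(x^{q})=\mathbb{1}$ by Remark \ref{rem_iter} (this is exactly the simplified odd closing condition (\ref{CycleOddConditionSI})). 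That gives all the asserted inequalities and the final equality.

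For the reverse implication ($\Leftarrow$): assume $\phi(x^t)\neq\mathbb{1}$ for $t=0,\dots,q-1$ and $\phi(x^{q})=\mathbb{1}$. I would first check that $(x^0,x^0)$ is genuinely on a cycle and compute its structure explicitly. Because $\phi(x^0)\neq\mathbb{1}$ and $x^0=x^0$, the configuration $(x^0,x^0)$ is of type $C$; the subsequent iterates $(x^{t+1},x^{t})$ for $t=1,\dots,q-1$ satisfy $x^{t+1}\neq x^{t}$ (since $\phi(x^t)\neq\mathbb{1}$) and also $x^{t+1}\neq x^{t-1}$, hence are of type $D$ (one must argue $x^{t+1}\ne x^{t}$ carefully — it follows from $x^{t+1}=x^{t}\odot\phi(x^t)$ and $\phi(x^t)\neq\mathbb{1}$; the non-equality with $x^{t-1}$ needs a small argument to avoid an early $D\to C$ return, and this is where I expect the main technical friction). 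Then at step $q$, since $\phi(x^{q})=\mathbb{1}$, the configuration $(x^{q},x^{q-1})$ is of type $B$, so by Corollary \ref{cor_c3props}-(iv) it evolves to its symmetric $(x^{q-1},x^{q})\in D$; invoking Corollary \ref{cor_sec-sim-trans} on the forward chain $(x^0,x^0)\to(x^1,x^0)\to\cdots\to(x^{q-1},x^{q-2})$ produces the reversed chain of symmetric configurations closing the loop back to $(x^0,x^0)$. Counting: one $C$, then $q-1$ $D$'s going up, the $B$ at the apex, then $q-1$ $D$'s coming down, total $2q+1=T$ distinct configurations (distinctness from the ``maximality'' choice of $q$ and from the topological uniqueness statements), so the cycle is of type S-I with the claimed period.

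The main obstacle I anticipate is not the counting but the distinctness/maximality bookkeeping in the reverse direction: one must rule out the orbit closing up prematurely (a shorter S-I cycle, or a spurious $D\to C$ step with $z=x$ as in case (T4)(i)) before reaching step $q$, and one must confirm that the hypothesis ``$\phi(x^t)\ne\mathbb{1}$ for $t<q$'' is exactly what forbids this. I would handle it by showing that any earlier return would force $\phi(x^j)=\mathbb{1}$ for some $j<q$ (again via Remark \ref{rem_iter} applied at the turning configuration), contradicting the hypothesis; this is the analogue, for odd periods, of the ``$p$ maximal'' argument used in the proof of Theorem \ref{thm_cycle-clasiff}.
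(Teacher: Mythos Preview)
The paper does not prove this statement: it is explicitly labelled a \emph{Conjecture} and left open, followed only by the remark that the condition ``appears to be an easy way to compute the cardinality of the sets S-I with odd periods.'' There is therefore no proof in the paper against which to compare your attempt.

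On the substance of your proposal: the forward direction is essentially correct and reads off the S-I topology accurately. The reverse direction, however, has a genuine gap. You correctly identify the two obstructions to ruling out early closure --- a shorter S-I cycle, and a spurious $D\to C$ step as in (T4)(i) --- but your proposed resolution (``any earlier return would force $\phi(x^{j})=\mathbb{1}$ for some $j<q$'') only disposes of the first. If at some step $t\le q$ the orbit executes $D\to C$, i.e.\ $x^{t}=x^{t-1}$, this is \emph{not} excluded by your side claim ``$x^{t+1}=x^{t}\odot\phi(x^{t})$ and $\phi(x^{t})\neq\mathbb{1}$'', since for $t\ge 1$ the recurrence is $x^{t+1}=x^{t-1}\odot\phi(x^{t})$; and in that case the cycle through $(x^{0},x^{0})$ is of type S-II, contains no $B$-configuration at all, and hence satisfies $\phi(x^{t})\neq\mathbb{1}$ for every $t$. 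No early $\phi=\mathbb{1}$ is forced; what is actually contradicted is the \emph{other} half of the hypothesis, namely $\phi(x^{q})=\mathbb{1}$. A clean fix is to invoke Theorem~\ref{thm_cycle-clasiff} up front: since $(x^{0},x^{0})\in C$, its cycle is necessarily of type S-I or S-II; the S-II alternative is killed by $\phi(x^{q})=\mathbb{1}$, and within S-I the period is pinned to $2q+1$ by matching the index of the unique $B$-configuration to the first $t$ with $\phi(x^{t})=\mathbb{1}$, which the hypothesis forces to be $q$.
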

This condition appears to be an easy way to compute the cardinality of the sets S-I with odd periods.

\begin{conj} \label{Conj:Even}  Let be the even period  $T = 2 p$, with $p$ a prime number and let $x^1 = x^0\odot   \phi(x^0)  $, and $x^{t+1} = x^{t-1}\odot   \phi(x^t)   $ for $t=\{1,2,3,\dots, p-2\} $, then the pair $(x^0,x^0)$ belongs to a type S-II limit cycle of a period $T=2 p$ iff 
                \begin{eqnarray} 
                \phi(x^0) \neq  \mathbb{1}  \wedge  \phi(x^1)   \neq  \mathbb{1}    \wedge  \dots   \wedge  \phi(x^{p-2})   \neq  \mathbb{1}     \wedge   & & \nonumber\\
                 \phi(x^0) \odot   \phi(x^1) \odot    \dots   \odot   \phi(x^{p-1})  &=&  \mathbb{1} .\label{eeqn:ConjOdd}
                 \end{eqnarray}  
\end{conj}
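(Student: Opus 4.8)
\medskip
\noindent\emph{Proof strategy.} The idea is to reduce the statement to a property of the scalar sequence $(x^{t})$ generated by the recursion, and then recognize the type-S-II topology via Theorem~\ref{thm_cycle-clasiff}. First I would observe, using Remark~\ref{rem_iter}, that iterating the Q2R rule from the self-symmetric configuration $(x^{0},x^{0})$ yields $(x^{0},x^{0})$ at time $0$ and the configuration $(x^{t},x^{t-1})$ at every time $t\geq 1$, where $x^{1}=x^{0}\odot\phi(x^{0})$ and $x^{t+1}=x^{t-1}\odot\phi(x^{t})$ for all $t\geq 1$ (the recursion of the statement, extended past $t=p-2$). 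Unrolling it and repeatedly using $\phi(x)\odot\phi(x)=\mathbb{1}$ gives $x^{2k}=x^{0}\odot\phi(x^{1})\odot\phi(x^{3})\odot\cdots\odot\phi(x^{2k-1})$ and $x^{2k+1}=x^{0}\odot\phi(x^{0})\odot\phi(x^{2})\odot\cdots\odot\phi(x^{2k})$, hence the key equivalence
\[
x^{t}=x^{t-1}\ \Longleftrightarrow\ \phi(x^{0})\odot\phi(x^{1})\odot\cdots\odot\phi(x^{t-1})=\mathbb{1}.
\]
Thus the product hypothesis of the statement says precisely that the configuration reached at time $p$ is self-symmetric, while the conditions $\phi(x^{j})\neq\mathbb{1}$ say that the configurations at times $0,\dots,p-2$ are not of type $B$ (and, at time $0$, not a fixed point).

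\medskip
\noindent For the forward implication, assume $(x^{0},x^{0})$ lies on a type-S-II cycle of period $T=2p$. By the definition of type S-II together with the proof of Theorem~\ref{thm_cycle-clasiff} (case (C3)(ii)), $(x^{0},x^{0})$ is one of the two configurations of type $C$, the other type-$C$ configuration occurs at time $T/2=p$ (so $x^{p}=x^{p-1}$), and the configurations at times $1,\dots,p-1$ are of type $D$. Being of type $C$ at time $0$ gives $\phi(x^{0})\neq\mathbb{1}$; being of type $D$ at times $1,\dots,p-1$ gives $\phi(x^{j})\neq\mathbb{1}$ there, hence for $j=0,\dots,p-2$; and $x^{p}=x^{p-1}$ gives, by the equivalence, $\phi(x^{0})\odot\cdots\odot\phi(x^{p-1})=\mathbb{1}$, which is the required right-hand side.

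\medskip
\noindent For the converse, assume the displayed conditions. Since $\phi(x^{0})\neq\mathbb{1}$, Theorems~\ref{MMM1}--\ref{MMM2} give that $(x^{0},x^{0})$ is neither a fixed point nor in $P_{2}$; it is of type $C$, so by the third bullet of Remark~\ref{rem_cyclededuction} it lies on a cycle of type S-I or S-II of period $\geq 3$. Let $t^{\ast}\geq 1$ be the first time the forward trajectory from $(x^{0},x^{0})$ meets a self-symmetric configuration again; then $t^{\ast}\geq 2$ (else $x^{1}=x^{0}$, forcing $\phi(x^{0})=\mathbb{1}$) and $t^{\ast}\leq p$ (by the equivalence and the product hypothesis). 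If $t^{\ast}<p$ then $\phi(x^{j})\neq\mathbb{1}$ for $1\leq j\leq t^{\ast}-1$ (indices $\leq p-2$), so the configurations at times $1,\dots,t^{\ast}-1$ are of type $D$, the one at time $t^{\ast}$ is a second self-symmetric configuration of type $C$ (the cycle has period $\geq 3$), and Corollary~\ref{cor_sec-sim-trans} closes a type-S-II cycle of period $2t^{\ast}$; then $(x^{t})$ is $2t^{\ast}$-periodic and $x^{p}=x^{p-1}$ forces $p\equiv 0$ or $p\equiv t^{\ast}\pmod{2t^{\ast}}$, both impossible for $p$ prime with $t^{\ast}\geq 2$ ($2t^{\ast}\mid p$ fails since $2t^{\ast}\geq 4$ is even, and $p=t^{\ast}(2k+1)$ with $t^{\ast}\geq 2$ forces $t^{\ast}=p$). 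Hence $t^{\ast}=p$. Next, the configuration at time $p-1$ cannot be of type $B$: a type-$B$ configuration steps to its own symmetric (Corollary~\ref{cor_c3props}(iv)), which here would be the self-symmetric time-$p$ configuration --- impossible; so $\phi(x^{p-1})\neq\mathbb{1}$, and all $p-1$ configurations at times $1,\dots,p-1$ are of type $D$. Finally the two type-$C$ configurations (at times $0$ and $p$) are distinct --- otherwise the period divides the prime $p$, which is excluded: period $1$ contradicts $\phi(x^{0})\neq\mathbb{1}$; period $2$ is impossible since $P_{2}\subseteq\Omega^{2}_{xy}$ while $(x^{0},x^{0})\in\Omega^{2}_{xx}$; and period $p$ odd would give a type-S-I cycle of period $\geq 3$, which by definition contains a type-$B$ configuration, yet none of these $p$ configurations is of type $B$. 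Closing the cycle symmetrically via Corollary~\ref{cor_sec-sim-trans} (the remaining $p-1$ configurations being the symmetric images of those at times $1,\dots,p-1$, which the hypotheses $\phi(x^{j})\neq\mathbb{1}$ keep of type $D$), $(x^{0},x^{0})$ lies on a symmetric cycle of period $2p$ with exactly two type-$C$ and $2(p-1)$ type-$D$ configurations --- i.e.\ a type-S-II cycle of period $T=2p$.

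\medskip
\noindent The main obstacle is the converse, specifically the step $t^{\ast}=p$: excluding an earlier self-symmetric return is exactly where primality of $p$ is used, and the degenerate cases ($p=2$; the coincidence $x^{p-1}=x^{0}$; and a stray type-$B$ configuration at time $p-1$, not ruled out directly by the hypotheses) need individual care. It would also be convenient to record, as a standalone lemma, the fact used above that in a type-S-II cycle the two type-$C$ configurations are exactly antipodal, i.e.\ at time-distance $T/2$.
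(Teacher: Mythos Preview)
The paper does not prove this statement: it is explicitly labelled a \emph{conjecture} (Conjecture~\ref{Conj:Even}) and no argument is offered beyond the heuristic remarks following equation~(\ref{eeqn:ConjOdd}). There is therefore no ``paper's own proof'' to compare against; your proposal goes strictly beyond what the paper establishes.

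That said, your argument is essentially sound and would, once the details are written out, constitute a proof of the conjecture. The key equivalence $x^{t}=x^{t-1}\Leftrightarrow\phi(x^{0})\odot\cdots\odot\phi(x^{t-1})=\mathbb{1}$ is correct and does all the work; the forward direction then follows directly from the S-II topology (the antipodality of the two type-$C$ configurations is indeed implicit in case~(C3)(ii) of the proof of Theorem~\ref{thm_cycle-clasiff}, as you note). In the converse, the use of primality to force $t^{\ast}=p$ is the right idea and your case analysis is correct. Two places deserve a little more care when you write this up: (i) in the case $t^{\ast}<p$ you assert that Corollary~\ref{cor_sec-sim-trans} ``closes a type-S-II cycle of period $2t^{\ast}$'', but you should explicitly exclude period $t^{\ast}$ first (it is ruled out by the classification, since a cycle consisting of one $C$ and $t^{\ast}-1$ $D$'s matches none of the four types); (ii) the argument that the configuration at time $p-1$ is not of type $B$ is correct but terse---spelling out that $B\to$(symmetric) together with $x^{p}=x^{p-1}$ would force $x^{p-1}=x^{p-2}$, contradicting $t^{\ast}=p$, makes it transparent. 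With these two points filled in, the proof is complete.
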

As before, the cardinality of the sets S-II with even periods  $T = 2 p$ ($p$ a prime number), nevertheless, the general even period case requires more careful considerations. Essentially, there is a double counting, {\it e.g.} the case of period $T= 2\times 4 =8$ the condition (\ref{eeqn:ConjOdd})  counts simultaneously the  period 8 and period 4 configurations.

\end{enumerate}

\section{Appendix: Exact results for the case $4\times 4$.}\label{AppEx}

\subsection{Decomposition of $\Omega^2$}

The $4\times 4$ lattice is the largest possible phase space, with $2^{32}$ configurations, that can be fully scanned {numerically}. 
The next $6\times 6$ lattice possesses $2^{76}$ configurations, making impossible this task.

In Tables \ref{TableConfigs} and \ref{TableCycles}, we provide the exact distributions for the number of configurations and for the number of limit cycles  accordingly with its period and the cycle type, respectively. One observes that the number of odd period limit cycles are rare. As a general rule, the number of configurations (and the number of limit cycles) of even period limit cycles are much larger than the odd ones limit cycles. We also notices that 
the periods 7, 11, 13, etc. are missing. Moreover, the largest odd period is 27.  The reason why some periods exist and other does not is still an open problem.

\begin{table}[h!]
\scriptsize
\begin{tabular}{|c||c|c|c||c||c|}
\hline
$T$ & { $\nu_{\rm SI}(T)$} & { $\nu_{\rm SII}(T)$} & { $\nu_{\rm SIII}(T)$} & { $\nu_{\rm AS}(T)$} & $|P_T|$\\ \hline\hline
1 & 1,156 & 0 & 0 & 0 & 1,156\\ \hline
2 & 0 & 0 & 1,335,180 & 0 & 1,335,180\\ \hline
3 & 4,128 & 0 & 0 & 768 & 4,896\\ \hline
4 & 0 & 14,456 & 20,556,256 & 48,384,408 & 68,955,120\\ \hline
5 & 1,920 & 0 & 0 & 0 & 1,920\\ \hline
6 & 0 & 10,560 & 15,219,936 & 20,054,976 & 35,285,472\\ \hline
8 & 0 & 42,752 & 58,399,744 & 235,007,232 & 293,449,728\\ \hline
9 & 3,456 & 0 & 0 & 4,608 & 8,064\\ \hline
10 & 0 & 7,680 &5,174,400  & 2,941,440 & 8,123,520\\ \hline
12 & 0  & 99,648 &132,294,144  & 655,316,928 & 787,710,720\\ \hline
18 & 0 & 69,120 &18,824,832  & 143,732,736 & 162,626,688\\ \hline
20 & 0 &19,200  &17,295,360  & 53,694,720 & 71,009,280\\ \hline 
24 & 0 & 27,648 &115,703,808  & 536,220,672 & 651,952,128\\ \hline
27 & 0 & 0 & 0 &6,912  & 6,912\\ \hline
30 & 0 & 0 & 15,851,520 & 2,949,120 & 18,800,640\\ \hline
36 & 0 & 0 & 51,038,208 & 333,388,800 & 384,427,008\\ \hline
40 & 0 & 76,800 &26,296,320  &246,420,480  & 272,793,600\\ \hline
54 & 0 & 186,624 & 0 & 242,721,792 & 242,908,416\\ \hline
60 & 0 & 0 & 33,177,600 & 113,172,480 & 146,350,080\\ \hline
72 & 0 & 0 & 47,333,376 & 162,201,600 & 209,534,976\\ \hline
90 & 0 & 0 & 13,271,040 &17,694,720  & 30,965,760\\ \hline
108 & 0 & 0 & 0 & 329,508,864 & 329,508,864\\ \hline
120 & 0 & 0 & 0 &200,540,160  & 200,540,160\\ \hline
180 & 0 & 0 & 0 &30,965,760  & 30,965,760\\ \hline
216 & 0 & 0 & 0 &179,601,408  & 179,601,408\\ \hline
270 & 0 & 0 & 0 & 26,542,080 & 26,542,080\\ \hline
360 & 0 & 0 & 0 & 61,931,520 & 61,931,520\\ \hline
540 & 0 & 0 & 0 & 26,542,080 & 26,542,080\\ \hline
1080 & 0 & 0 & 0 & 53,084,160 & 53,084,160\\ \hline\hline
Total & 10,660 & 554,488 &  571,771,724 & 3,722,630,424 & $2^{32}$\\ \hline
\end{tabular}
\caption{The distribution of the configurations of $\Omega^2$ according with the type of limit cycle that belongs, for a $4\times 4$ periodic lattice. The first column indicates the period-$T$, the second (resp. third, fourth and fifth) one, the total number of configurations belonging in a  period-$T$ limit cycle of type S-I (resp. S-II, S-III and AS). Finally, the column $|P_T|$ indicates the size of the set $P_T$.}\label{TableConfigs}
\end{table}

\begin{table}[h!]
\scriptsize
\begin{tabular}{|c||c|c|c||c||c|}
\hline
$T$ & { $n_{\rm SI}(T)$} & { $n_{\rm SII}(T)$} & { $n_{\rm SIII}(T)$} & { $n_{\rm AS}(T)$} & $n(T)$\\ \hline\hline
1 & 1,156 & 0 & 0 & 0 & 1,156\\ \hline
2 & 0 & 0 & 667,590 & 0 & 667,590\\ \hline
3 & 1,376 & 0 & 0 & 256 & 1,632\\ \hline
4 & 0 & 3614 & 5,139,064 & 12,096,102 & 17,238,780\\ \hline
5 & 384 & 0 &0  &0  & 384\\ \hline
6 &0  & 1,760 &2,536,656  & 3,342,496 & 5 ,880 ,912\\ \hline
8 & 0 & 5,344 &7,299,968  & 29,375,904 & 36,681,216\\ \hline
9 & 384 & 0 & 0 & 512 & 896\\ \hline
10 & 0 & 768 & 517,440 & 294,144 & 812,352\\ \hline
12 & 0  & 8,304 & 11,024,512 & 54,609,744 & 65,642,560\\ \hline
18 & 0 & 3,840 & 1,045,824 &7,985,152  & 9,034,816\\ \hline
20 & 0 & 960 & 864,768 & 2,684,736 & 3,550,464\\ \hline 
24 & 0 &1,152  &4,820,992  & 22,342,528 & 27,164,672\\ \hline
27 & 0 & 0 & 0 &256  & 256\\ \hline
30 & 0 & 0 &528,384  & 98,304 & 626,688\\ \hline
36 & 0 &0  & 1,417,728 & 9,260,800 & 10 ,678 ,528\\ \hline
40 & 0 & 1,920 & 657,408 & 6,160,512 & 6,819,840\\ \hline
54 & 0 & 3,456 & 0 &4,494,848  &  4,498,304\\ \hline
60 & 0 & 0 & 552,960 &1,886,208  & 2,439,168\\ \hline
72 & 0 & 0 & 657,408 & 2,252,800 & 2,910,208\\ \hline
90 & 0 & 0 & 147,456 &196,608  & 344,064\\ \hline
108 & 0 & 0 & 0 & 3,051,008 & 3,051,008\\ \hline
120 & 0 & 0 & 0 & 1,671,168 & 1,671,168\\ \hline
180 & 0 & 0 & 0 & 172,032 & 172,032\\ \hline
216 & 0 & 0 & 0 & 831,488 & 831,488\\ \hline
270 & 0 & 0 & 0 & 98,304 & 98,304\\ \hline
360 & 0 & 0 & 0 & 172,032 & 172,032\\ \hline
540 & 0 &  0& 0 & 49,152 & 49,152\\ \hline
1080 & 0 & 0 & 0 &49,152  & 49,152\\ \hline\hline
Total & 3,300 &31,118  & 37,878,158 & 163,176,246 & 201,088,822\\ \hline
\end{tabular}
\caption{The distribution of the number of limit cycles for a $4\times 4$ periodic lattice according with its type. The first column indicates the period -$T$, the second (resp. third, fourth and fifth) one, the total number of period-$T$ limit cycles of type S-I (resp. S-II, S-III and AS). Finally, the column $n(T)$ indicates the total number of period-$T$ limit cycles.}\label{TableCycles}
\end{table}

The largest number of limit cycles for a given period is obtained for $T=12$ where $n(12) = 65,642,560$. The limit cycles of period-12  also correspond with the largest number of configurations, $\nu(12)\approx 787 \times 10^6$, which is about an $18\%$ of the total phase space. On the contrary, the smallest set is $P_1$ with just 1,156 configurations ($\approx  10^{-5}\, \%$). However, Table \ref{TableCycles} shows that the smallest number  of limit cycles  are those of period 27 with 256 limit cycles.

More important, from the total $2^{32}$ configurations, a fraction of $13.33\, \%$ are  symmetric limit cycles ($ 10^{-4}\, \%$ of type S-I, $ 10^{-2}\,\%$ of type S-II and  $13.31\,\%$ of type S-III) and $86.67\, \%$ are  asymmetric (AS) limit cycles. 

 The values showed in the above tables are also summarized in Figure \ref{PlotNumberOfCycles} where it can be observed that, apparently, the quantities $\nu_{\rm SI}(T)$, $\nu_{\rm SII}(T)$, $\nu_{\rm SIII}(T)$ and $\nu_{\rm AS}(T)$ (see Figure \ref{PlotNumberOfCycles}-a) are upper bounded by $K \,T^{-1/2}$, with $K$ a constant. On the contrary the quantities $n_{\rm SI}(T)$, $n_{\rm SII}(T)$, $n_{\rm SIII}(T)$ and $n_{\rm AS}(T)$ (see Figure \ref{PlotNumberOfCycles}-b) are upper bounded by  $K\, T^{-3/2}$ with $K$ a constant. We do not know the reasons for the bound.

 Moreover, because of Theorem \ref{teo_rica}
%Because, the limit cycles S-I include one configuration in $\Omega^2_{xx}$ and S-II include two configurations in $\Omega^2_{xx}$, 
we are able to upper bound the total number of S-I and S-II limit cycles as follows:
$$\sum_{T\geq 1} \left(n_{\rm SI}(T) +n_{\rm SII}(T) \right) < | \Omega^2_{xx}|=2^N \ll | \Omega^2_{xy}|=2^N (2^N-1).$$  
This can be noticed in both Figures \ref{PlotNumberOfCycles}.

%Fig. \ref{PlotNumberOfCycles}-b, plots the number of limit cycles corresponding to the configurations shown in Fig. \ref{PlotNumberOfCycles}-a, using simply  $n(T)\equiv \frac{\nu(T)}{T}$. As in a) this quantity is bounded by  $K\, T^{-3/2}$.

 \begin{figure}[h]
\begin{center}
a)  \includegraphics[width = 5cm]{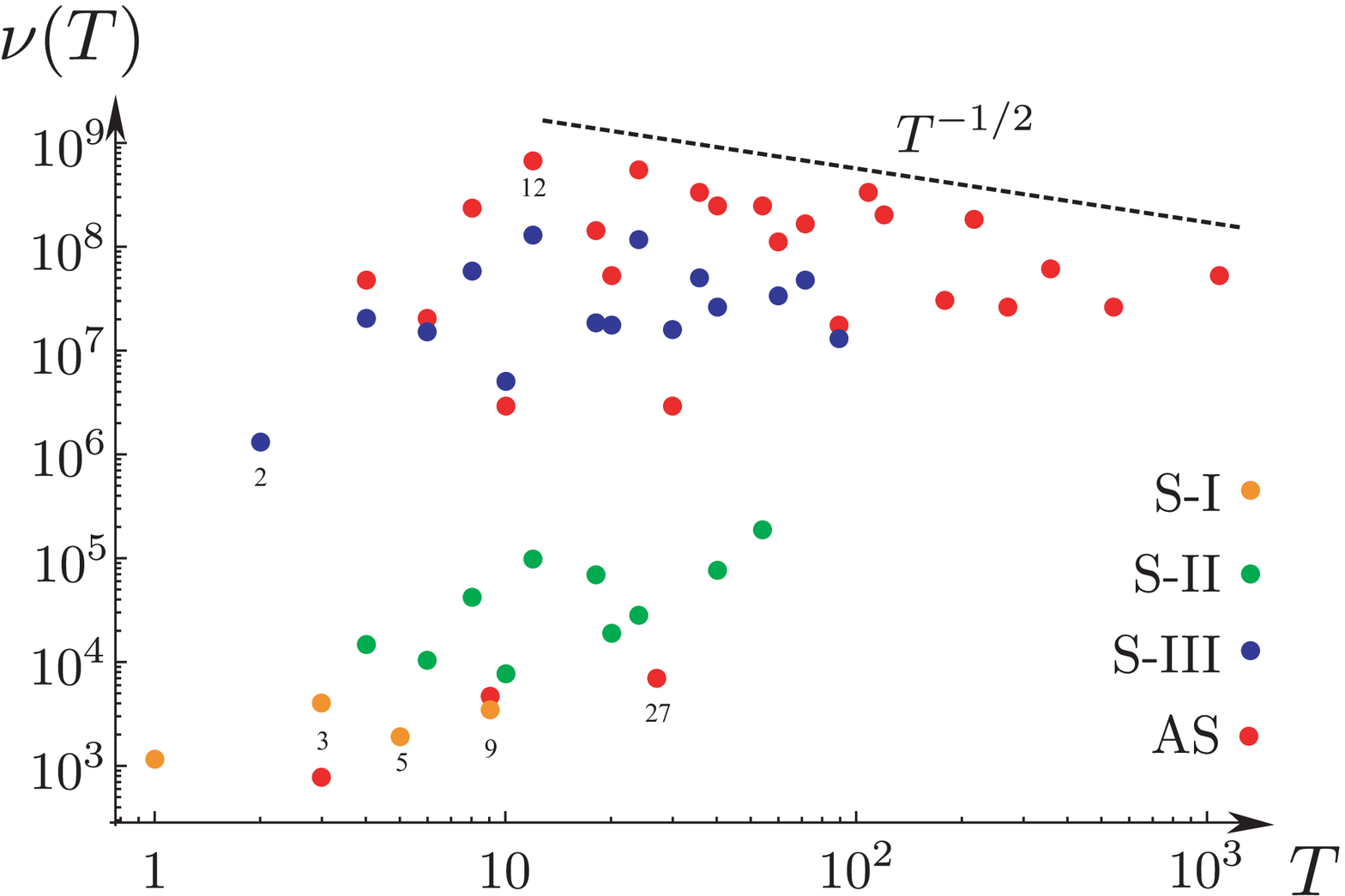}  \quad
b) \includegraphics[width = 5 cm]{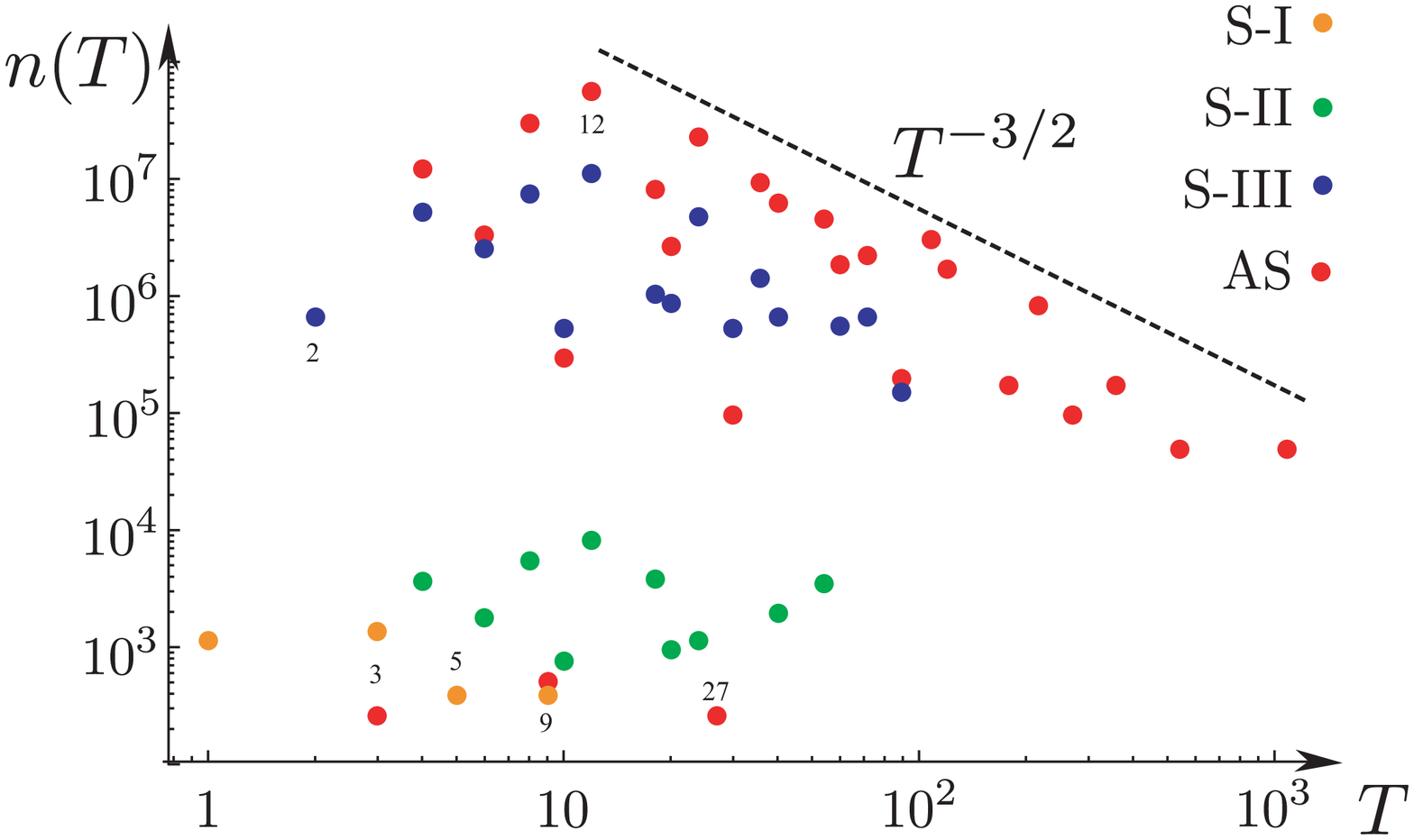} 
%c)  \includegraphics[width = 7cm]{DensityOfStatesPerPeriod.eps}
\end{center}
\caption{ \label{PlotNumberOfCycles}  a) Number of configurations, $\nu_{q}(T)$, and (b) number of limit cycles, $n_{q}(T)$, per period, for $q\in\{ {\rm SI},{\rm SII},{\rm SIII},{\rm AS}\}$.
}
\end{figure}

Finally, the following Figure \ref{PlotDensityOfCycles} plots the normalized density of limit cycles for each topology. To do that, we normalize $n_{\rm SI}(T)$  by $\Omega^2_{xx}=2^N$, that is 
$$\varrho_{\rm SI}(T)= \frac{n_{\rm SI}(T)}{2^N} .$$ 
In the same way we normalize
$$\varrho_{\rm SII}(T)= \frac{2 \, n_{\rm SII}(T)}{2^N}, \quad  \varrho_{\rm SIII}(T)= \frac{ n_{\rm SII}(T)}{2^N(2^N-1)},  \quad  \varrho_{\rm AS}(T)= \frac{ n_{\rm AS}(T)}{2^N(2^N-1)}.$$ 

 \begin{figure}[h]
\begin{center}
  \includegraphics[width = 7cm]{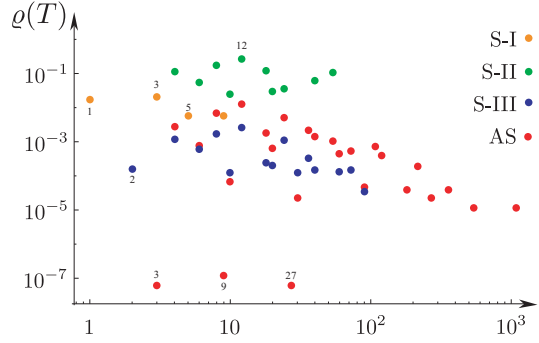}
\end{center}
\caption{ \label{PlotDensityOfCycles}  Normalized density of limit cycles per period for different topology. 
}
\end{figure}

The interest of this plot is that it shows that relative to its set, namely $\Omega_{xx}^2$, the S-I and S-II are of the same order of magnitude, than S-III and AS relative to  $\Omega_{xy}^2$.

\subsection{S-I Example: Limit cycle of odd period} \label{AppSI}
An example of an odd period limit cycle (different of a fixed point) is the following period-3 limit cycle of the form  $(x,x) \to (y,x)   \to (x,y) \to (x,x)$ (type S-I), like those of Figures \ref{Period3}a) or \ref{fig_cycletype}a), where $x$ is in red and $y$ is in blue:

\begin{equation} \label{cycleSI}
\begin{array}{ccc}
(x,x)=\left( \textcolor{red}{ \left[
\begin{array}{cccc}
 -1 & -1 & -1 & 1 \\
 -1 & -1 & 1 & 1 \\
 -1 & -1 & 1 & -1 \\
 -1 & -1 & 1 & 1 \\
\end{array}
\right]} \right. & , & \left. \textcolor{red}{ \left[
\begin{array}{cccc}
 -1 & -1 & -1 & 1 \\
 -1 & -1 & 1 & 1 \\
 -1 & -1 & 1 & -1 \\
 -1 & -1 & 1 & 1 \\
\end{array}
\right] }\right)\in C \\
(y,x)=\left(\textcolor{blue}{  \left[
\begin{array}{cccc}
 -1 & -1 & -1 & -1 \\
 -1 & -1 & -1 & -1 \\
 -1 & -1 & -1 & -1 \\
 -1 & -1 & -1 & -1 \\
\end{array}
\right] } \right. & , & \left. \textcolor{red}{ \left[
\begin{array}{cccc}
 -1 & -1 & -1 & 1 \\
 -1 & -1 & 1 & 1 \\
 -1 & -1 & 1 & -1 \\
 -1 & -1 & 1 & 1 \\
\end{array}
\right]} \right)\in B \\
(x,y)=\left( \textcolor{red}{   \left[
\begin{array}{cccc}
 -1 & -1 & -1 & 1 \\
 -1 & -1 & 1 & 1 \\
 -1 & -1 & 1 & -1 \\
 -1 & -1 & 1 & 1 \\
\end{array}
\right] } \right. & , & \left. \textcolor{blue}{ \left[
\begin{array}{cccc}
 -1 & -1 & -1 & -1 \\
 -1 & -1 & -1 & -1 \\
 -1 & -1 & -1 & -1 \\
 -1 & -1 & -1 & -1 \\
\end{array}
\right]} \right)\in D
\end{array}
\end{equation}
Observe that, by Remark \ref{rem_cyclededuction}, there not exists limit cycles of type S-I with an even period.

\subsection{S-II Example: Limit cycle of even period}\label{AppSII}
The following example is a period-4 limit cycle (type S-II) of the form $(x,x) \to (y,x) \to (y,y)  \to (x,y)  \to (x,x) $, like those of Figures \ref{fig_cycletype}b) or \ref{Period4}a), where $x$ is in red and $y$ is in blue:

\begin{equation} \label{cycleSII}
\begin{array}{ccc}
(x,x)=\left(\textcolor{red}{ \left[
\begin{array}{cccc}
 1 & 1 & 1 & 1 \\
 1 & 1 & 1 & 1 \\
 1 & 1 & 1 & 1 \\
 1 & -1 & -1 & -1 \\
\end{array}
\right]} \right. & , & \left. \textcolor{red}{ \left[
\begin{array}{cccc}
 1 & 1 & 1 & 1 \\
 1 & 1 & 1 & 1 \\
 1 & 1 & 1 & 1 \\
 1 & -1 & -1 & -1 \\
\end{array}
\right]}\right)\in C \\
(y,x)=\left( \textcolor{blue}{ \left[
\begin{array}{cccc}
 1 & 1 & 1 & 1 \\
 1 & 1 & 1 & 1 \\
 1 & 1 & 1 & 1 \\
 -1 & -1 & 1 & -1 \\
\end{array}
\right] } \right. & , & \left. \textcolor{red}{  \left[
\begin{array}{cccc}
 1 & 1 & 1 & 1 \\
 1 & 1 & 1 & 1 \\
 1 & 1 & 1 & 1 \\
 1 & -1 & -1 & -1 \\
\end{array}
\right] } \right)\in D  \\
(y,y)=\left( \textcolor{blue}{ \left[
\begin{array}{cccc}
  1 & 1 & 1 & 1 \\
 1 & 1 & 1 & 1 \\
 1 & 1 & 1 & 1 \\
 -1 & -1 & 1 & -1 \\
\end{array}
\right]} \right. & , & \left. \textcolor{blue}{ \left[
\begin{array}{cccc}
  1 & 1 & 1 & 1 \\
 1 & 1 & 1 & 1 \\
 1 & 1 & 1 & 1 \\
 -1 & -1 & 1 & -1 \\
\end{array}
\right]} \right)\in C  \\
(x,y)=\left( \textcolor{red}{ \left[
\begin{array}{cccc}
  1 & 1 & 1 & 1 \\
 1 & 1 & 1 & 1 \\
 1 & 1 & 1 & 1 \\
 1 & -1 & -1 & -1 \\
\end{array}
\right]} \right. & , & \left. \textcolor{blue}{ \left[
\begin{array}{cccc}
 1 & 1 & 1 & 1 \\
 1 & 1 & 1 & 1 \\
 1 & 1 & 1 & 1 \\
 -1 & -1 & 1 & -1 \\
\end{array}
\right]} \right)\in D  \\
\end{array}
\end{equation} 

Observe that, by Remark \ref{rem_cyclededuction}, there not exists limit cycles of type S-II with an odd period.

\subsection{S-III Example: Limit cycle of even period} \label{AppSIII}
An example of an even period limit cycle (different of a period-2 limit cycle) is the following period-4 limit cycle of the form $(x,y) \to (z,x)\to (x,z)  \to (y,x)  \to (x,y) $ (type S-III), like those of the Figures \ref{fig_cycletype}c) or \ref{Period4}b), where $x$ is in red, $y$ is in blue and $z$ is in green:
%A general a limit cycle of period 4 of the form    
%(see Figure\ref{Period4}-b),

$$\begin{array}{ccc}
(x,y)=\left(
\textcolor{red}{ \left[\begin{array}{cccc} 1 & 1 & 1 & 1\\ 1 & -1 & -1 & -1\\ 1 & 1 & -1 & 1\\ 1 & 1 & 1 & -1 \end{array}\right]} \right. & , & \left.\textcolor{blue}{\left[\begin{array}{cccc} 1 & 1 & 1 & 1\\ 1 & 1 & 1 & 1\\ 1 & -1 & 1 & -1\\ -1 & 1 & -1 & 1 \end{array}\right]}\right)\in D\\
(z,x)=\left(\textcolor{green}{\left[\begin{array}{cccc} 1 & 1 & 1 & -1\\ -1 & 1 & 1 & 1\\ 1 & 1 & 1 & -1\\ -1 & 1 & 1 & 1 \end{array}\right]}\right. & , & \left.\textcolor{red}{ \left[\begin{array}{cccc} 1 & 1 & 1 & 1\\ 1 & -1 & -1 & -1\\ 1 & 1 & -1 & 1\\ 1 & 1 & 1 & -1 \end{array}\right]}\right)\in B\\
(x,z)=\left(\textcolor{red}{\left[\begin{array}{cccc} 1 & 1 & 1 & 1\\ 1 & -1 & -1 & -1\\ 1 & 1 & -1 & 1\\ 1 & 1 & 1 & -1 \end{array}\right]}\right. & , & \left.\textcolor{green}{\left[\begin{array}{cccc} 1 & 1 & 1 & -1\\ -1 & 1 & 1 & 1\\ 1 & 1 & 1 & -1\\ -1 & 1 & 1 & 1 \end{array}\right]}\right)\in D\\
(y,x)=\left(\textcolor{blue}{\left[\begin{array}{cccc} 1 & 1 & 1 & 1\\ 1 & 1 & 1 & 1\\ 1 & -1 & 1 & -1\\ -1 & 1 & -1 & 1 \end{array}\right]}\right. & , & \left.\textcolor{red}{\left[\begin{array}{cccc} 1 & 1 & 1 & 1\\ 1 & -1 & -1 & -1\\ 1 & 1 & -1 & 1\\ 1 & 1 & 1 & -1 \end{array}\right]}\right)\in B
\end{array}
$$
Observe that, by Remark \ref{rem_cyclededuction}, there not exists limit cycles of type S-III with an odd period.

\subsection{AS Example 1: Limit cycle of odd period}
The following is a period-3 limit cycle of the form $(x,y) \to (z,x)  \to (y,z)  \to (x,y) $ (type AS), like those of the Figures \ref{Period3}b) or \ref{fig_cycletype}d), where $x$ is in red, $y$ is in blue and $z$ is in green:

$$
\begin{array}{ccc}
(x,y)=\left(\textcolor{red}{ \left[
\begin{array}{cccc}
 -1 & 1 & -1 & -1 \\
 -1 & -1 & 1 & -1 \\
 -1 & -1 & 1 & -1 \\
 -1 & 1 & -1 & -1 \\
\end{array}
\right]} \right. & , & \left. \textcolor{blue}{ \left[
\begin{array}{cccc}
 -1 & -1 & -1 & -1 \\
 -1 & 1 & 1 & -1 \\
 -1 & -1 & -1 & -1 \\
 -1 & 1 & 1 & -1 \\
\end{array}
\right]}\right)\in D  \\
(z,x)=\left( \textcolor{green}{ \left[
\begin{array}{cccc}
 -1 & -1 & 1 & -1 \\
 -1 & -1 & 1 & -1 \\
 -1 & 1 & -1 & -1 \\
 -1 & 1 & -1 & -1 \\
\end{array}
\right] } \right. & , & \left. \textcolor{red}{  \left[
\begin{array}{cccc}
 -1 & 1 & -1 & -1 \\
 -1 & -1 & 1 & -1 \\
 -1 & -1 & 1 & -1 \\
 -1 & 1 & -1 & -1 \\
\end{array}
\right] } \right)\in D  \\
(y,z)=\left( \textcolor{blue}{ \left[
\begin{array}{cccc}
 -1 & -1 & -1 & -1 \\
 -1 & 1 & 1 & -1 \\
 -1 & -1 & -1 & -1 \\
 -1 & 1 & 1 & -1 \\
\end{array}
\right]} \right. & , & \left. \textcolor{green}{ \left[
\begin{array}{cccc}
 -1 & -1 & 1 & -1 \\
 -1 & -1 & 1 & -1 \\
 -1 & 1 & -1 & -1 \\
 -1 & 1 & -1 & -1 \\
\end{array}
\right]} \right)\in D  \\
\end{array}
$$

\subsection{AS Example 2: Limit cycle of even period}
The following ia a period-4 limit cycle of the form $(x,y) \to (z,x)\to (u,z)  \to (y,u)  \to (x,y) $ (type AS), like those of the Figures \ref{fig_cycletype}d) or \ref{Period4}c), where $x$ is in red, $y$ is in blue, $z$ is in green and $u$ is in black:

$$
\begin{array}{ccc}
(x,y)=\left(\textcolor{red}{ \left[
\begin{array}{cccc}
 1 & 1 &1 & 1 \\
 1 & 1 & 1 & -1 \\
 -1 & 1 & -1 & 1 \\
 -1 & 1 & 1 & -1 \\
\end{array}
\right]} \right. & , & \left. \textcolor{blue}{ \left[
\begin{array}{cccc}
 1 & 1 & 1 & 1 \\
 1 & 1 & 1 & 1 \\
 1 & 1 & 1 & 1 \\
 1 & -1 & 1 & -1 \\
\end{array}
\right]}\right)\in D  \\
(z,x)=\left( \textcolor{green}{ \left[
\begin{array}{cccc}
 1 & 1 & 1 & -1 \\
 -1 & 1 & -1 & 1 \\
 1 & -1 & 1 & 1 \\
 -1 & -1 & -1 & -1 \\
\end{array}
\right] } \right. & , & \left. \textcolor{red}{  \left[
\begin{array}{cccc}
 1 & 1 &1 & 1 \\
 1 & 1 & 1 & -1 \\
 -1 & 1 & -1 & 1 \\
 -1 & 1 & 1 & -1 \\
\end{array}
\right] } \right)\in D  \\
(u,z)=\left( { \left[
\begin{array}{cccc}
 1 & 1 &1 & 1 \\
 1 & 1 & 1 & -1 \\
 -1 & 1 & -1 & 1 \\
 1 & 1 & -1 & -1 \\
\end{array}
\right]} \right. & , & \left. \textcolor{green}{ \left[
\begin{array}{cccc}
 1 & 1 & 1 & -1 \\
 -1 & 1 & -1 & 1 \\
 1 & -1 & 1 & 1 \\
 -1 & -1 & -1 & -1 \\
 \end{array}
\right]} \right)\in D  \\
(y,u)=\left( \textcolor{blue}{ \left[
\begin{array}{cccc}
 1 & 1 &1 & 1 \\
 1 & 1 & 1 &1 \\
 1 & 1 & 1 & 1 \\
 1 & -1 & 1 & -1 \\
\end{array}
\right]} \right. & , & \left. { \left[
\begin{array}{cccc}
  1 & 1 &1 & 1 \\
 1 & 1 & 1 & -1 \\
 -1 & 1 & -1 & 1 \\
 1 & 1 & -1 & -1 \\
 \end{array}
\right]} \right)\in D  \\
\end{array}
$$

\section*{Acknowledgment}
Work partially supported by FONDECYT Iniciaci\'on 11150827 (M.M-M.). S.R. thanks the Gaspard Monge Visiting Professor Program of \'Ecole Polytechnique (France). F.U. thanks FONDECYT (Chile) for financial support through Postdoctoral N$^\circ$ $3180227$. Finally, the authors thank Fondequip AIC-34. Powered@NLHPC: This research was partially supported by the supercomputing infrastructure of the NLHPC (ECM-02).

\end{document}